\documentclass[11pt]{amsart}
\usepackage[T1]{fontenc}
\usepackage[english]{babel}
\usepackage{lmodern,microtype}
\usepackage{amsmath,amsfonts,amsthm,mathtools}
\usepackage{graphicx}
\usepackage{xcolor,fullpage}
\usepackage{tikz}
 \usepackage{faktor}
\usetikzlibrary{calc,arrows.meta}
\usepackage{placeins}
\usepackage{csquotes}
\usepackage[
    style=numeric,
    sorting=nyt,
    backend=biber,
    maxbibnames=99
]{biblatex} 
\usepackage{hyperref}
\hypersetup{
    colorlinks,
    linkcolor={red!50!black},
    citecolor={blue!50!black},
    urlcolor={blue!80!black}
}
\usepackage[capitalise,nameinlink]{cleveref}
\usepackage{subcaption}

\usepackage[left=1.3in,right=1.3in,top=1.1in,bottom=1.1in]{geometry}

\newtheorem{theorem}{Theorem}
\newtheorem{lemma}[theorem]{Lemma}
\newtheorem{claim}[theorem]{Claim}

\newtheorem*{lemma*}{Lemma}

\theoremstyle{definition}

\theoremstyle{remark}
\newtheorem*{remark}{Remark}
\theoremstyle{plain}
\usepackage{amssymb}

\usepackage{todonotes}
\definecolor{cyan}{RGB}{0,190,210}
\definecolor{magenta}{RGB}{230,0,250}
\definecolor{darkkred}{RGB}{230,0,0}
\definecolor{darkgreen}{RGB}{0,160,0}

\newcommand{\ind}[1]{\mathbf{1}_{#1}} 
\newcommand{\pyr}[2]{\operatorname{pyr}_{#1}(#2)} 

\newcommand{\cL}{\mathcal{L}}

\newcommand{\maxfn}[1]{\MAX_{#1}}

\newcommand{\upto}[1]{[#1]}

\DeclareMathOperator{\conv}{conv}
\DeclareMathOperator{\relu}{ReLU}
\DeclareMathOperator{\ICNN}{\mathsf{ICNN}}
\DeclareMathOperator{\CPWL}{\mathsf{CPWL}}

\DeclareMathOperator{\MAX}{MAX}

\DeclareMathOperator{\aff}{aff}

\DeclarePairedDelimiterX{\inner}[2]{\langle}{\rangle}{#1,#2}
\DeclarePairedDelimiter{\set}{\{}{\}}
\DeclarePairedDelimiter{\abs}{\lvert}{\rvert}
\DeclarePairedDelimiterX{\rset}[2]{\{}{\}}{#1\mathclose{}:\mathopen{}#2}
\DeclarePairedDelimiterX{\segment}[2]{[}{]}{#1,#2}

\newcommand{\R}{\mathbb{R}}
\newcommand{\Q}{\mathbb{Q}}
\newcommand{\Z}{\mathbb{Z}}

\colorlet{plus}{blue!70!black}
\colorlet{minus}{red!70!black}
\colorlet{accent}{orange!85!black}
\colorlet{coneA}{teal!70!black}
\colorlet{coneB}{violet!70!black}

\tikzset{
  pyramid base view/.style={
    x={(.9cm,0cm)},y={(.18cm,.85cm)},z={(0cm,.9cm)},
    line join=round},
  pyramid view/.style={
    pyramid base view,y={(.36cm,.4cm)}}}


\title{Two $\boldsymbol{\relu}$ Layers is All You Need} 

\author{Florestan Brunck}
\address{Department of Computer Science, University of Copenhagen, Denmark}
\email{flbr@di.ku.dk}

\author{Christoph Hertrich}
\address{University of Technology Nuremberg,  Germany}
\email{christoph.hertrich@utn.de}

\author{Kilian Rueß}
\address{University of Technology Nuremberg, Germany}
\email{kilian.ruess@utn.de}

\author{\\Jack Stade}
\address{Department of Computer Science, University of Copenhagen, Denmark}
\email{jast@di.ku.dk}

\author{Amir Yehudayoff}
\address{Department of Computer Science, University of Copenhagen, Denmark
and Department of Mathematics, Technion--IIT}
\email{amye@di.ku.dk}

\begin{document}
\begin{abstract}
We prove that every $\relu$ network can be exactly simulated by a $\relu$ network with two hidden layers. We present two proofs: one that is based on fillings in homological algebra and a direct combinatorial proof. 
\end{abstract}
\maketitle

\section{Introduction}
The ubiquity and success of neural networks as a   model of computation motivates studying the capabilities of their architectures.
This study raises familiar complexity-theoretic questions, like what is the class of functions they can compute, at what depth, with how many gates, etc.? 

This work investigates the role of depth in neural networks.
On the one hand, deep learning is one of the major successes of modern computer science.
On the other hand, classical approximation theorems dating back to the late 1980s show that, for many activation functions including $\relu$, a single hidden layer suffices to approximate any continuous function arbitrarily well on a compact domain~\cite{cybenko1989approximation,hornik1989multilayer,leshnoMultilayerFeedforwardNetworks1993}.
Although various explanations for the power of compositions have been proposed~\cite{bengio2009learning,mhaskar2017when}, a quantitative understanding of the role of depth remains incomplete~\cite{huchette2026when,hertrichLowerBoundsDepth2023}.

The role of depth has been studied from several perspectives.
One line of work investigates depth-width tradeoffs in expressivity: can additional layers reduce the resources needed to represent or approximate a function~\cite{aroraUnderstandingDeepNeural2018,eldan2016power,telgarsky2016benefits}?
Another asks whether depth can provide advantages for training~\cite{safran2022optimization}.
Perhaps the most fundamental question, however, is whether deeper networks can represent functions that shallower networks cannot.
Until now, this question has remained open.

We focus on neural networks that use the $\relu(x) = \max\{0,x\}$ activation function, which are among the most widely used ones in practice~\cite{glorot2011deep,huchette2026when}. 
A $\relu$ network with $k$ hidden layers (and depth $k+1$) is traditionally defined as the alternating composition
\[
T^{(k)}\circ \relu \circ T^{(k-1)} \circ \cdots \circ  \relu \circ T^{(1)} \circ \relu \circ{} T^{(0)}
\]
where $T^{(i)}$ is an affine map between two Euclidean spaces and $\relu$ stands for entry-wise application.

$\relu$ networks compute continuous and piecewise linear (CPWL) functions.
These are continuous functions that are affine on each cell of a finite polyhedral partition of the input Euclidean space.
Conversely, every CPWL function can be computed by a $\relu$ network (see e.g.~\cite{arora2018understanding}).
Their ability to capture nonlinear behavior while retaining a simple local affine structure makes CPWL functions useful beyond neural networks.
For example, they are central to finite element methods~\cite{he2020relu}, and their structure enables exact verification of stability conditions in robotic control~\cite{dai2021lyapunov}.


For a ring $R \subseteq \R$, denote by $\CPWL_n(R)$ the space of $\CPWL$ functions whose affine pieces have coefficients in $R$. Denote by $\relu_{n,k}(R)$ the subset of $\CPWL_n(R)$ representable by $\relu$ networks with $k$ hidden layers, and weights and biases in $R$. 
The obvious inclusions are
\begin{align}
\relu_{n,0}(R)
\subseteq \relu_{n,1}(R)
\subseteq \relu_{n,2}(R)
\subseteq \ldots \subseteq \CPWL_n(R). \label{eqn:chain}    
\end{align}
Our main results work when $R$ is any field containing $\Q$. We often write $\CPWL_n$ and $\relu_{n,k}$ for $\CPWL_n(\Q)$ and $\relu_{n,k}(\Q)$ respectively. 

The main goal of this work is understanding the inclusions in the chain~\eqref{eqn:chain}.
Wang and Sun~\cite{wang2005generalization} identified a key structural result; see also \cite{hertrichLowerBoundsDepth2023,bakaev2026better} for alternative, more geometric proofs. They showed that 
\[\MAX_n(x_1,\ldots,x_n) \coloneqq \max \set{x_1,\ldots,x_n}\]
is complete for $\CPWL_{n-1}(\R)$ with respect to depth.
That is, every function $f \in \CPWL_n(\R)$ can be expressed as a finite sum
\begin{align}
\label{eqn:maxn}
f(x) = \sum_i \sigma_i \MAX_{n+1}(A_i(x))
\end{align}
where $A_i:\R^n \to \R^{n+1}$ is an affine map and $\sigma_i \in \{\pm 1\}$.
The geometric proof by~\cite{bakaev2026better} for the universality of $\MAX_{n+1}$ relies on the fact that $\MAX_{n+1}$ is the support function of a simplex, and every polytope can be triangulated into simplices.

The first inclusion in~\eqref{eqn:chain} is trivially strict because 
$\relu_{n,0}(\R)$ is precisely the set of affine functions.
Mukherjee and Basu~\cite{mukherjee2017lower} 
showed that the second inclusion is strict for $n \geq 3$ because $\MAX_3 \in \relu_{3,2}(\R)$ but $\MAX_3 \not \in \relu_{3,1}(\R)$.
In the other direction, 
Arora, Basu, Mianjy, and Mukherjee~\cite{arora2018understanding} observed that \eqref{eqn:maxn} implies that the length of the chain is at most logarithmic: 
\[\CPWL_n = \relu_{n,k} \qquad \text{for $k \geq \lceil \log_2 (n+1) \rceil$.}\]
The reason is that a divide-and-conquer network
of depth $\lceil \log_2(n+1)\rceil$ computes $\MAX_{n+1}$.
Subsequently, Hertrich, Basu, Di Summa, and Skutella~\cite{hertrichLowerBoundsDepth2023} conjectured that the length of the chain is exactly $\lceil \log_2 (n+1) \rceil$. In other words, they conjectured that $\MAX_{n}$ can not be computed by less than $\lceil \log_2 (n) \rceil$ hidden layers. 

A series of works validated this conjecture in a variety of special cases. 
Haase, Hertrich, and Loho~\cite{haase2023lower} proved the conjecture over the integers:
\[k < \lceil \log_2(n) \rceil
\qquad \Rightarrow \qquad \MAX_n \not \in  \relu_{n,k}(\Z).\]
Averkov, Hojny, and Merkert~\cite{averkov2025expressiveness} extended the logarithmic lower bounds to rings of the form $\Z[1/N]$ for an integer $N>0$. They showed that if a prime $p$ does not divide $N$ then
\[k < \lceil \log_p(n) \rceil
\qquad \Rightarrow \qquad \MAX_n \not \in \relu_{n,k}(\Z[1/N]).\]
There are more lower bounds for special classes of networks. 
Valerdi~\cite{valerdi2024minimal} proved the conjecture for input convex neural networks ($\ICNN$), which was defined by Amos, Xu and Kolter~\cite{amos2017input} and studied in subsequent works (e.g.~\cite{bunning2021input,chen2018optimal,chen2020data,bunne2023learning,hertrich2024neural}). 
Grillo, Hertrich, and Loho~\cite{grillo2025depth} proved an $\Omega( \log \log n)$ lower bound for $\relu$ networks that ``respect the singularities of $\MAX_n$''.

The conjecture was, however, recently disproved by Bakaev, Brunck, Hertrich, Stade, and Yehudayoff~\cite{bakaev2026better}. They proved that
\[\MAX_5 \in \relu_{5,2}\]
and concluded the asymptotic improvement 
\[\MAX_n \in \relu_{n,k}
\qquad \text{for $k = \lceil\log_3(n-2)\rceil+1$}.\]
Shortly after, computer-assisted constructions~\cite{ruess2026shallower} showed that
\[\MAX_{12} \in \relu_{12,2}.\]
Again, this finite case can be converted into an improvement of the asymptotic bound from~\cite{bakaevBetterNeuralNetwork2026}, yielding
\[\MAX_n \in \relu_{n,k}
\qquad \text{for $k = \lceil\log_6(n/2)\rceil+1$}.\]

Our main result is that 
the entire depth hierarchy collapses to the second level, which is optimal by~\cite{mukherjee2017lower}. See Figure~\ref{fig:hierarchy} for an illustration.

\begin{theorem}\label{thm:main}
  For all $n\geq 1$ and any field $R\subseteq \mathbb{R}$ containing $\Q$,
  \[\CPWL_n(R) =\relu_{n,2}(R).\]
\end{theorem}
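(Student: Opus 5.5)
By \eqref{eqn:maxn} (Wang--Sun), it suffices to show that each $\MAX_{n+1}$ composed with an affine map lies in $\relu_{n,2}(R)$. Since $\relu_{n,2}(R)$ is closed under precomposition with affine maps and under linear combinations, this reduces to a single claim: $\MAX_m \in \relu_{m,2}(R)$ for every $m$.

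\textbf{Characterising two hidden layers.} A function in $\relu_{n,1}$ is a linear combination of terms $\max(0,\ell(x))$, i.e.\ of support functions of segments $[0,a]$. After the second layer we take $\relu$ of such a combination, which gives
\[
\max\Bigl(0, \sum_j c_j \max(0,\ell_j)\Bigr).
\]
Write $P_+ - P_-$ for the Minkowski-difference (virtual polytope) of the zonotopes coming from the positive and negative $c_j$. Then such a term equals
\[
h_{\conv(P_+ \cup (P_- + \{0\}))} - h_{P_-},
\]
using $\max(a-b,0) = \max(a,b) - b$. Hence $\relu_{n,2}$ is spanned by differences $h_{\conv(Z_1\cup Z_2)} - h_{Z_2}$ with $Z_1, Z_2$ zonotopes (rational generators in the $R$-case), and since zonotopes are also available, it is spanned by support functions $h_{\conv(Z_1\cup Z_2)}$ modulo zonotopes. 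The goal is therefore to write the simplex $\Delta_{m-1}$ (whose support function is $\MAX_m$) as a signed Minkowski/valuation-type combination of polytopes of the form $\conv(Z_1 \cup Z_2)$ and zonotopes.

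\textbf{Building the simplex.} Work in the polytope algebra, or equivalently with support functions modulo linear functions. I would proceed by induction on dimension, in the spirit of the geometric proof in \cite{bakaev2026better}. Take $\conv(Z_1\cup Z_2)$ with $Z_1=\{0\}$ and $Z_2$ a zonotope, i.e.\ the pyramid-like hull of a zonotope and a point; one-point-plus-zonotope hulls already give pyramids over zonotopes. Then use an inclusion–exclusion / filling argument: show that the "boundary" obstruction, namely the lower-dimensional faces, can be filled. Concretely, I would try to prove that the quotient of $\CPWL$ by the span of these functions has a filtration whose graded pieces are computed by a chain complex. Exactness of that complex, which is where the homological "filling" enters, would then show that the quotient vanishes. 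Working over a field containing $\Q$ is needed to divide by the integer multiplicities that appear (e.g.\ factors like $m!$ from the volumes of the simplices in zonotope dissections), consistent with the integer lower bounds of \cite{haase2023lower}.

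\textbf{Main obstacle.} The hard step is the explicit filling: expressing $h_{\Delta}$ modulo zonotopes via hulls $\conv(Z_1\cup Z_2)$, with the cancellation controlled in every dimension at once rather than recursively. Recursion would cost depth, as in the logarithmic constructions, so it must be avoided. My first attempt would be the following. Write the zonotope $\sum_i [0,e_i]$ (a cube) as its dissection into $m!$ simplices, and note that $\conv(\{0\}\cup \text{facet zonotope})$ recovers pieces of that dissection. Averaging over symmetries should then isolate $\Delta$ up to terms that are themselves zonotopes or pyramids over zonotopes. These remaining terms are handled by an induction whose depth does not increase, because a pyramid over a zonotope is already of the allowed form.
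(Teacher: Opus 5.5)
Your reduction to $\MAX_m\in\relu_{m,2}$ is fine. So is your description of depth-two functions as spanned by $h_{\conv(Z_1\cup Z_2)}$ modulo zonotopes. Working modulo lower-dimensional polytopes with induction on dimension is also the right frame. That induction is additive: lower-dimensional terms are summed, not composed, so it costs no depth. Your worry that recursion must be avoided is therefore misplaced; the paper's homological proof goes from $\Delta^n$ to $\Delta^{n+1}$.

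The gap is that the proposal stops where the theorem actually begins. You never exhibit a signed combination of pyramids over parallelepipeds equal to $h_\Delta$ modulo lower-dimensional terms, and your concrete attempt cannot produce one. Modulo lower-dimensional pieces every parallelepiped (a sum of segments) is zero. Dissecting $\sum_i[0,e_i]$ into its $m!$ orthoschemes therefore only yields the relation that their classes sum to zero. The pyramids from $0$ over the cube's facets are each a union of $(m-1)!$ of these orthoschemes, and their sum is again the cube, i.e.\ zero. Averaging over coordinate permutations gives no mechanism for isolating $\Delta$. Nor can you obtain the identity by coning a lower-dimensional one, because equivalence modulo lower-dimensional terms is not preserved under taking pyramids.

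Your appeal to exactness of an unspecified chain complex is also unsupported, and for the natural candidate it is false. The complex of parallelepipeds with edges along the simplex's edge directions is a Koszul complex over $\Q[\Gamma]$. Its homology is $\bigwedge^k\Q^{r-n}$, coming from linear relations among the edge directions; already a triangle's boundary is a $1$-cycle that bounds nothing. The paper's homological proof needs three ingredients you do not have:
\begin{itemize}
\item Top injectivity (\Cref{lem:topinjectivity}, via Jessen--Thorup): on polytopes modulo lower-dimensional ones, $d\colon P^n_n\to P^n_{n-1}$ is injective. Hence a paramid chain with the boundary of $\Delta^n$ \emph{is} $\Delta^n$.
\item \Cref{lem:aug}: a parallelepiped cycle is a boundary if and only if its augmentation $\varepsilon$ vanishes.
\item The doubling map $S$ (\Cref{lem:cyclecorrection}): $(2^kx-S(x))/(2^k-2)$ kills the augmentation while leaving $\rho$ unchanged. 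This is combined with a symmetrization that builds the boundary cycle of $\Delta^{n+1}$ from the chain for $\Delta^n$.
\end{itemize}
The paper's alternative route is the direct proof. It rests on an exact almost-everywhere identity of indicator functions, \Cref{lem:thermostatN}. There, $\sum_{\pi,M}(-1)^{|M|}\ind{B_{v,\pi,M}}$ depends only on the number of positive coordinates, and by binomial inversion it equals a signed sum of the dilated facets $\ind{Q_{v,R}}$. Because this identity is exact rather than merely modulo lower-dimensional terms, it survives coning with apex $e_v$. Subdividing $\Delta$ into pyramids $\pyr{q_C}{\Delta_{-v}}$ then assembles the cones $\pyr{e_v}{Q_{v,R}}$ into $(-1)^n n!\,h_\Delta$. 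Without one of these two mechanisms, your argument does not establish $\MAX_m\in\relu_{m,2}$.
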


\begin{figure}\centering
\includegraphics[page=2,width=14cm]{figures/intro-figures.pdf}
\caption{The depth hierarchy of $\relu$ networks collapses to its first 3 levels.}
\label{fig:hierarchy}
\end{figure}
To prove our main result, it suffices to prove the inclusion \[\MAX_n\in\relu_{n,2}(\Q)\] 

We provide two proofs of the inclusion.
The first is abstract and uses the language of homological algebra. 
The second is concrete and is based on exact combinatorial cancellations. 

We, in fact, show that there is a $\relu$ network with two hidden layers for $\MAX_n$ over the rationals where the precision required is 
$\text{poly}(n)$ many bits.
The lower bound from~\cite{averkov2025expressiveness} says that
such a network must divide by the product of all primes up to $\approx \sqrt{n}$, which requires precision of $\approx \sqrt{n}$ many bits.
The combinatorial proof requires division by $n!$,
which is consistent with this barrier.
As written, the homological algebra proof also uses division by $n!$, but this is not strictly necessary. 
What the homological proof actually uses is division by $2^k-1$ for $k=1,\dots,n$, 
which is consistent with the barrier by Fermat's little theorem. 


\subsection{Building the Simplex}

Because $\MAX_{n}$ is positively homogeneous, we can assume that all biases in the network expressing it are zero, see~\cite{hertrichLowerBoundsDepth2023}. 
Because $\MAX_2(x_1,x_2) = x_1 + \relu(x_2-x_1)$,
we can and will work with networks that use $\MAX_2$ gates instead of $\relu$ gates, also known as \emph{maxout} networks.

There is a duality between computing functions with $\MAX_2$ networks and building formal Minkowski differences of polytopes using the two operations: Minkowski sums and convex hulls of a union (see e.g.~\cite{hertrichLowerBoundsDepth2023,bakaevBetterNeuralNetwork2026} and references therein). This duality says that computing $\MAX_{n+1}$ is equivalent to building the $n$-dimensional simplex $\Delta_n$. (There are $n+1$ vertices and the dimension is $n$.)

Computations with one hidden layer exactly correspond to differences of \emph{zonotopes}, i.e.\ Minkowski sums of segments.
Every zonotope can be decomposed into parallelepipeds (\cite[Theorem~(54)]{shephard1974combinatorial}) so one hidden layer is actually captured by parallelepipeds $\sum_i [a_i,b_i]$.
A general two-layer computation corresponds to the convex hull of two zonotopes.

Both of the constructions we describe use a particular class of two-layer components we call \emph{paramids}.
A paramid is the convex hull of an apex
$\gamma \in \R^n$ and a parallelepiped
$\sum_i [a_i,b_i]$.
For an example see \Cref{fig:paramid}.
The support function of a paramid is thus of the form
\[\MAX_2\Big(\langle \gamma, x\rangle, \sum_i \MAX_2(\langle a_i, x\rangle, \langle b_i, x\rangle)\Big).\]
Paramids are the simplest two-layer polytopes and are thus a natural starting point to understand two-layer computations. 
Our results imply that $\relu_{n,2}$ is spanned by support functions of paramids. Actually, there is a more direct way to see this (see \Cref{sec:paramidscomplete}).
Informally, every convex hull of two zonotopes can be expressed algebraically in terms of paramids.


\begin{figure}
    \centering
    \includegraphics[page=5]{figures/Depth2Simplex-F2.pdf}
    \caption{A paramid given as the pyramid over a two-dimensional parallelepiped $B$ with apex $q$.}
    \label{fig:paramid}
\end{figure}

Another natural idea is to work modulo lower-dimensional pieces. A $\CPWL$ function $f\colon\R^n \to \R$ is said to be lower-dimensional if there is a non-zero direction $v$ such that $f(x+tv)=f(x)$ for all $x\in \R^n$ and $t\in \R$. 
So if, for each $n$, we can write
$\MAX_{n+1}$ as a 
\[\MAX_{n+1} = \sum_i c_i h_{P_i} + R\]
where each $P_i$ is a paramid and the remainder $R$ is a sum of lower-dimensional functions, then we are done by induction.
The reason is that every lower dimensional function
is reducible to $\MAX_{n}$, which we already know how to compute.

Subdivisions give us a powerful tool in these arguments. If we can subdivide $\Delta$ into pieces that can be computed in $k$ hidden layers, then $\Delta$ can be computed by $k$ hidden layers. So subdivisions are ``free'' in the sense that they do not cost us anything in terms of depth. This follows from the following well-known inclusion-exclusion-like formula also known as valuation (see e.g. Lemma 6.1.1 in~\cite{Schneider2013} or 5.20 in~\cite{MS1983}):

\begin{lemma}[Valuation property]\label{lem:additivity}
Suppose that a polytope $X$ can be written as a union of polytopes $X=A_1\cup \dots\cup A_k$. Then,
\[X+\sum_{\substack{\emptyset \ne S\subseteq\{1, \dots, k\}\\|S| \text{ even}}}\space\bigcap_{i\in S}A_i=\sum_{\substack{S\subseteq\{1, \dots, k\}\\|S| \text{ odd}}}\space\bigcap_{i\in S}A_i,\]
where an empty polytope is interpreted as zero.
\end{lemma}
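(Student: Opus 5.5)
The plan is to check the identity direction by direction, through support functions. Two finite formal sums of polytopes are equal in the Minkowski group exactly when their support functions agree, because $h_{P+Q}=h_P+h_Q$ and a polytope is determined by its support function. So I would fix a direction $x\in\R^n$ and prove the scalar identity
\[
h_X(x)+\sum_{\substack{\emptyset\neq S\\ |S|\text{ even}}} h_{A_S}(x)=\sum_{\substack{S\\ |S|\text{ odd}}} h_{A_S}(x),
\]
where $A_S=\bigcap_{i\in S}A_i$ and empty $A_S$ contribute $0$. The tool is the Euler characteristic $\chi$ on polyconvex sets, which is $1$ on nonempty compact convex sets, $0$ on $\emptyset$, and extends linearly to the span of indicator functions of polytopes (Hadwiger–Groemer).

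First I would record the combinatorial identity
\[
\ind{X}=\sum_{\emptyset\ne S}(-1)^{|S|+1}\ind{A_S}.
\]
This holds pointwise because $X=\bigcup_i A_i$: a point lying in exactly $m\ge1$ of the $A_i$ receives $\sum_{j\ge1}(-1)^{j+1}\binom{m}{j}=1$. Intersecting everything with the closed halfspace $H_t=\rset{y}{\inner{x}{y}\ge t}$ preserves the identity, and all sets $A_S\cap H_t$ and $X\cap H_t$ are still polytopes. Applying $\chi$ then gives, for every $t\in\R$,
\[
\chi(X\cap H_t)=\sum_{\emptyset\ne S}(-1)^{|S|+1}\chi(A_S\cap H_t).
\]
Letting $t\to-\infty$ gives the special case
\[
1=\chi(X)=\sum_{\emptyset\ne S,\ A_S\neq\emptyset}(-1)^{|S|+1},
\]
which is a nerve-type count.

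Next I would use the layer-cake formula for a nonempty polytope $P$:
\[
h_P(x)=\int_0^\infty \chi(P\cap H_t)\,dt-\int_{-\infty}^0\bigl(1-\chi(P\cap H_t)\bigr)\,dt .
\]
It holds because $\chi(P\cap H_t)=1$ exactly when $t\le h_P(x)$. I would multiply the pointwise-in-$t$ identity by the appropriate signs and integrate. The integrals over $t\ge0$ match term by term. For $t<0$, the constants $1$ must also match: on the left there is one for $X$, and on the right there is $\sum(-1)^{|S|+1}$ over the nonempty $A_S$. These agree by the nerve count above. Empty $A_S$ contribute $\chi\equiv0$ and no constant, consistent with interpreting them as zero. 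Since all integrands vanish outside a bounded range of $t$, the integrals are finite, and rearranging signs gives exactly the displayed identity.

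The main obstacle is making the Euler characteristic rigorous: it must be a well-defined linear functional on the span of indicators of polytopes, so that applying it to an indicator identity is legitimate. I would either cite this fact (as in Schneider's book) or prove it quickly by induction on dimension, using the sweeping definition $\chi(f)=\sum_t\bigl(\chi(f|_{\{\inner{v}{y}=t\}})-\lim_{s\downarrow t}\chi(f|_{\{\inner{v}{y}=s\}})\bigr)$. As an alternative route avoiding $\chi$, I would check the two-set case directly: if $A\cup B$ is convex then $h_{A\cap B}=\min(h_A,h_B)$. One then inducts on $k$, but that requires convexity of intermediate unions and is messier, which is why I prefer the integral argument.
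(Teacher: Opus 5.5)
Your proof is correct. The paper gives no proof of this lemma: it invokes the standard valuation property of support functions and cites Schneider and McMullen--Schneider, so your argument supplies what the paper takes for granted.

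The steps hold up. Applying $\chi$ to the half-space-restricted indicator identity gives $\ind{\{t\le h_X(x)\}}=\sum(-1)^{|S|+1}\ind{\{t\le h_{A_S}(x)\}}$ for every $t$, with the sum over nonempty $A_S$. Your layer-cake formula is just $h_P(x)=\sup\{t : P\cap H_t\neq\emptyset\}$ written as an integral. The constants for $t<0$ are matched by the same identity at very negative $t$.

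There is another standard route. One proves the two-set identity $h_{K\cup L}+h_{K\cap L}=h_K+h_L$ for convex $K\cup L$, using $\max$ and $\min$. An extension theorem of Groemer or Volland type then passes to arbitrary finite covers. That theorem is needed because, as you observe, naive induction on $k$ fails once intermediate unions are nonconvex. Your reduction instead puts all the nontrivial content in the existence of the Euler characteristic as a linear functional on indicator functions. You rightly flag this, and Hadwiger's sweep argument supplies it.

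Your approach also gives more generality for free. You never use that the relation comes from inclusion--exclusion. The same argument shows that any relation $\sum_i c_i\ind{P_i}=0$ holding everywhere forces $\sum_i c_i h_{P_i}=0$; the constant terms cancel because $\sum_i c_i\chi(P_i)=0$. Combined with a common refinement to handle the almost-everywhere caveat, this is essentially the form the paper later needs in the first assertion of \Cref{lem:valuation}. There the paper again simply defers to full additivity.
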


Our new goal is thus to build the simplex using paramids modulo sums of lower dimensional pieces. 
We present two such constructions.
One construction proceeds by induction and uses the language of chain complexes and homological algebra. 
The second construction is based on a direct inclusion-exclusion formula.

\subsection{Overview of the Homological Algebra Implementation}

Let us start with some intuition.
Imagine we already found a two-layer representation of $\Delta_n$.
By restricting the representation to a facet,
we get a two-layer representation of $\Delta_{n-1}$. 
Each facet has its own representation
and the $n+1$ representations are somehow glued together coherently. 
What extra conditions does the boundary representation need to satisfy in order for it to extend to the $n$-simplex? 

This type of ``boundary filling'' question naturally suggests using the framework of chain complexes and homology.
A chain complex $C_{\bullet}$ is a sequence of vector spaces $C_0, C_1, C_2, \dots$ equipped with linear boundary maps $d:C_k\rightarrow C_{k-1}$. The boundary map satisfies $d^2=0$,
which implies that the image of $d:C_{k+1}\rightarrow C_k$ is a subspace of the kernel of $d:C_k\rightarrow C_{k-1}$.
This leads to \emph{homology}:
\[H_k(C_{\bullet})=\faktor{\ker(d:C_k\rightarrow C_{k-1})}{\text{img}\left(d:C_{k+1}\rightarrow C_k\right)}.\]
An element $x$ satisfying $d(x)=0$ is called a \emph{cycle} (we use lowercase $x$ to distinguish a formal element $x$ of a chain complex from a polytope, which we would represent with a capital $X$). An element $x$ that is equal to $d(y)$ for some $y$ is called a \emph{boundary}. The condition that $d^2=0$ says that every boundary is a cycle. The homology groups record which cycles are boundaries. 

In \Cref{sec:minkowskicomplex}, we define the chain complex $P_{\bullet}$
of formal Minkowski differences of polytopes modulo lower dimensional polytopes. 
This is the main chain complex we care about,
because it corresponds to $\relu$ networks
by the duality discussed above. 
A central property for performing induction is that a top-dimensional class in $P_{\bullet}$ is determined by its boundary. 
Namely, for chains of polytopes in $\R^n$, the map $d: P_n \to P_{n-1}$ is injective.
This is a consequence of known results (see \Cref{lem:topinjectivity}). 
So if we can find some $x \in P_n$ such that $d(x) = d(\Delta_n)$ then we actually obtained $\Delta_n$.

We let $\Lambda_{\bullet}$ be the subcomplex of $P_{\bullet}$ generated by paramids. 
This chain complex corresponds to the subclass of two-layer $\relu$ networks we focus on. The homology of $\Lambda_{\bullet}$ tells us a lot about the functions that can be represented in terms of paramids.
If we can represent the $n$-simplex by paramids, then the boundary of the $(n+1)$-simplex is a cycle in $\Lambda_{n}$. By injectivity, the question of whether we can represent the $(n+1)$-simplex by paramids exactly asks whether this cycle is a boundary in $\Lambda_{\bullet}$.

The Minkowski sum relations in $\Lambda_{\bullet}$ are complicated, and it is difficult to perform calculations in this complex directly. So we try to approximate $\Lambda_{\bullet}$ with a chain complex that is easier to understand. 
We define a chain complex $C_{\bullet}$ of parallelepipeds. 
There is a coning map $\rho:C_{k}\rightarrow P_{k+1}$ that sends a parallelepiped $X$ to the paramid with base $X$ and apex at the origin. The image of $\rho$ lies in the subcomplex $\Lambda_{\bullet}$ of $P_{\bullet}$.
A parallelepiped is a sum of intervals so it is zero in $P_k$ for $k\geq 2$.
The parallelepiped facet of a $k$-dimensional paramid is zero in $P_{k-1}$ for $k \geq 3$.
Because the base does not survive in $\Lambda_\bullet$, this means that $\rho$ commutes with the boundary maps of $C_{\bullet}$ and $P_{\bullet}$ for $k \geq 2$.

Let us try to implement the inductive construction. 
We start with a chain of parallelepipeds $x\in C_{n-1}$ such that $\rho(x)$ is the $n$-simplex. 
We then create copies of $x$ for each facet of the $(n+1)$-simplex to get a cycle $x'\in C_{n-1}$ such that $\rho(x')$ is the boundary of the $n+1$ simplex. 
If we can find some $y \in C_n$ such that $d(y)=x'$, then $\rho(y)$ satisfies \[d(\rho(y))=\rho(d(y))=\rho(x')=d(\Delta_{n+1}).\]
Injectivity tells us that $\rho(y)$ is the needed $(n+1)$-simplex.

This still does not explain how we find $y\in C_{n}$ with $d(y)=x'$. We notice that not all cycles in $C_{\bullet}$ are boundaries. Every boundary $x$ in $C_{\bullet}$ satisfies the \emph{augmentation condition}, which says that $x$ becomes $0$ if we work modulo translations, because pairs of opposite faces of a parallelepiped give sign-reversed translates of each other after taking the boundary. But there are cycles in $C_{\bullet}$ that do not have this property, see for example \Cref{fig:trianglecycle}.

\begin{figure}[h]
\centering
\includegraphics[page=1]{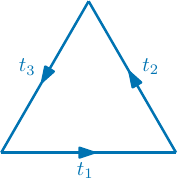}
\caption{This triangle cannot be a boundary because it does not have the augmentation property. }
\label{fig:trianglecycle}
\end{figure}

We show that this is the only obstruction to filling boundaries in $C_{\bullet}$: any cycle with the augmentation condition is a boundary. The idea is that non-trivial homology in $C_{\bullet}$ comes from relations among edge directions (see \Cref{sec:homology}, or \Cref{sec:koszul} for a different perspective). For example, the triangle illustrated in \Cref{fig:trianglecycle} is a cycle in $C_1$ that is not a boundary due to the relation $t_1+t_2+t_3=0$, where $t_1, t_2$ and $t_3$ are the edge vectors of the triangle. 

We first illustrate our solution with the example of a regular hexagon, as shown in \Cref{fig:hexagonfilling}. Of the three edge directions $t_1, t_2$ and $t_3$, we choose the free generating set $t_1$ and $t_2$. The two edges with direction $t_3$ can be connected by parallelepipeds, as shown in the right part of the figure. After subtracting the boundary of these parallelepipeds, the remaining boundary (bottom part of the figure) has only $t_1$ and $t_2$ as edge directions. Since $t_1$ and $t_2$ are independent, this remaining boundary can be filled.

\begin{figure}[h]
\centering
\includegraphics[page=2]{figures/Depth2Simplex-F2.pdf}
\caption{ The hexagon has the augmentation property, so it can be filled. }
\label{fig:hexagonfilling}
\end{figure}

More generally, choose a set $E$ of edge directions that have no relations among them. 
Let $x \in C_k$ be a cycle with the augmentation property. 
Write $X=x_0+\dots+x_k$, where $x_i$ is a parallelepiped with $i$ edge directions in $E$ and $k-i$ edge directions not in $E$. Letting $x_j$ be the first non-zero term, we can find a boundary $y=y_j+\dots+y_k$ where $y_i$ has $i$ directions in $E$ such that $y_j=x_j$. The $j=0$ case requires the augmentation condition. By subtracting $y$ from $x$, we reduce the number of non-zero terms. Inductively, this implies that $x$ is a boundary.

We are still not done.
By induction, we constructed $x$ such that $\rho(x)$ is $\Delta_n$.
To perform the induction step,
we take $x'$ to be copies of $x$ for the facets of $\Delta_{n+1}$ so that $\rho(x') = d(\Delta_{n+1})$.
We now want to fill $x'$, but in general it is not true that $x'$ satisfies the augmentation condition. 


The last ingredient is the doubling map $S: C_{\bullet}\rightarrow C_{\bullet}$.
It takes a $k$-dimensional parallelepiped, scales it by a factor of $2$, and then subdivides the scaled parallelepiped into $2^k$ copies of the original one. This is illustrated in \Cref{fig:doublingmap}.
The element $x''=2^k x'-S x'$ is a cycle and it satisfies the augmentation property, so it can be filled. 
Geometrically, $S$ is just a dilation by a factor of two:
\[\rho(S(x'))=2\rho(x').\] 
So, for $n-1\ge 2$,
\[\rho\left(\frac{x''}{2^{n-1}-2}\right)=\rho(x')
= d(\Delta_{n+1}).\] 
Because $x''$ can be filled, there is $y$ such
that \[d(y) = \frac{x''}{2^{n-1}-2}.\]
Finally, by injectivity, the induction is complete
\[\rho(y) = \Delta_{n+1}.\]

\begin{figure}[h]
\centering
\includegraphics[width=7cm,page=3]{figures/Depth2Simplex-F.pdf}
\caption{ The doubling map $S$ scales a parallelogram in $C_k$ by a factor of $2$, then subdivides the parallelepiped face into $2^k$ pieces, which are each a translated copy of the original parallelepiped face. In $P_{\bullet}$, this corresponds to scaling a paramid by $2$ and then subdividing into $2^k$ pieces. But subdivision does not change a class in $P_{\bullet}$, so $\rho\circ S$ is equal to $2\rho$. }
\label{fig:doublingmap}
\end{figure}

\subsection{Overview of the Direct Implementation}
The second proof we present is based on directly constructing a family of paramids, whose indicator functions exhibit a combinatorial cancellation when being summed up. 
Here the goal is to build the regular $(n-1)$-simplex $\Delta = \conv \rset{e_i}{i \in \upto{n}}$ that sits inside the space $\rset{x \in \R^n}{\sum_i x_i = 1}$.
There are two main ingredients to the construction.
The first step is a comparably simple subdivision step, and the second step is where the non-trivial cancellation happens.
By the valuation property (\cref{lem:valuation}), it suffices to establish suitable identities up to lower-dimensional terms.

The first step is a geometric identity obtained by subdividing the simplex. Fix a nonempty subset $R \subsetneq \upto{n}$. For $v \notin R$, consider the simplex
\[
Q_{v,R}
\coloneqq (n - \abs{R})\Delta_{-v} - \sum_{i \notin  R \cup \set{v}} e_i ,
\]
where $\Delta_{-v} = \conv\rset{e_i}{i \neq v}$ is the facet opposite to $e_v$.
The simplex $Q_{v,R}$ is a dilation and translation of the facet $\Delta_{-v}$.
Cone it with $e_v$ to get the simplex
\[
    \pyr{e_v}{Q_{v,R}}
    \coloneqq
    \conv(\set{e_v} \cup Q_{v,R}).
\]
See \Cref{fig:simPar} for an illustration. The simplices $\pyr{e_v}{Q_{v,R}}$ for $v \notin R$ partition the simplex $\Delta$ up to dilation and a translation, so the valuation property of support functions implies
\[
  \MAX_n = h_\Delta \equiv \frac{1}{n - |R|} \sum_{v\notin R}
  h_{\operatorname{pyr}_{e_v}(Q_{v,R})} ,
\]
where $\equiv$ means up to support functions of lower-dimensional pieces.

\begin{figure}
    \centering
    \includegraphics{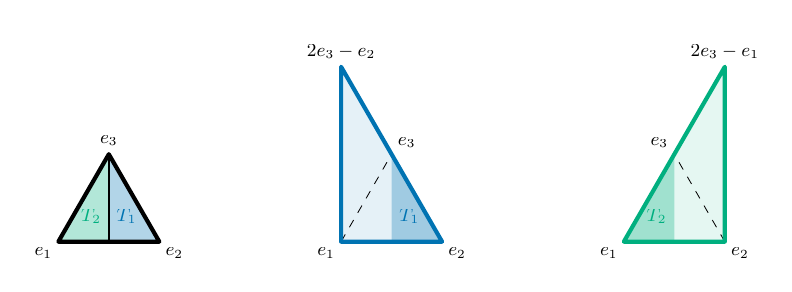}
    \caption{On left, a partition of the simplex $\Delta$ to simplices  with $R = \set{3}$. On middle and right, the two pyramids $\pyr{e_v}{Q_{v,R}}$ for $v=1$ and $v=2$.}
    \label{fig:simPar}
\end{figure}

This gives a family of identities expressing the support function of the original simplex in terms of other simplices. These identities alone do not yield the two-layer representations because the pyramids $\pyr{e_v}{Q_{v, R}}$ are simplices rather than paramids.

The second weightier step replaces the simplices $Q_{v,R}$ by parallelepipeds, which we cone with the apex $e_v$ to paramids. 
For each vertex $v$, we construct a family of parallelepipeds $B_{v,\pi,M}$ indexed by permutations $\pi$ and sets of positions $M$. Their signed indicator sum can be evaluated by viewing the input-coordinates as steps of a walk from~$0$ to~$1$. Inclusion--exclusion cancels all walks except the \emph{thermostat} walks which rise below level~$1$ and fall above level~$1$.
Their signed count turns out to be exactly $(-1)^{s-1}s!(n-1-s)!$, where $s$ is the number of positive coordinates of the (generic) input. 
Binomial inversion expresses this count as a linear combination of the indicator sums of the auxiliary simplices $Q_{v,R}$:
\begin{equation*}
\sum_{\pi}\sum_M (-1)^{\abs{M}} \ind{B_{v,\pi,M}}
\equiv
(n-1)!\sum_{R}
\frac{(-1)^{\abs{R} - 1} \abs{R}}{n - \abs{R}}
\ind{Q_{v,R}}.
\end{equation*}
This identity replaces the simplices $Q_{v,R}$ by parallelepipeds $B_{v,\pi,M}$. See \Cref{fig:simpAndPar} for illustrations.


\begin{figure}
  \centering
  \resizebox{\linewidth}{!}{%
    \input{figures/decomposition-vector}%
  }
  \caption{At the top, the parallelepipeds $B_{v,\pi,M}$ and the simplices $Q_{v,R}$ together with their respective coefficients. At the bottom, the total sum is the same for the parallelepipeds and for the simplices.}
  \label{fig:simpAndPar}
\end{figure}

But this identity by itself is not useful, because parallelepipeds are not enough.
We need to cone the identity. 
Coning by $e_v$  brings the needed paramids to the left-hand side.

Summing over all vertices $v$ and applying the geometric identities from the first step gives $h_\Delta$ up to a lower-dimensional remainder term $R_{n-1}$:
\[
    \MAX_n = h_\Delta =
    \frac{(-1)^n}{ n!} \sum_{v} \sum_{\pi} \sum_{M} (-1)^{\abs{M}}
    h_{\pyr{e_v}{B_{v,\pi,M}}} + R_{n-1}.
\]
This gives a two-layer representation of $\maxfn{n}$, because all paramids have support functions representable in two layers and the lower dimensional correction $R_{n-1}$ is handled by induction.

\section*{ Statement of AI Use }

The authors verified the correctness and originality of all content, including references. 

\subsection*{Homological Algebra Approach }

The proof contains significant technical contributions from both AI models and the authors. 
We used ChatGPT Sol 5.6 and ChatGPT Astra 6. 

The major contribution of the human authors is the overall proof strategy. The constructions of the chain complexes $P_{\bullet}$ and $\Lambda_{\bullet}$, the idea to represent elements of $\Lambda_{\bullet}$ as images of a coning map $\rho:C_{\bullet}\rightarrow P_{\bullet}$, and the injectivity of the top boundary (\Cref{lem:topinjectivity}) are all authors' ideas. AI was used throughout the process to check proofs and look for simplifications.

The largest contributions from AI are the proofs of \Cref{lem:aug} and the doubling map used to prove \Cref{lem:cyclecorrection}. The proof of \Cref{lem:lowerdimcharacterization} that appears in this paper is also due largely to AI, but another proof was previously known to the authors (see \Cref{sec:flaglowerdim}). 


\subsection*{Direct Combinatorial Approach }
We used Claude Opus 5.5 extensively in developing the mathematical content of the combinatorial proof.

The outline of the proof was conceptualized by the authors. Our idea was to generalize the approach of subdividing into pyramids over zonotopes used in \cite{bakaevBetterNeuralNetwork2026} to prove $\maxfn{5} \in \relu_{5,2}$, handling lower-dimensional intersections by induction. We proposed combining this approach with ideas from the computational proof that $\maxfn{12} \in \relu_{12,2}$ in \cite{ruess2026shallower}. Working interactively with the authors, the model implemented and carried out calculations in small dimensions, identified patterns in the resulting solutions and proposed the concrete constructions underlying \cref{lem:thermostatN,lem:replacement}.

LLMs, primarily ChatGPT Sol 6.1, were also used to produce some of the TikZ figures.

\section*{Acknowledgments}

The authors thank the members of the Neural Polytopes Zulip community.
F.B. and A.Y. are supported by the DNRF Chair grant
20231101-28697. J.S. is supported by Starting Grant 1054-00032B from the Independent Research Fund Denmark under the Sapere Aude research career programme, and by the Carlsberg Foundation, grant CF24-1929.



\part{A Homological Algebra Proof}

\section{ The Chain Complex of Polytopes }\label{sec:minkowskicomplex}

We start by defining a chain complex $P_{\bullet}^n$. Here $P_k^n$ should represent ``$k$-dimensional polytopes in $\mathbb{Q}^n$ under Minkowski sum, modulo polytopes of dimension $<k$''.

For a finite-dimensional real vector space $S$, let $V_S$ be the vector space of formal Minkowski-linear combinations of polytopes. Formally, $V_S$ is the $\mathbb{Q}$-vector space spanned by pairs $(X, Y)$ of polytopes, modulo the relation $(X, X)=0$. Addition in $V_S$ is given by Minkowski sum $(X_1, Y_1)+(X_2, Y_2)=(X_1+X_2, Y_1+Y_2)$. Multiplication by a constant is given by $\lambda(X, Y)=(\lambda X, \lambda Y)$ when $\lambda$ is positive, and $\lambda(X, Y)=(|\lambda| Y, |\lambda| X)$ when $\lambda$ is negative. 

If $S'$ is a subspace of $S$, then there is a natural inclusion $V_{S'}\rightarrow V_S$. We define $\widetilde{V}_S$ to be the quotient of $V_S$ by the sum over all proper subspaces $S'$ of $S$ of $V_{S'}$. 

Formally, for $k\ge 1$, we define $P_{k}^n$ to be the direct sum of $\widetilde{V}_S$ over each oriented $k$-dimensional subspace $S$ of $\mathbb{Q}^n$, with relations that reversing the orientation of a subspace corresponds to reversing sign in $V_S$. 

$P_0^n$ will be mostly irrelevant for our calculations, but we define it here for completeness. For $k=0$, $P_{0}^n$ is defined as a direct sum over $\mathbb{Q}^n$-indexed copies of $\mathbb{Q}$, modulo global translations. Note a translation induces a permutation of the summands. Here we are adding relations $x-t(x)=0$, where $x\in \bigoplus_{\mathbb{Q}^n}\mathbb{Q}$ and $t$ is the linear map induced by the translation $t$ permuting the factors. 

We want to define the boundary $d:P_{k+1}^n\rightarrow P_{k}^n$. We start with $k\ge 1$. Now $P_{k+1}$ is generated (as a $\mathbb{Q}$-vector space) by classes of form $(X, 0)$ with $X$ a polytope in a $k+1$-dimensional oriented subspace $S\subseteq \mathbb{Q}^n$. The boundary of $(X, 0)$ is defined as the sum over all unit vectors $v\in S$ of the face of $X$ in direction $v$, translated so that it lies in the subspace $U$ orthogonal to $v$. The orientation $A$ of $U$ is chosen so that $v\wedge A$ represents the orientation of $S$. Note that this term is zero in $P_k^n$ unless $v$ is orthogonal to a $k$-dimensional face of $X$, so only finitely many terms are non-zero.

When $k=1$, $P_1^{n}$ is generated by classes of form $(X, 0)$, with $X$ a line segment in a $1$-dimensional space $S$. Let $a$ and $b$ be the endpoints of $X$, chosen so that the vector $a-b$ induces the orientation on $S$. We set $d((X, 0))$ to be the class of $e_a-e_b$, where $e_a$ is the standard basis vector corresponding the $a$-component of $\bigoplus_{\mathbb{Q}^n}\mathbb{Q}$, and $e_b$ is the one of the $b$ component.

We need to check that $d$ is well-defined and linear. Linearity is straightforward: if $X$ and $Y$ are polytopes, then the face of $X+Y$ in direction $v$ is the Minkowski sum of the faces of $X$ and $Y$ in direction $v$. 

It is clear that $d((X, X))=0$. If $X$ is $\le k$-dimensional and $S$ is a $k+1$-dimensional subspace containing $X$, then $(X, 0)$ is $0$ in $\widetilde{V}_S$, so we also need to check that $d((X, 0))=0$. If $X$ is $<k$ dimensional, then it has no $k$-dimensional faces, so $d((X, 0))$ is $0$ in $P_{k}^n$. If $X$ is exactly $k$ dimensional, then it has two identical $k$-dimensional faces in the two directions orthogonal to $X$. These induce opposite orientations of the same class in $P_{k}^n$, so they cancel, and again $d(X)=0$. This shows that $d$ is well-defined.

We need to check that $d^2=0$. Let $X$ be an $n$-dimensional polytope. Every $n-2$-dimensional face $F$ of $X$ is a face of two $n-1$ dimensional faces $Y_1$ and $Y_2$ of $X$. When we take $d^2(x)$, a class corresponding to $F$ appears twice, but with opposite orientations. So $d^2((X, 0))=0$, and $d^2=0$ by linearity. 

We define $\Lambda^n_{\bullet}$ to be the subcomplex of $P_{\bullet}^n$ where $\Lambda_k^n$ is generated by classes of paramids of dimension $k$. For $k\le 2$, a paramid is a simplex, so $\Lambda^n_{\bullet}=P_{\bullet}^n$. For $k\ge 3$, the parallelepiped face of a paramid is $0$ in $P_{k-1}^n$, and all the other faces of a paramid are paramids of one dimension lower. So $\Lambda_{\bullet}^n$ is closed under the boundary.

Paramids can be represented in depth $2$. If the $n-1$-simplex can be represented in depth $2$, then anything of dimension $<n$ can be represented in depth $2$, so everything in $\Lambda_{n}^n$ can be represented in depth $2$. Our goal is to show that the class of the $n$-simplex is in $\Lambda_n^n$ for each $n$. By induction, this will show that the $n$-simplex is in depth $2$ for all $n$.

The key property of $P_{\bullet}^n$ is the following lemma, which says that a polytope is uniquely defined by its boundary, even modulo lower-dimensional terms:

\begin{lemma}\label{lem:topinjectivity}
The top differential $d:P_{n}^n\rightarrow P_{n-1}^n$ is injective.
\end{lemma}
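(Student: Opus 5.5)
The plan is to translate the statement into the language of support functions and prove it by a correction argument followed by a rigidity (uniqueness) step. A class in $P_n^n$ is represented by a formal difference $(X,Y)$ of polytopes, or equivalently by the positively homogeneous $\CPWL$ function $f=h_X-h_Y$. Lower-dimensional polytopes correspond to functions invariant along some nonzero direction, so $P_n^n$ is the space of such $f$ modulo the span $L_n$ of lower-dimensional functions. Under this dictionary, the face $F_u(X)$ in direction $u$ has support function equal to the directional derivative $h_X'(u;\cdot)$. Therefore $d(f)$ records, for each unit normal $u$, the class of $f'(u;\cdot)$ restricted to $u^\perp$, modulo lower-dimensional functions on $u^\perp$. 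Only finitely many $u$ matter, namely the normals of the walls of the fan of $f$. Injectivity then says: if every facet difference $F_u(X)-F_u(Y)$ is a sum of polytopes of dimension $\le n-2$ inside $u^\perp$, then $X-Y$ is itself a sum of lower-dimensional polytopes.

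\emph{Step 1 (realize facet errors by prisms).} Suppose $d(x)=0$. Fix one of the finitely many relevant normals $u$. By hypothesis
\[F_u(X)-F_u(Y)=\sum_j \pm G_j \quad\text{in } V_{u^\perp},\]
where each $G_j$ lies in a proper subspace $S_j\subsetneq u^\perp$, and the identity holds exactly in $V_{u^\perp}$ up to translation. For each $G_j$, form the prism $G_j+[0,\varepsilon u]$. It lies in $S_j\oplus\mathbb{Q}u$, which has dimension $\le n-1$, so it is zero in $P_n^n$. Its face in direction $u$ is a translate of $G_j$. Subtracting these prisms, for all relevant $u$, does not change the class $x$. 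It does replace the facet data in direction $u$ by something that vanishes exactly in $V_{u^\perp}$ modulo translation, not merely in the quotient $\widetilde V_{u^\perp}$.

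\emph{Step 2 (rigidity).} If a virtual polytope $X'-Y'$ has every facet difference exactly a point, then $X'$ is a translate of $Y'$. This is the Minkowski-type uniqueness statement that a polytope is determined up to translation by its facets together with their normals. In support-function terms, $f$ has no kink across any wall, hence is linear. A linear $f$ corresponds to a point, which is lower-dimensional, so $x=0$. Some care is needed for $n=1$ because of the translation relations built into $P_0^n$; that case can be handled directly from the definition.

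\emph{Main obstacle.} The prisms in Step 1 also have faces in directions $w\neq\pm u$. These faces have dimension $\le n-2$, so they vanish in the quotient but not exactly in $V$. Subtracting the prisms can therefore create new exact facet data in other directions and spoil the exact cancellation that Step 2 needs. I would first try to control this in two ways:
\begin{enumerate}
\item Process the normals in a fixed order and show that the new contributions are always supported in directions orthogonal to an already-handled subspace. This would run an induction on the dimension of the face lattice, with the statement for $P^{n-1}_\bullet$ on each hyperplane $u^\perp$ used as the inductive hypothesis.
\item Abandon exact cancellation and prove a local-to-global statement. A homogeneous $\CPWL$ function that is locally, near every ray, a sum of functions invariant along some direction should be globally such a sum. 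This could be shown by gluing over the fan with a partition along the walls, via a Mayer--Vietoris-type vanishing.
\end{enumerate}
If neither works cleanly, I would invoke the known characterization of sums of lower-dimensional polytopes via McMullen's polytope algebra or Schneider's treatment of translation-invariant valuations as the "known results" the paper alludes to, and reduce injectivity to that characterization.
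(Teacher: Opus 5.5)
\textbf{There is a genuine gap.} Step~1 fails as written, and the obstacle you flag at the end is not a side issue: it is the whole content of the lemma.

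First, you describe $\ker d$ incorrectly. In $P^n_{n-1}$ the faces in directions $u$ and $-u$ land in the same summand $\widetilde V_{u^\perp}$, with opposite orientations. So $d(x)=0$ only says that the combination $F_u(X)-F_u(Y)-F_{-u}(X)+F_{-u}(Y)$ is a sum of lower-dimensional polytopes in $u^\perp$. It does not say that each $F_u(X)-F_u(Y)$ is. For example, a flat $(n-1)$-dimensional polytope is $0$ in $P^n_n$, yet its face in its normal direction is itself.

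Second, your prisms cannot repair this. The prism $G_j+[0,\varepsilon u]$ has the same face $G_j$ in both directions $\pm u$. Subtracting it therefore leaves the exact $u^\perp$-component $F_u(X)-F_u(Y)-F_{-u}(X)+F_{-u}(Y)$ unchanged. Making the $u$-difference exactly a point forces the $(-u)$-difference to equal minus that combined component. In general that component is only lower-dimensional, not a point: for a flat triangle in $\mathbb{R}^3$, $F_u$ can be an edge while $F_{-u}$ is a vertex. So you never reach the exact per-direction data that the Minkowski rigidity of Step~2 needs. Separating $u$ from $-u$ would require other flat pieces, such as pyramids $\conv(G_j\cup\{p\})$. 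These create $(n-2)$-dimensional faces in many other hyperplanes, which is exactly the cascade you name. Neither of your two proposed ways of controlling that cascade is carried out.

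\textbf{The missing idea} is to stop looking for exact Minkowski cancellations. Instead, work in an algebra where subdivision is a relation and lower-dimensional pieces vanish automatically. The paper derives the lemma from \Cref{lem:lowerdimcharacterization}, whose ``if'' direction goes through Jessen--Thorup:
\begin{enumerate}
\item Let $\Pi_S$ be full-dimensional polytopes in $S$ modulo translations and subdivisions. By the valuation property, $X\mapsto(X,0)$ induces a surjection $\xi:\Pi_S\to\widetilde V_S$.
\item Since $\xi(kX)=k\,\xi(X)$, the map $\xi$ kills the dilation-degree components $\Pi^i_S$ with $i>1$.
\item $\xi$ is injective on $\Pi^1_S$. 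This follows by induction on $\dim S$ together with the Jessen--Thorup criterion: $x=0$ in $\Pi_S$ if and only if $\mathrm{Vol}_S(x)=0$ and every facet map $h_{Q,v}(x)$ vanishes in $\Pi_Q$.
\item Hence $\widetilde V_S\cong\Pi^1_S$. A class with vanishing boundary data lifts to a degree-one element. Its facet images vanish, by the same isomorphism one dimension down. Its volume is zero, since its degree is $1<n$. So it is $0$.
\end{enumerate}
Your fallback points vaguely at the polytope algebra, but it does not specify the reduction. It lacks the degree decomposition, the identification $\widetilde V_S\cong\Pi^1_S$, and a local-to-global vanishing theorem of Jessen--Thorup type (or the flag criterion of Koutschan--Ponomarchuk--Schicho). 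Without these, the proposal does not yet contain a proof.
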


In order to prove \Cref{lem:topinjectivity}, we need to understand the space $\widetilde{V}$ better. Our goal is to prove the following characterization of $\widetilde{V}_S$:

\begin{lemma}\label{lem:lowerdimcharacterization}
Let $X$ and $Y$ be polytopes in $\mathbb{Q}^n$, with $n\ge 2$. Then $(X, Y)=0$ in $\widetilde{V}_{\mathbb{Q}^n}$ if and only if, for each non-zero direction $v$, the following property holds: let $F_1$ and $F_2$ be the faces of $X$ in directions $v$ and $-v$ respectively, and let $G_1$ and $G_2$ be the faces of $Y$ in directions $v$ and $-v$ respectively. Then $(F_1+G_2, F_2+G_1)=0$ in $\widetilde{V}_{S}$, where $S$ is the subspace of $\mathbb{Q}^n$ orthogonal to $v$, and we translate $F_1, F_2, G_1$ and $G_2$ so that they lie in $S$.
\end{lemma}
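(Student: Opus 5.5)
The plan is to prove the two directions separately. The forward direction is a direct computation with faces. The reverse direction goes by induction on $n$, using prisms to cancel boundary data and a Minkowski-type uniqueness statement at the end. Throughout I use two facts about faces. First, for polytopes $A,B$ and a direction $v$, the face of $A+B$ in direction $v$ is the Minkowski sum of the faces of $A$ and of $B$ in direction $v$; hence taking $v$-faces is additive on $V$. Second, for $n\ge 2$ the hyperplane $S=v^\perp$ has dimension $\ge 1$, so points are lower-dimensional in $S$. Consequently translations do not change classes in $\widetilde V_S$, and the translation convention in the statement is harmless.

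\emph{Forward direction.} Suppose $(X,Y)=0$ in $\widetilde V_{\mathbb Q^n}$. Then in $V_{\mathbb Q^n}$ we have $(X,Y)=\sum_i (Z_i,Z_i')$ with $Z_i,Z_i'$ contained in a proper subspace $S_i$. The assignment $(X,Y)\mapsto (F_1+G_2,F_2+G_1)$ is the difference of the $v$-face map and the $(-v)$-face map, so by additivity it is linear. It therefore suffices to check that it kills a single lower-dimensional polytope $Z\subseteq S'$, i.e.\ that $(F_v(Z),F_{-v}(Z))=0$ in $\widetilde V_S$. There are two cases.
\begin{itemize}
\item If $v\perp S'$, then $S'\subseteq S$ and $F_v(Z)=F_{-v}(Z)=Z$, so the pair is $(Z,Z)=0$.
\item Otherwise, both faces lie in affine hyperplanes of $S'$ orthogonal to $v$. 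After translating into $S$ they lie in the proper subspace $S\cap S'$ of $S$, since $\dim(S\cap S')\le n-2$. So they vanish in $\widetilde V_S$.
\end{itemize}

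\emph{Reverse direction.} The key observation is that any Minkowski sum $K+[0,w]$, with $K$ lower-dimensional, is zero in $\widetilde V_{\mathbb Q^n}$, because $(K+[0,w],0)=(K,0)+([0,w],0)$ is a sum of two lower-dimensional classes. More generally, any sum of at least two lower-dimensional polytopes is zero.

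Fix one of the finitely many facet normals $v$ of $X$ or $Y$. By hypothesis, $(F_1+G_2,F_2+G_1)$ is a sum of lower-dimensional classes in $S=v^\perp$. I then modify $(X,Y)$ by adding prisms of the form $K+[0,\epsilon v]$, with $K$ ranging over the relevant faces and lower-dimensional correction terms in $S$. These prisms change $(X,Y)$ only by zero in $\widetilde V_{\mathbb Q^n}$. Their effect is to equalize the $v$-face and $(-v)$-face data exactly, not merely modulo lower dimension. The additional facets they create have directions of the form $u$ with $u\perp v$. I need to check that these new facets again satisfy the hypothesis; here the forward direction, applied inside lower-dimensional faces, should help. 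Processing directions in a fixed order, I aim to reach a pair $(X',Y')$, equal to $(X,Y)$ in $\widetilde V$, whose facet data satisfy
\[
F_v(X')+F_{-v}(Y')=F_{-v}(X')+F_v(Y')
\]
up to translation, for every $v$.

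\emph{Main obstacle.} The final step is to show that such a balanced pair is zero. This needs a Minkowski-uniqueness argument on the antisymmetric part: the formal difference $X'-Y'$ has facet measure equal to the odd part of its own reflection. From this I would deduce that $X'-Y'-(-(X'-Y'))$ vanishes, i.e.\ that $X'-Y'$ is centrally symmetric as a virtual polytope. I expect this step to be the hardest, since virtual polytopes need not have well-defined surface area measures. The route I would try first is to pass to support functions $f=h_{X'}-h_{Y'}$ and to show, using the balanced facet data, that the odd part of $f$ has zero wall-jumps and is therefore linear. I would then show that the even part of $f$ is a signed sum of zonotope support functions, mimicking the $n=2$ case, where centrally symmetric polygons are zonotopes. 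The $n=2$ case also serves as the base of the induction: there the condition says exactly that opposite edge lengths of $X-Y$ balance.
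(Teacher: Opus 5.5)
Your forward direction is correct and is essentially the paper's argument. The reverse direction has a genuine gap at the equalization step. Take a prism $K+[0,\epsilon v]$ with $K\subseteq S=v^\perp$. Its $v$-face is $K+\epsilon v$ and its $(-v)$-face is $K$. Adding it to $X$ therefore adds the same summand $K$ to both $F_1+G_2$ and $F_2+G_1$. Such prisms leave the $\pm v$ imbalance unchanged, so they cannot upgrade ``equal modulo lower-dimensional terms'' to ``equal''. To alter that imbalance you need lower-dimensional pieces lying in hyperplanes transverse to $S$, e.g.\ pyramids over the correction terms. But each such piece $Z\subseteq S'$ creates new exact imbalances $(F_w(Z),F_{-w}(Z))$ for a whole continuum of directions: every $u$ whose projection to $S'$ is a facet normal $w$ of $Z$. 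So a sweep over finitely many facet normals does not terminate, and nothing in your proposal shows that a globally compatible correction exists.

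This is the whole content of the ``if'' direction, not a technicality. Take $n=3$ and $(X,Y)=(\Delta,-\Delta)$ for a tetrahedron $\Delta$. Then $G_2=-F_1$ and $G_1=-F_2$, so each $(F_1+G_2,F_2+G_1)$ is a pair of planar difference bodies, i.e.\ zonogons, and the hypothesis holds. Yet at a facet normal $v$ one side is a hexagon and the other is a point, so the pair is far from balanced. In general, exact balance forces the odd part of $h_{X'}-h_{Y'}$ to be linear, since the wall jumps at $x$ and $-x$ must agree. For $n=3$, however, the hypothesis only says that the even part of $h_X-h_Y$ is a virtual zonotope and says nothing about the odd part. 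So your equalization step amounts to showing that every odd homogeneous CPWL function on $\mathbb{R}^3$ is a combination of support functions of polygons and segments. That is exactly the local-to-global statement you set out to prove.

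The paper never constructs such corrections. It identifies $\widetilde V_S$ with the degree-one component $\Pi^1_S$ of the Jessen--Thorup polytope algebra, using the dilation grading, induction on dimension and the ``only if'' direction. It then invokes the Jessen--Thorup theorem that an element of $\Pi_S$ vanishes iff its volume and all facet maps $h_{Q,v}$ vanish. Its alternative proof uses the flag criterion of Koutschan--Ponomarchuk--Schicho instead.

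Your final step is repairable but also needs care. Evenness alone does not make a virtual polytope a virtual zonotope for $n\ge 3$; the octahedron is a counterexample. You would have to use balance at $2$-faces to force the wall jumps to be constant along entire hyperplanes.
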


The ``only if'' direction appears in \cite{KMPS2025}. We reproduce the proof below:

\begin{proof}[Proof of the ``only if'' direction.]
The condition $(X, Y)=0$ in $\widetilde{V}_{\mathbb{Q}^n}$ is equivalent to saying that there exist polytopes $A_1, \dots, A_k$ and $B_1, \dots, B_j$, each of dimension $<n$, such that $X+A_1+\dots+A_k=Y+B_1+\dots+B_j$.

First, suppose that such $A_i$ and $B_i$ exist. Fix a non-zero direction $v$. For each $i$, let $A_i'$ be the face of $A_i$ in direction $v$, and let $A_i''$ be the face of $A_i$ in direction $-v$, and define $B_i'$ and $B_i''$ similarly. Let $F_1$ and $F_2$ be the faces of $X$ in directions $v$ and $-v$ respectively, and $G_1$ and $G_2$ similarly for $Y$. Now $F_1+A_1'+\dots+A_k'=G_1+B_1'+\dots+B_j'$ and $F_2+A_1''+\dots+A_k''=G_2+B_1''+\dots+B_j''$, so:

\[F_1+G_2+A_1'+\dots+A_k'+B_1''+\dots+B_j''=F_2+G_1+A_1''+\dots+A_k''+B_1'+\dots+B_j'\]

For each $A_i$, we have two cases: either $v$ is not orthogonal to $A_i$, in which case $A_i'$ and $A_i''$ have dimension $<n-1$, or $v$ is orthogonal to $A_i$, in which case $A_i'=A_i''$. In the latter case, we can cancel $A_i'$ and $A_i''$ from both sides of the equation. A similar claim holds for $B_i'$ and $B_i''$. After canceling equivalent terms, we see that $(F_1+G_2, F_2+G_1)=0$ in $\widetilde{V}_S$, where $S$ is the subspace orthogonal to $v$. The assumption that $n\ge 2$ means that we can translate $F_1+G_2$ and $F_2+G_1$ so that they lie in $S$. 
\end{proof}

The other direction of the lemma could be deduced from the results in \cite{KPS2026} (see \Cref{sec:flaglowerdim}). Instead, we will use a result of Jessen and Thorup \cite{JT78}. They studied an algebra of polytopes modulo translations and a subdivision relation. Their results are stated for polytopes with vertices in an arbitrary subfield of $\mathbb{R}$. We specialize to the case of vertices in $\mathbb{Q}$.

For a finite-dimensional rational vector space $S$, we define $\mathcal{B}_S$ to the abelian group freely generated by generators corresponding to full-dimensional polytopes in $S$. There is a homomorphism $\text{Vol}_S:\mathcal{B}_S\rightarrow \mathbb{Q}$ that sends a polytope to its volume.

We define $\Pi_S$ to be $\mathcal{B}_S$ modulo translations and subdivision relations that say that $X_1+X_2=Z$ if $X_1, X_2$ and $Z$ are polytopes where $X_1$ and $X_2$ are interior-disjoint and $X_1\cup X_2=Z$. It is straightforward to see that $\text{Vol}_S$ descends to a map $\Pi_S\rightarrow \mathbb{Q}$.

Let $Q$ be a codimension $1$ subspace of $S$ and let $v$ be a vector not in $Q$. Then we define a map $h_{Q, v}: \mathcal{B}_S\rightarrow \mathcal{B}_Q$ as follows: for a polytope $X$ in $S$, let $F_1$ and $F_2$ be the faces of $X$ in the directions orthogonal to $Q$, with the normal to $F_1$ being in the $+v$ direction and the normal of $F_2$ being in the $-v$ direction, and translate $F_1$ and $F_2$ by multiples of $v$ so that they are in $Q$. We view $F_1$ or $F_2$ as being $0$ if it is not full-dimensional.

Similar to the proof of the ``only if'' direction of \Cref{lem:lowerdimcharacterization}, it is not too hard to see that, if $x\in \mathcal{B}_S$ is $0$ in $\Pi_S$, then $h_{Q, v}(x)=0$ in $\Pi_Q$. Jessen and Thorup prove the following:

\begin{theorem}(\cite{JT78}, Theorem 1)\label{thm:pivanishingcondition}
An element $x\in \mathcal{B}_S$ is $0$ in $\Pi_S$ if and only if $\text{Vol}_S(x)=0$ and, for every codimension $1$ subspace $Q\subseteq S$ and $v\notin Q$, $h_{Q,v}(x)$ is $0$ in $\Pi_Q$.
\end{theorem}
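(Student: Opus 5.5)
The \emph{only if} direction is the observation already made before the statement: both volume and every $h_{Q,v}$ are invariant under translation and subdivision, so they descend to $\Pi_S$. The work is in the \emph{if} direction. I would prove it by induction on $m=\dim S$. For $m=1$ every polytope is a segment, and segments of equal length are translates. Subdivision lets us cut and concatenate segments, so $\Pi_S\cong\mathbb{Q}$ via $\text{Vol}_S$, and the statement holds.

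For the inductive step, I would first put an arbitrary $x\in\mathcal{B}_S$ into a normal form using \emph{prisms} (cylinders) $Z+[0,u]$, with $Z$ a polytope in a hyperplane $Q$ and $u\notin Q$.

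\textbf{Step 1: reduce to prisms.} Triangulate every polytope into simplices. Then show the classical Hadwiger--Sah fact that a simplex $\Delta$ satisfies
\[
\Delta\equiv -(-1)^m(-\Delta)
\]
modulo prisms in $\Pi_S$. To do this, cone from an interior point and dissect each piece into a half-cylinder plus the reflected simplex. Combining this with the decomposition of $2\Delta$ into translates of $\Delta$, $-\Delta$ and prisms, one obtains that, after inverting $2^m-2$ (harmless over $\mathbb{Q}$, since the case $m=1$ was handled separately), every simplex is a rational combination of prisms.

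This step is the same ``doubling and subdivision'' trick that reappears later in the paper with the map $S$. So every class in $\Pi_S$ can be written as $\sum_i c_i\,[Z_i+[0,u_i]]$.

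\textbf{Step 2: relate prisms to the invariants.} Compute $h_{Q,v}$ of a prism. Its faces orthogonal to $Q$ are either translates of $Z$ appearing with opposite signs, which cancel unless $u$ is parallel to $v$ modulo $Q$, or prisms over faces of $Z$ in lower dimension.

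Then use the inductive hypothesis on each hyperplane $Q$. The vanishing of $h_{Q,v}(x)$ in $\Pi_Q$ forces the $Q$-based part of the combination, grouped by direction class, to vanish in $\Pi_Q$ up to lower prism terms. Those lower terms can be rewritten as prisms with a different base hyperplane, using the fact that a prism $Z+[0,u]$ depends only on $u$ modulo translation by shear dissection (Cavalieri-style cut and paste). Iterating this, one reduces $x$ to a combination of parallelepipeds, i.e.\ sums of $m$ segments.

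Parallelepipeds with a fixed multiset of directions are determined in $\Pi_S$ by volume, via shearing. Along a complete flag, the iterated Hadwiger invariants detect their coefficients, so the remaining element is $c$ times a unit cube. The condition $\text{Vol}_S(x)=0$ then kills it.

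\textbf{Main obstacle.} I expect the hardest step to be Step 2: showing that vanishing of all first-level invariants $h_{Q,v}$ forces the prism decomposition to be trivial, i.e.\ that there are no ``hidden'' relations among prisms over different hyperplanes. This is a filling or exactness statement for the complex of iterated face maps, much like \Cref{lem:lowerdimcharacterization} itself. I would first try to organize the prisms by the flag of directions and filter $\Pi_S$ by the number of prism factors. I would then prove that each graded piece injects into $\bigoplus_{Q,v}\Pi_Q$ by the inductive hypothesis, which reduces the whole statement to checking compatibility of the shear relations across the filtration.
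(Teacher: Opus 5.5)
The paper does not prove this statement; it quotes it from Jessen--Thorup. So your argument has to stand on its own, and it fails at Step 1: for $m\ge 2$ a simplex is \emph{not} a rational combination of prisms in $\Pi_S$.

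In the plane the prisms are parallelograms, and $h_{Q,v}$ kills every parallelogram because opposite edges are translates. But if $Q$ is parallel to an edge $e$ of a triangle $T$, the two faces are $e$ and a vertex, so $h_{Q,v}(T)=\pm[e]\neq 0$ in $\Pi_Q\cong\mathbb{Q}$. Since $h_{Q,v}$ descends to $\Pi_S$ (the easy direction you already accept), $[T]$ is not in the span of prisms.

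The same holds in general. $h_{Q,v}(Z+[0,u])$ is $0$ when $Q$ is parallel to the base of $Z$ or when $u\notin Q$. When $u\in Q$ it is the difference of prisms $(F_1'+[0,u])-(F_2'+[0,u])$ in $Q$, where $F_1',F_2'$ are the faces of $Z$ in directions $\pm v$. Hence every iterated invariant along a flag $S\supset L_{m-1}\supset\dots\supset L_1$ vanishes on prisms. Along a flag of faces of $\Delta$, by contrast, it ends at an edge of nonzero length.

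In the paper's terms: a prism is a Minkowski sum of two lower-dimensional polytopes, so $\xi$ kills it. Step 1 would therefore give $\widetilde V_S=0$, i.e.\ $h_\Delta$ would be a combination of support functions of lower-dimensional polytopes. The ``only if'' half of \Cref{lem:lowerdimcharacterization}, applied to a triangle and the normal of one of its edges, already rules this out.

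The error is in the doubling step. Even granting your relation $[-\Delta]\equiv(-1)^{m-1}[\Delta]$ modulo prisms, any decomposition of $2\Delta$ into translates of $\pm\Delta$ and prisms gives net coefficient $2$, not $2^m$, for $[\Delta]$. To see this, apply one of the flag invariants above, which is homogeneous of degree $1$ and vanishes on prisms. For instance $[2T]=3[T]+[-T]=2[T]+[\text{parallelogram}]$. So you only get $[2\Delta]\equiv 2[\Delta]$, and there is nothing to divide by $2^m-2$.

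This is \Cref{thm:pispectraldecomp} in disguise. The cylinder map satisfies $[k(Z+[0,u])]=k\,[kZ+[0,u]]$, so it raises degree by one and prisms lie in $\bigoplus_{i\ge 2}\Pi_S^i$. Meanwhile $\Delta$ has a nonzero component in $\Pi_S^1$, which is the part the paper identifies with $\widetilde V_S$ in \Cref{lem:xikernel}. The analogy with the map $S$ of \Cref{lem:cyclecorrection} is misleading: there $\varepsilon$ and $\rho$ scale by different factors ($2^k$ versus $2$), whereas here the obstruction lies in the eigenvalue-$2$ part itself.

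Consequently the heart of the theorem, that an element of $\Pi_S^1$ with all $h_{Q,v}$ vanishing is zero, is never addressed. The higher degrees rest entirely on Step 2, which you leave open yourself. Its last sentence is also off: every $h_{Q,v}$ vanishes on parallelepipeds, and only the volume sees them.
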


We define a map $\xi:\mathcal{B}_S\rightarrow V_S$ that sends a polytope $X$ to the class of $(X, 0)$. Clearly, $\xi$ is surjective. When $n\ge 1$, \Cref{lem:additivity} says that $\xi$ descends to a map $\Pi_S\rightarrow \widetilde{V}_S$. In order to use \Cref{thm:pivanishingcondition}, we need to understand $\xi$ better.

For $k\in \mathbb{N}_+$, the dilation $\lambda_k:\Pi_S\rightarrow \Pi_S$ is the map that sends a polytope $X$ to $kX$. An element $x\in \Pi_S$ is said to have degree $i$ if $\lambda_k(x)=k^ix$ for each $k$. We let $\Pi_S^i$ be the subgroup of elements of degree $i$. Jessen and Thorup show the following:

\begin{theorem}(\cite{JT78}, Section 5)\label{thm:pispectraldecomp}
The group $\Pi_S$ is the direct sum of $\Pi_S^i, \dots, \Pi_S^{\dim(S)}$.
\end{theorem}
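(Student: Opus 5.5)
The plan is to follow the structure of Jessen and Thorup's argument. Grading is measured by the dilations $\lambda_k$, and there are three ingredients: a polynomial law for $\lambda_k$, torsion-freeness of $\Pi_S$, and enough divisibility to invert a Vandermonde matrix inside $\Pi_S$ itself. Throughout, $d=\dim S$.

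\emph{Step 1 (polynomiality).} For every class $x\in\Pi_S$ I want elements $z_1,\dots,z_d\in\Pi_S$ such that
\[\lambda_k(x)=\sum_{i=1}^d \binom{k}{i} z_i \quad\text{for all } k\in\mathbb{N}_+ .\]
The reduction is as follows.
\begin{itemize}
\item Since $\lambda_k$ is additive and respects translations and subdivisions, it suffices to treat a single $d$-simplex $T$, because every polytope is triangulable.
\item For a simplex, the standard dissection of $kT$ by the hyperplanes $x_j\in\mathbb{Z}$ (in barycentric-type coordinates) cuts $kT$ into translates of finitely many fixed cells: the "hypersimplex" slices of $T$, up to the central symmetry $-T$.
\item The number of translates of each cell type is a lattice-point count. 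Such counts are integer-valued polynomials in $k$ of degree $\le d$, so they are integer combinations of the $\binom{k}{i}$.
\item There is no $\binom{k}{0}$ term, because the identity must persist at $k=0$, where $0\cdot T$ is the empty sum.
\end{itemize}

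Comparing $\lambda_k\lambda_m=\lambda_{km}$ with the polynomial law, and extending it to positive rationals $k$ via $\lambda_{1/m}$, then shows that over $\Pi_S\otimes\mathbb{Q}$ the operator $\lambda_k$ acts as $\sum_i k^i p_i$. Here the $p_i$ are orthogonal idempotents obtained by Vandermonde/Lagrange interpolation at $k=1,\dots,d$. This already gives the decomposition after tensoring with $\mathbb{Q}$.

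\emph{Step 2 (torsion-freeness).} I will show $\Pi_S$ is torsion-free by induction on $d$, using \Cref{thm:pivanishingcondition}. For $d=1$, volume (length) identifies $\Pi_S$ with a subgroup of $\mathbb{Q}$. For the inductive step, suppose $m x=0$ in $\Pi_S$. Then $m\,\text{Vol}_S(x)=0$, so $\text{Vol}_S(x)=0$. Also $m\,h_{Q,v}(x)=0$ in $\Pi_Q$ for every $Q,v$, so by induction $h_{Q,v}(x)=0$. \Cref{thm:pivanishingcondition} then gives $x=0$. Hence $\Pi_S\hookrightarrow\Pi_S\otimes\mathbb{Q}$.

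\emph{Step 3 (the projections land in $\Pi_S$).} This is the step I expect to be the main obstacle. The interpolation formula for $p_i(x)$ has denominators: powers of the Vandermonde determinant, essentially products of factorials. So I must show that $\Pi_S$ is divisible, at least by those integers.

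My first attempt uses the rational dilations. Since $\lambda_{1/m}$ is defined on $\Pi_S$, every $y\in\Pi_S$ has the form $y=\lambda_m(\lambda_{1/m}y)=\sum_i m^i p_i(y)$.
\begin{itemize}
\item For $y$ of pure degree $i$, write $y=\lambda_m(w)$ with $w=\lambda_{1/m}y$. Then $y=m^i w$, which gives divisibility by $m$ on each homogeneous part.
\item To bootstrap, I would argue by downward induction on the top degree $d$. Volume detects the degree-$d$ part, and a single simplex $T$ and its dilate satisfy $\lambda_m(T)-m^dT\in\bigoplus_{i<d}$.
\item The lower-degree parts are controlled inductively through the maps $h_{Q,v}$ and \Cref{thm:pivanishingcondition} applied in dimension $d-1$.
\end{itemize}

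If divisibility of this kind is established, then $p_i(\Pi_S)\subseteq\Pi_S$. Each $p_i(x)$ has degree $i$, because $\lambda_k p_i=k^i p_i$. Torsion-freeness then makes the sum $\Pi_S=\Pi_S^1\oplus\dots\oplus\Pi_S^d$ direct and exhaustive. Here degree $0$ is absent because of the vanishing constant term in Step 1.
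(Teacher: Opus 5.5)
The paper gives no argument for this statement; it only cites Jessen--Thorup, so your proof has to stand on its own. Steps 1 and 2 are sound. Step 1 gives the splitting of $\Pi_S\otimes\mathbb{Q}$ into dilation eigenspaces. The absence of a degree-$0$ term should be justified by the explicit cell counts $\binom{k+d-1-j}{d}$, $0\le j\le d-1$, which vanish at $k=0$, rather than by ``$0\cdot T$ is empty''. Step 2 correctly uses \Cref{thm:pivanishingcondition} to get torsion-freeness, which makes $\sum_i\Pi_S^i$ direct.

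However, the entire integral content of the theorem is in Step 3: you must show that every $x\in\Pi_S$ is a sum of elements of the $\Pi_S^i$, i.e.\ that $p_i(x)$ lies in $\Pi_S$. There you only reach ``if divisibility of this kind is established''.
\begin{itemize}
\item Your first bullet is circular. It shows that an element already known to be a homogeneous element of $\Pi_S$ is divisible, which presupposes what is to be proved.
\item The display $y=\lambda_m(\lambda_{1/m}y)=\sum_i m^ip_i(y)$ is wrong. Correctly it reads $\sum_i m^ip_i(\lambda_{1/m}y)=\sum_i p_i(y)$, which is vacuous.
\item The bootstrap through the maps $h_{Q,v}$ cannot work as described. \Cref{thm:pivanishingcondition} is a uniqueness criterion: it can certify that a candidate element of $\Pi_S$ has the right invariants, but it cannot produce an element of $\Pi_S$ whose image in $\Pi_S\otimes\mathbb{Q}$ is $p_i(x)$. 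That is an existence statement.
\end{itemize}

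The rational dilations you already introduced close the gap once you take integer combinations of several of them rather than a single one.
\begin{itemize}
\item On $\Pi_S\otimes\mathbb{Q}$ you have $\lambda_r=\sum_{i=1}^d r^ip_i$ for every rational $r>0$. This holds because $\lambda_{1/N}$ commutes with the $p_i$ and $\lambda_N\lambda_{1/N}=\mathrm{id}$.
\item The vectors $(k,k^2,\dots,k^d)$ for $k=1,\dots,d$ span a sublattice $L\subseteq\mathbb{Z}^d$ of finite index $D=\lvert\det(k^i)_{1\le k,i\le d}\rvert\neq 0$. Hence $D\mathbb{Z}^d\subseteq L$.
\item Take $N=D$, and write $N^j$ times the $j$-th unit vector as $\sum_{k=1}^d c_k(k,\dots,k^d)$ with $c_k\in\mathbb{Z}$. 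Dividing the $i$-th coordinate by $N^i$ gives $\sum_k c_k(k/N)^i=\delta_{ij}$ for $1\le i\le d$, that is, $p_j=\sum_{k=1}^d c_k\lambda_{k/N}$.
\end{itemize}
So each $p_j$ maps $\Pi_S$ into itself. Together with your torsion-freeness this gives $\Pi_S=\Pi_S^1\oplus\dots\oplus\Pi_S^d$.

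Alternatively, you can prove divisibility of $\Pi_S$ geometrically. The staircase mixed subdivision of $(a+b)T=aT+bT$ shows that $\lambda_{a+b}T-\lambda_aT-\lambda_bT$ is a sum of Minkowski sums of faces of $T$ spanning complementary subspaces. These are divisible by induction on dimension, so $T=m\lambda_{1/m}T+(\text{divisible})$.
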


An example with $S=\mathbb{Q}^2$ is illustrated in \Cref{fig:dilationeigenspaces}. In this case, $\Pi_{\mathbb{Q}^2}^2$ is spanned by parallelograms, while $\Pi_{\mathbb{Q}^2}^1$ is spanned by terms that are the difference of a triangle and a parallelogram with the same area.

\begin{figure}
\centering
\includegraphics[width=13cm,page=1]{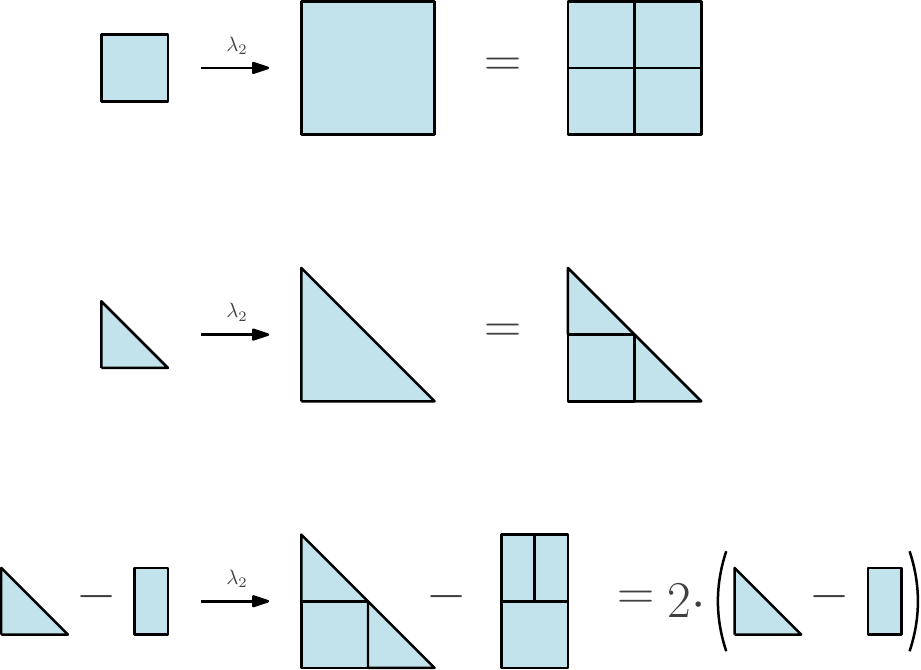}
\caption{ Top: the dilation of a parallelogram by $2$ can be subdivided into $4$ copies of the original parallelogram. Middle: the dilation of a triangle by $2$ can be subdivided into two copies of the original triangle and one parallelogram. So the difference of a triangle and a rectangle of the same area is in $\Pi_{\mathbb{Q}^2}^1$ (bottom). }
\label{fig:dilationeigenspaces}
\end{figure}

Our next step is to show that $\xi$ gives an isomorphism $\Pi_S^1\rightarrow \widetilde{V}_S$. Note that an element of $\Pi_S^i$ for $i<\text{dim}(S)$ has volume $0$. Indeed, $\text{Vol}_S(\lambda_k(x))=k^{\text{dim}(S)}\text{Vol}_S(x)$. We also note that every $h_{Q, v}$ map commutes with every $\lambda_k$, so $h_{Q, v}$ preserves the decomposition into $\Pi_S^{i}$.

If $S$ has dimension $\ge 2$ and $Q\subseteq S$, then we can define a unique map $h_{Q, v}:V_S\rightarrow V_Q$ similarly to $h_{Q, v}$ on $\mathcal{B}_S$. Note that $h_{Q, v}\circ\xi=\xi \circ h_{Q,v}$. We can now determine the kernel of $\xi$:

\begin{lemma}\label{lem:xikernel}
The kernel of $\xi: \Pi_S\rightarrow \widetilde{V}_S$ is the sum of $\Pi_S^i$ for $i>1$.
\end{lemma}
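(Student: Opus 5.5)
Write $m=\dim S$. I will show two inclusions: $\ker\xi\supseteq\bigoplus_{i>1}\Pi_S^i$ and $\ker\xi\subseteq\bigoplus_{i>1}\Pi_S^i$. Combined with the decomposition $\Pi_S=\bigoplus_{i=1}^{m}\Pi_S^i$ of \Cref{thm:pispectraldecomp}, the second inclusion amounts to showing that $\xi$ is injective on $\Pi_S^1$. I will use induction on $m$. The base case $m=1$ is handled directly: $\Pi_S$ is then just segments modulo translation and subdivision, so $\Pi_S=\Pi_S^1\cong\mathbb{Q}$ via length. $\widetilde V_S$ is $V_S$ modulo points, which is again $\mathbb{Q}$ via length, and $\xi$ is the identity.

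\emph{First inclusion.} The point is that $\xi$ intertwines $\lambda_k$ with ordinary scalar multiplication by $k$ in $V_S$, because $\xi(kX)=(kX,0)=k(X,0)$. So if $x\in\Pi_S^i$, then $k\,\xi(x)=\xi(\lambda_k x)=k^i\xi(x)$. Taking $k=2$ and $i>1$ gives $(2^i-2)\xi(x)=0$, and since we work over $\mathbb{Q}$ this forces $\xi(x)=0$.

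\emph{Second inclusion.} Let $x\in\Pi_S^1$ with $\xi(x)=0$ in $\widetilde V_S$. I want to apply \Cref{thm:pivanishingcondition}.
\begin{itemize}
\item The volume condition is automatic when $m\ge2$, since elements of $\Pi_S^1$ have volume $0$.
\item For the face condition, fix $Q$ and $v$. The map $h_{Q,v}$ commutes with the $\lambda_k$, so $h_{Q,v}(x)\in\Pi_Q^1$.
\item Next I compute $\xi_Q(h_{Q,v}(x))$ using the compatibility $h_{Q,v}\circ\xi=\xi\circ h_{Q,v}$ and the ``only if'' direction of \Cref{lem:lowerdimcharacterization}. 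Applied to a representative pair with $\xi(x)=0$ in $\widetilde V_S$, that direction says the combination $(F_1+G_2,F_2+G_1)$ vanishes in $\widetilde V_Q$.
\item That combination is exactly $\xi_Q(h_{Q,v}(x))$, provided the two opposite faces are both counted with the correct signs in $h_{Q,v}$.
\item Since $\dim Q=m-1$, the induction hypothesis says $\ker\xi_Q\cap\Pi_Q^1=0$, so $h_{Q,v}(x)=0$ in $\Pi_Q$.
\end{itemize}
\Cref{thm:pivanishingcondition} then gives $x=0$. The case $m=2$ lands in $Q$ of dimension $1$, where the base case applies.

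The main obstacle is the bookkeeping in the face step. I need to check that $h_{Q,v}$ on $V_S$ (and on $\widetilde V_S$) really sends $(X,Y)$ to the pair $(F_1+G_2,F_2+G_1)$, with the sign convention matching $h_{Q,v}$ on $\mathcal B_S$. I also need to check that it descends to $\widetilde V_S$, meaning it kills lower-dimensional polytopes, which is precisely the content of the ``only if'' argument.

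A second point is that faces which are not full-dimensional in $Q$ are set to $0$ in $\mathcal B_Q$, while in $V_Q$ they are lower-dimensional. These two conventions must be reconciled: such faces vanish in $\widetilde V_Q$, so the identity holds after passing to $\widetilde V_Q$. Nothing more is needed, since the induction is carried out on $\widetilde V_Q$.
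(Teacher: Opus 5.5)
Your proposal is correct and follows the paper's proof essentially step for step. The summands of degree $>1$ lie in the kernel because $\xi\circ\lambda_k=k\,\xi$, and injectivity on $\Pi_S^1$ is proved by induction on $\dim S$, using the ``only if'' half of \Cref{lem:lowerdimcharacterization}, the compatibility $h_{Q,v}\circ\xi=\xi\circ h_{Q,v}$, the vanishing of volume on $\Pi_S^1$, and \Cref{thm:pivanishingcondition}; the bookkeeping you flag about signs and non-full-dimensional faces is what the paper asserts without detail, and resolving it by passing to $\widetilde V_Q$ is fine.
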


\begin{proof}
First, we notice that every degree $>1$ component of $\Pi_S$ is in the kernel of $\xi$. This is because the dilation $\lambda_k$ commutes with multiplication by $k$ in $\widetilde{V}_S$. That is, $\xi(\lambda_k(x))=k\xi(x)$.

So it remains to show that this is the entire kernel, i.e. that $\xi$ is injective on $\Pi_S^1$. This is by induction on $\text{dim}(S)$. The base case is $\text{dim}(S)=1$. In this case, $\Pi_S=\Pi_S^1$ and it is easy to see that $\xi:\Pi_S\rightarrow \widetilde{V}_S$ is an isomorphism.

Now let $\text{dim}(S)\ge 2$. Let $x\in \Pi_S^1$ with $\xi(x)=0$. Represent $x$ as a sum of polytopes $A_1+\dots+A_k-B_1-\dots-B_j$. So $\xi(x)=(A_1+\dots+A_k, B_1+\dots+B_k)=0$ in $\widetilde{V}_S$. 

The ``only if'' part of \Cref{lem:lowerdimcharacterization} exactly says that $h_{Q, v}((A_1+\dots+A_k, B_1+\dots+B_k))=0$ for all $Q, v$. So $\xi(h_{Q, v}(x))=h_{Q,v}(\xi(x))=0$ for all $Q, v$. 

Now $h_{Q,v}(x)$ is in $\Pi_Q^1$ for all $Q, v$, so by inductive assumption, $\xi(h_{Q,v}(x))=0$ implies $h_{Q,v}(x)=0$. We also know that $\text{Vol}_S(x)=0$. So by \Cref{thm:pivanishingcondition}, $x=0$ in $\Pi_S$. By induction, this proves the claim. 
\end{proof}

We know that $\xi$ is surjective, so \Cref{lem:xikernel} shows that $\xi: \Pi_S^1\rightarrow \widetilde{V}_S$ is an isomorphism. We can now complete the proof of \Cref{lem:lowerdimcharacterization}:

\begin{proof}[Proof of the ``if'' part of \Cref{lem:lowerdimcharacterization}]
The conditions of the lemma say exactly that $h_{Q, v}((X, Y))=0$ for every codimension $1$ subspace $Q\subseteq \mathbb{Q}^n$ and $v\notin Q$. Since $\xi$ is an isomorphism $\Pi_S^1\rightarrow \widetilde{V}_S$, there is some $x\in \Pi_S^1$ such that $\xi(x)=(X, Y)$. For every $Q, v$, we have $\xi(h_{Q, v}(x))=h_{Q, v}(\xi(x))=h_{Q, v}((X, Y))=0$. Since $h_{Q, v}(x)\in \Pi_S^1$, we have $h_{Q, v}(x)=0$ by \Cref{lem:xikernel}. Since $n\ge 2$, $\text{Vol}_{\mathbb{Q}^n}(x)=0$, so $x=0$ in $\Pi_S$. So $(X, Y)=\xi(0)=0$ in $\widetilde{V}_S$.
\end{proof}

\Cref{lem:topinjectivity} follows immediately from \Cref{lem:lowerdimcharacterization}.

\section{ The Chain Complex of Parallelepipeds }\label{sec:parallelepipeds}

Let $T=\{t_1, \dots, t_r\}$ be the set of edge directions of the $n$ simplex (with an arbitrary direction chosen for each). So $r={n+1\choose 2}$. Let $\Gamma=\mathrm{Span}_\mathbb Z\{t_1, \ldots, t_r\}$ be the lattice in $\mathbb{Q}^n$ generated by the $t_i$. 

The ring $\mathbb{Q}[\Gamma]$ consists of formal $\mathbb{Q}$-linear combinations of elements of $\Gamma$, where multiplication is given by the group law on $\Gamma$. We write $g_1, \dots, g_r$ for the generators of $\mathbb{Q}[\Gamma]$ corresponding to $t_1, \dots, t_r$. So, for example, $g_ig_j$ is the element of $\mathbb{Q}[\Gamma]$ corresponding to $t_i+t_j$. The element $g_i+g_j$ is viewed formally as a superposition of $t_i$ and $t_j$. 

We define a chain complex $C_{\bullet}^n$, which consists of formal oriented parallelepipeds with edge directions in $T$ and vertices in $\Gamma$, with the differential defined in the standard way. 

Let $E=\mathbb{Q}^r$ with basis $e_1, \dots, e_r$. We can view $C_k^n$ as the $\mathbb{Q}$-module $\mathbb{Q}[\Gamma]\otimes_{\mathbb{Q}}\bigwedge^{k}_{\mathbb{Q}}E$. This is generated by elements of the form $\gamma \otimes e_{i_1}\wedge\dots\wedge e_{i_k}$ ($\gamma\in \Gamma$), which we view as corresponding to the parallelepiped:

\[\gamma+[0, t_{i_1}]+\dots+[0, t_{i_k}]\]

The boundary map on $C_k^n$ is then defined as the standard cubical boundary. We have $d(e_i)=g_i-1$, this extends to higher orders using the graded Leibniz rule:

\[d(\gamma\otimes e_{i_1}\wedge \dots \wedge e_{i_k})=\sum_{j=1}^{k}(-1)^{j-1}\gamma(g_{i_j}-1) \otimes x_{i_1}\wedge \dots \wedge \widehat{x_{i_{j}}}\wedge \dots\wedge x_k\]

The $\rho:C_{k}^{n}\rightarrow P_{k+1}^n$ is defined as follows: let $x=\gamma\otimes e_{i_1}\wedge \dots \wedge e_{i_k}$ be a parallelepiped in $C_k^n$. Let $X$ be the convex hull of $0$ and the vertices of $x$, which is a polytope in $S=\text{span}(\gamma, t_{i_1}, \dots, t_{i_k})$. If $\gamma, t_{i}, \dots, t_{i_k}$ are linearly independent, then the cone is non-degenerate and $\rho(x)$ is the class corresponding to $X$ in $S$, with the orientation of $S$ given by $t_{i_1}\wedge\dots\wedge t_{i_k}\wedge \gamma$. Otherwise, the cone is degenerate and $\rho(x)=0$. 

\begin{lemma}
The map $\rho$ is well-defined.
\end{lemma}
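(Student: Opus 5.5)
To show that $\rho$ is well-defined, we have to check that its defining formula respects the presentation of $C_k^n$. We take $C_k^n = \mathbb{Q}[\Gamma]\otimes\bigwedge^k E$ and use the basis $\gamma\otimes e_{i_1}\wedge\dots\wedge e_{i_k}$ with $\gamma\in\Gamma$ and $i_1<\dots<i_k$. Defining $\rho$ on this basis and extending $\mathbb{Q}$-linearly gives a map, so the only relations to verify are those of the exterior algebra.

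\textbf{Antisymmetry.} Permuting the indices $i_1,\dots,i_k$ does not change the parallelepiped $\gamma+[0,t_{i_1}]+\dots+[0,t_{i_k}]$, so the cone $X$ and the span $S$ are unchanged. The orientation $t_{i_1}\wedge\dots\wedge t_{i_k}\wedge\gamma$ changes by the sign of the permutation. In $P^n_{k+1}$, reversing the orientation of $S$ is the same as negating the class. Hence the assignment is compatible with antisymmetry.

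\textbf{Repeated index.} If $i_j=i_l$ for some $j\neq l$, the element is $0$ in $\bigwedge^k E$. The vectors $\gamma,t_{i_1},\dots,t_{i_k}$ are then linearly dependent, so the formula gives $\rho=0$ as required.

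\textbf{The degenerate clause.} The definition sets $\rho(x)=0$ whenever $\gamma,t_{i_1},\dots,t_{i_k}$ are dependent. We should check that this is not an arbitrary choice but agrees with the natural class of the cone. In that case the convex hull of $0$ and the parallelepiped lies in a subspace of dimension at most $k$, and so it is zero in $P^n_{k+1}$. This matters later, because it is what lets $\rho$ commute with the differentials.

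\textbf{The nondegenerate clause.} When $\gamma,t_{i_1},\dots,t_{i_k}$ are independent, $S$ is a $(k+1)$-dimensional rational subspace. The vertices $\gamma+\sum_{j\in J}t_{i_j}$ and the apex $0$ are rational, so $X$ is a rational polytope that is full-dimensional in $S$, and $(X,0)\in\widetilde{V}_S$ is a legitimate generator of $P_{k+1}^n$. This case therefore needs no further checking.

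The only step needing care is the orientation bookkeeping: the sign in the antisymmetry check must match the sign convention in $P_{k+1}^n$. That follows directly from the relation, imposed in the definition of $P^n_{k+1}$, that reversing the orientation of a subspace negates the class.
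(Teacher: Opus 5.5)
Your proposal is correct and takes the same approach as the paper, which also reduces well-definedness to compatibility with the antisymmetry relation in $\bigwedge^k E$. That relation holds because permuting the edge directions leaves the cone unchanged, while the orientation $t_{i_1}\wedge\dots\wedge t_{i_k}\wedge\gamma$ changes by the sign of the permutation, and in $P_{k+1}^n$ reversing orientation negates the class; your extra checks (repeated indices, the degenerate and nondegenerate clauses) just spell out what the paper calls straightforward.
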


\begin{proof}
The only thing to check is that $\rho$ respects the relation:

\[\gamma\otimes e_{i_1}\wedge\dots \wedge e_{i_j}\wedge \dots\wedge e_{i_{\ell}}\wedge \dots\wedge e_{i_k}=-\gamma\otimes e_{i_1}\wedge\dots \wedge e_{i_\ell}\wedge \dots\wedge e_{i_{j}}\wedge \dots\wedge e_{i_k}\]

\noindent in $C_k^n$. This is straightforward. 
\end{proof}

We now claim that $\rho$ commutes with the differentials on $C_{\bullet}^n$ and $P_{\bullet}^n$ in degrees at least $2$:

\begin{lemma}\label{lem:rhodcommutes}
If $x\in C_{k}^n$ for $k\ge 2$, then $d(\rho(x))=\rho(d(x))$.
\end{lemma}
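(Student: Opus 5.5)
By linearity, it suffices to check the identity on a single generator $x=\gamma\otimes e_{i_1}\wedge\dots\wedge e_{i_k}$ with $k\ge 2$. Let $B=\gamma+[0,t_{i_1}]+\dots+[0,t_{i_k}]$ be the corresponding parallelepiped, and let $X=\conv(\{0\}\cup B)$ be the paramid, with $S=\operatorname{span}(\gamma,t_{i_1},\dots,t_{i_k})$.

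I would first dispose of the degenerate case, where $\gamma,t_{i_1},\dots,t_{i_k}$ are linearly dependent, so that $\rho(x)=0$. Then I have to show $\rho(d(x))=0$. Each term of $d(x)$ is a facet parallelepiped $\gamma'+\sum_{\ell\ne j}[0,t_{i_\ell}]$ with $\gamma'\in\{\gamma,\gamma+t_{i_j}\}$. The term is nonzero under $\rho$ only if $\gamma',t_{i_\ell}$ ($\ell\neq j$) are independent.
\begin{itemize}
\item If the $t_{i_\ell}$ themselves are dependent, every term vanishes.
\item Otherwise the $t$'s are independent and $\gamma$ lies in their span. Then for each $j$ the two cones over the opposite facets are nondegenerate together or not. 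Both lie in the same $k$-dimensional space $\operatorname{span}(\gamma,t_{i_\ell}:\ell\ne j)$, which contains $t_{i_j}$ exactly when either is nondegenerate.
\end{itemize}
In that case the two cones appear with opposite signs, and their sum is the class of $\conv(0,F)-\conv(0,F+t_{i_j})$ inside a $k$-dimensional space. I expect this to vanish by subdivision: $\conv(0,B)$ is then a flat $k$-dimensional polytope, and the valuation property (\Cref{lem:additivity}) applied to its decomposition into cones over its facets gives zero modulo lower-dimensional pieces. I flag this subcase as needing care; it may be cleaner to argue via $d$ applied to the flat class $(\conv(0\cup B),0)$, which is $0$ in $P_{k+1}$, and to note that its boundary computation mirrors the nondegenerate one below.

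Now assume nondegeneracy, so that $X$ is a $(k+1)$-dimensional paramid in $S$ and $\rho(x)=(X,0)$. I compute $d(\rho(x))$ by listing the facets of $X$. These are the base $B$ (with outward normal pointing away from $0$) and, for each facet $F$ of $B$, the cone $\conv(0\cup F)$. The base $B$ is a $k$-dimensional parallelepiped, i.e. a Minkowski sum of $k\ge 2$ segments, each of dimension $<k$, so its class in $P_k^n$ is zero. This is exactly where $k\ge 2$ is used. The facets of $B$ are the $2k$ pieces $\gamma+\epsilon t_{i_j}+\sum_{\ell\ne j}[0,t_{i_\ell}]$ with $\epsilon\in\{0,1\}$, and the cone over each is precisely $\rho$ of the corresponding term of $d(x)=\sum_j(-1)^{j-1}(\gamma g_{i_j}-\gamma)\otimes\cdots$. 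The cones already pass through $0$, so the translation step in the definition of $d$ on $P$ is trivial.

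The main obstacle is the sign and orientation bookkeeping. For each $j$ and $\epsilon$, I must check that the orientation induced on $\conv(0\cup F)$ as a facet of $X$ agrees with the orientation assigned by $\rho$, namely $t_{i_1}\wedge\cdots\widehat{t_{i_j}}\cdots\wedge t_{i_k}\wedge\gamma'$, up to the sign $(-1)^{j-1}$ for $\epsilon=1$ and $(-1)^{j}$ for $\epsilon=0$. My plan is to take the outer normal $v$ of that facet. For $\epsilon=1$, $v$ has positive pairing with $t_{i_j}$ modulo the facet span; for $\epsilon=0$ it has negative pairing. I would then verify that $v\wedge A$ matches $t_{i_1}\wedge\dots\wedge t_{i_k}\wedge\gamma$ after moving $t_{i_j}$ to the front, which costs exactly $(-1)^{j-1}$. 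Replacing $\gamma$ by $\gamma+t_{i_j}$ in the last slot does not change the wedge, because $t_{i_j}$ is then wedged against itself, so the two facets $\epsilon=0,1$ pick up opposite signs as required. This is a standard cubical-orientation computation, and I would carry it out once for $j=1$ and transfer it to general $j$ by permutation.
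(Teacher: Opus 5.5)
Your nondegenerate case is correct and matches the paper's computation: the base $B$ is zero in $P^n_k$ because $k\ge 2$, and the $2k$ cones over the facets of $B$ are $\rho$ of the terms of $d(x)$ with the signs you state. The gap is in the degenerate case, where both bullets rest on false claims.

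Your first bullet fails: dependence of $t_{i_1},\dots,t_{i_k}$ does not make the terms of $\rho(d(x))$ vanish. Suppose these vectors have rank exactly $k-1$ and $\gamma\notin W=\operatorname{span}(t_{i_1},\dots,t_{i_k})$. Take any $j$ with $c_j\ne 0$ in the unique relation $\sum_\ell c_\ell t_{i_\ell}=0$. Then $\gamma',t_{i_\ell}$ ($\ell\ne j$) are independent, so both cones are nondegenerate and $k$-dimensional. This case genuinely occurs, because $T$ has $\binom{n+1}{2}>n$ elements. For example, take $n\ge 3$, $k=3$, the three edges of a triangular $2$-face, and $\gamma$ an edge vector off that plane. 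Here $\conv(0\cup B)$ is a $k$-dimensional pyramid over the $(k-1)$-dimensional zonotope $B$. The surviving terms of $d(x)$ are the tiles of the two fine zonotopal tilings of $B$: the upper and lower facets of the $k$-cube with respect to $c$ (for $k=3$, the two rhombus tilings of a hexagon). Relative to a fixed orientation of $S$, one tiling has coefficient $+1$ and the other $-1$. Their cancellation needs a real argument: coning each tiling from $0$ subdivides $\conv(0\cup B)$, and then \Cref{lem:additivity} applies.

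Your second bullet is also wrong as stated. Suppose the $t$'s are independent and $\gamma=\sum_\ell c_\ell t_{i_\ell}$. The cone over $\gamma+\sum_{\ell\ne j}[0,t_{i_\ell}]$ degenerates if and only if $c_j=0$, and the opposite cone if and only if $c_j=-1$, so they are not nondegenerate together. Moreover, $\conv(0,F)-\conv(0,F+t_{i_j})$ does not vanish on its own. For $k=2$, $t_{i_1}=e_1$, $t_{i_2}=e_2$, $\gamma=e_1+e_2$, the triangles $\conv(0,e_1+e_2,e_1+2e_2)$ and $\conv(0,2e_1+e_2,2e_1+2e_2)$ differ modulo lower-dimensional pieces. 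Only the first has an edge parallel to $e_1+2e_2$, so \Cref{lem:lowerdimcharacterization} separates them.

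Only the total sum cancels. The cones over facets of $B$ not visible from $0$ all carry one sign and subdivide $\conv(0\cup B)$. The visible ones carry the other sign and subdivide it only \emph{together with $B$ itself}. So you must add $B$ as a free term. This is legitimate because $B$ is a sum of $k\ge 2$ segments and hence zero in $P^n_k$, a second use of $k\ge 2$ that your sketch lacks.

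Your fallback does not help. The boundary of the flat class of $\conv(0\cup B)$ in $P_{k+1}$ is just two oppositely oriented copies of that same polytope, and it says nothing about the cones.

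The paper handles both situations at once. When $\dim\conv(0\cup B)=k$, the positively oriented pieces among the $F_j$, $G_j$ and the parallelepiped face $Z$ subdivide $\conv(0\cup B)$, and so do the negatively oriented ones.
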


\begin{proof}
It is sufficient to check the claim for $x=\gamma\otimes x_{i_1}\wedge \dots\wedge x_{i_k}$ a parallelepiped. In this case, we have:

\[d(x)=\sum_{j=1}^{k}(-1)^{j-1}\gamma (g_{i_j}-1)\otimes x_{i_1}\wedge \dots \wedge \widehat{x_{i_{j}}}\wedge \dots\wedge x_k\]

There are two cases to check, depending on whether $x$ is degenerate. First, we consider the case that $x$ is not degenerate. In this case, $\rho(x)$ is the class in $P_{k+1}^n$ of the polytope $X$, which is the convex hull of $0$ and $\gamma+[0, t_{i_1}]+\dots+[0,t_{i_k}]$. The orientation is given by $t_{i_1}\wedge\dots\wedge t_{i_k}\wedge \gamma$

$X$ has $2k+1$ facets. The face not containing $0$ is $\gamma+[0, t_{i_1}]+\dots+[0,t_{i_k}]$. This is a Minkowski sum of intervals, which have dimension $1<k$. So the contribution of this face to $d(\rho(x))$ is $0$.

For each $j\in \{1, \dots, k\}$, we have two faces $F_j$ and $G_j$ of $X$. $F_j$ the convex hull of $0$ and $\gamma+\sum_{\ell\ne j}[0, t_{i_\ell}]$ and $G_j$ is the convex hull of $0$ and $\gamma+t_{i_j}+\sum_{\ell\ne j}[0, t_{i_\ell}]$. 

Now $t_{i_j}$ points out from $G_j$ and $-t_{i_j}$ points out from $F_j$. So the orientation $A$ of $F_j$ is chosen so that $(-t_{i_j})\wedge A$ is the same as the orientation of $t_{i_1}\wedge\dots\wedge t_{i_k}\wedge \gamma$ of $X$. So $(-1)^{j} t_{i_1}\wedge\dots\wedge t_{i_{j-1}}\wedge t_{i_{j+1}}\wedge\dots\wedge t_{i_k}\wedge \gamma$ is a representation of the orientation of $F_j$. Similarly, the orientation of $G_j$ is represented by $(-1)^{j+1}t_{i_1}\wedge\dots\wedge t_{i_{j-1}}\wedge t_{i_{j+1}}\wedge\dots\wedge t_{i_k}\wedge \gamma$, which is equal to $(-1)^{j+1}t_{i_1}\wedge\dots\wedge t_{i_{j-1}}\wedge t_{i_{j+1}}\wedge\dots\wedge t_{i_k}\wedge (\gamma+t_{i_j})$. This shows that contributions of $F_j$ and $G_j$ to $d(\rho(x))$ are exactly:

\[\rho\left((-1)^{j-1}\gamma (g_{i_j}-1)\otimes x_{i_1}\wedge \dots \wedge x_{i_{j-1}}\wedge x_{i_{j+1}}\wedge \dots\wedge x_k\right)\]

So $d(\rho(x))=\rho(d(x))$ when the cone of $x$ is non-degenerate. We now move on to the degenerate case. Similarly to before, let $X$ be the convex hull of $0$ and $\gamma+\sum_{j}[0, t_{i_j}]$. Since the cone is degenerate, $X$ is at most $k$ dimensional. If $X$ is $\le k-1$ dimensional, then every face in $d(x)$ is degenerate, so $\rho(d(x))=0$. The remaining case is when $X$ is exactly $k$ dimensional. Let $S$ be the $k$-dimensional subspace containing $X$, and choose an arbitrary orientation of $S$.

In this case, define $F_j$ and $G_j$ as oriented polytopes in the same way as before. We also define a parallelepiped face $Z=\gamma+\sum_{j}[0, t_{i_j}]$. Since $Z$ is zero in $P_{k}^n$, it is clear that $\rho(d(x))$ is the sum of classes corresponding to the $F_j$, $G_j$ and $Z$.

Now each of the $F_j$, $G_j$ or $Z$ is either degenerate, has positive orientation (with respect to the orientation chosen for $S$), or has negative orientation. We claim that the positive-orientation faces give a subdivision of $X$, and similarly the negative orientation faces also give a subdivision of $X$. This is straightforward. 

So by \Cref{lem:additivity}, the contributions from the positively-oriented pieces cancel with the contributions from the negatively-oriented pieces, so $\rho(d(x))=0$, as required.
\end{proof}

The image of $\rho$ in $P_{\bullet}^n$ lies in the subcomplex $\Lambda_{\bullet}^n$ generated by paramids. Note that parallelepipeds in $C_{\bullet}^n$ are restricted to having edge directions in $T$, so being in the image of $\rho$ is more restrictive than just being in $\Lambda_{\bullet}^n$. 

\subsection{ The Homology of $C_{\bullet}^n$ } \label{sec:homology}

We can actually give a nice characterization of the homology of $C_{\bullet}^n$. Let $\hat{\Gamma}=\mathbb{Z}^r$, generated by $\hat{t}_1, \dots, \hat{t}_r$, and let $\pi$ be the projection $\hat{\Gamma}\rightarrow \Gamma$ given by $\hat{t}_i\rightarrow t_i$. 

Let $\hat{C}_{\bullet}^n$ be the chain complex of parallelepipeds (in $\mathbb{R}^r$) with vertices in $\hat{\Gamma}$ and edge direction in $\{\hat{t}_1, \dots, \hat{t}_r\}$. Now $\hat{C}_{\bullet}^n$ is a cubicalization of $\mathbb{R}^r$, and $C_{\bullet}^n$ is obtained from $\hat{C}_{\bullet}^n$ by identifying parallelepipeds that differ by a translation in $\ker(\pi)$.

Now $\ker(\pi)$ is a subgroup of $\mathbb{Z}^r$, so by the structure theorem for finitely generated abelian groups, it is freely generated by elements $a_1, \dots, a_{r-n}$ of $\mathbb{Z}^r$. The $t_i$ are edge vectors of the simplex, and it is not too hard to find such a set of generators explicitly.

So $C_{\bullet}^n$ is a cubicalization of a topological space which is the quotient of $\mathbb{R}^r$ by the action of the group generated by $a_1, \dots, a_{r-n}$. This space is homeomorphic to a product of $\mathbb{R}^{n}$ and $r-n$ circles. So $H_{k}(C_{\bullet}^n)$ is a direct sum of ${r-n\choose k}$ copies of $\mathbb{Q}$.

We didn't need to use anything special about $t_1, \dots, t_r$ here. In general, let $T=\{t_1, \dots, t_r\}$ be a set of directions generating an abelian group $\Gamma$, and let $C_{\bullet}^{T}$ be the chain complex of parallelepipeds with edges directions in $T$ and vertices in $\Gamma$. Then the argument above shows that $C_{\bullet}^{T}$ is a cubicalization of a topological space which is the quotient of $\mathbb{R}^r$ by the set of relations among the $t_i$. In particular, if the $t_i$ \emph{freely} generate $\Gamma$, then $C_{\bullet}^{T}$ has the homology of a point. This observation will be useful for filling cycles in $C_{\bullet}^n$. 

This last observation is also a consequence of a more general idea about the Koszul resolution of a regular sequence, see \Cref{sec:koszul}.

\subsection{ Filling Boundaries in $C_{\bullet}^n$ }

Let $D_{\bullet}^n$ be $C_{\bullet}^n$ modulo translations. Formally, we can think of $D_{k}^n$ as being equal to $\bigwedge_{\mathbb{Q}}^k E$ (recall $E=\mathbb{Q}^r$, generated by $e_1, \dots, e_r$). Let $\varepsilon$ be the map $C_{\bullet}^n\rightarrow D_{\bullet}^n$ that forgets translation in $C_{\bullet}^n$. Using $C_{k}^n=\mathbb{Q}[\Gamma]\otimes \bigwedge_{\mathbb{Q}}^k E$, $\varepsilon$ can be defined using the map $\mathbb{Q}[\Gamma]\rightarrow \mathbb{Q}$ that sends each $\gamma\in \Gamma$ to $1$. 

We note that every boundary $x$ in $C_{k}^n$ has $\varepsilon(x)=0$. Indeed, opposite pairs of faces of a parallelepiped cancel after taking $\varepsilon$. So this gives a necessary condition for a cycle $x$ to be a boundary. We claim that it is sufficient:

\begin{lemma}\label{lem:aug}
A cycle $x\in C_{k}^n$ is a boundary if and only if $\varepsilon(x)=0$
\end{lemma}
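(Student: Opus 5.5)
The plan is to use the Koszul structure of $C_{\bullet}^n$ and split the edge directions into a free part and the rest, as in the hexagon example. The easy direction, that boundaries have $\varepsilon=0$, was already observed above. For the converse, first choose a subset $F\subseteq T$ that freely generates $\Gamma$. The natural choice is the $n$ edges $t_1,\dots,t_n$ emanating from one fixed vertex of the simplex, possibly with signs flipped, which does not matter. Then $\mathbb{Q}[\Gamma]\cong\mathbb{Q}[g_1^{\pm1},\dots,g_n^{\pm1}]$ is a Laurent polynomial ring, and every other $t_j$ is a $\mathbb{Z}$-combination of $t_1,\dots,t_n$. As a $\mathbb{Q}[\Gamma]$-module complex, $C_{\bullet}^n$ is the Koszul complex $K(g_1-1,\dots,g_r-1)$. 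It therefore factors as a tensor product over $\mathbb{Q}[\Gamma]$:
\[
C_{\bullet}^n\cong K_F\otimes_{\mathbb{Q}[\Gamma]}K_{T\setminus F},
\]
where $K_F$ is spanned by wedges of $e_i$ with $t_i\in F$, and $K_{T\setminus F}$ is spanned by wedges of the remaining $e_j$.

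The key step is to show that $K_F$ is acyclic in positive degrees, with $H_0(K_F)=\mathbb{Q}$ via $\varepsilon$. This is exactly the observation at the end of \Cref{sec:homology}: when the directions freely generate $\Gamma$, the complex $C^F_{\bullet}$ is a cubicalization of $\mathbb{R}^n$ and has the homology of a point. Algebraically, the same fact says that $g_1-1,\dots,g_n-1$ is a regular sequence in the Laurent ring. So $\varepsilon:K_F\to\mathbb{Q}$ is a quasi-isomorphism between complexes of $\mathbb{Q}[\Gamma]$-modules.

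Next, tensor this quasi-isomorphism with the bounded complex $K_{T\setminus F}$ of free $\mathbb{Q}[\Gamma]$-modules. Filter by the number of $T\setminus F$ directions and run the standard spectral-sequence or induction-on-filtration argument. Since each graded piece is a direct sum of copies of $K_F$ (up to shift), this shows that
\[
C_{\bullet}^n\longrightarrow \mathbb{Q}\otimes_{\mathbb{Q}[\Gamma]}K_{T\setminus F}=\textstyle\bigwedge^{\bullet}\mathbb{Q}^{\,r-n}
\]
is a quasi-isomorphism. The target carries the zero differential, because each $g_j-1$ maps to $0$. Concretely, this map is $\varepsilon$ followed by the projection $\bigwedge E\to\bigwedge(T\setminus F)$ that kills every wedge containing some $e_i$ with $t_i\in F$. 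Consequently a cycle $x$ is a boundary if and only if its image under this composite vanishes. That holds whenever $\varepsilon(x)=0$, which proves the lemma. The count also recovers $H_k\cong\mathbb{Q}^{\binom{r-n}{k}}$, consistent with \Cref{sec:homology}.

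The main obstacle is making the filtration argument explicit, in order to match the paper's elementary style and avoid citing homological machinery. I would run it as a descending induction on the number $m$ of non-free directions among the terms of $x$, as sketched in the overview. Write $x=\sum_I x_I\wedge e_I$ with $I\subseteq T\setminus F$ and $x_I\in K_F$, and look at the terms of maximal $|I|$. The cycle condition forces each such $x_I$ to be a cycle in $K_F$ modulo lower-filtration terms. If $x_I$ has positive $K_F$-degree, acyclicity lets us write $x_I=d(y_I)$, and subtracting $d(y_I\wedge e_I)$ removes that term. If $x_I$ has $K_F$-degree $0$, then $\varepsilon(x_I)$ is precisely the coefficient of $e_I$ in $\varepsilon(x)$, which is $0$ by assumption. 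Hence $x_I$ lies in the augmentation ideal, which is generated by the $g_i-1$ with $t_i\in F$, so again $x_I=d(y_I)$ for some $y_I$. The care needed is in the sign bookkeeping and in checking that each correction only introduces terms of strictly lower filtration, so that the induction terminates.
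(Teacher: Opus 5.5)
Your proposal is correct and is essentially the paper's proof. The paper also splits off a free generating set $A=\{t_1,\dots,t_n\}$ (the edges at one vertex), writes $C_\bullet^n\cong K_\bullet\otimes T_\bullet$ with $d=\delta+\partial$, and removes the components $x_0,x_1,\dots$ in order of increasing $K$-degree, which is the same as your decreasing number of non-free directions. As in your sketch, $\varepsilon(x_0)=0$ puts $x_0$ in the ideal $(g_1-1,\dots,g_n-1)$, and exactness of $K_\bullet$ handles the higher components. Your quasi-isomorphism framing is just the packaged form of this filtration argument; it also yields the slightly sharper criterion that only the projection of $\varepsilon(x)$ onto $\bigwedge(T\setminus F)$ matters.
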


\begin{proof}
The ``only if'' direction is clear. So suppose that $x$ is a cycle with $\varepsilon(x)=0$. We need to show that $x$ is a boundary.

Without loss of generality, suppose that $t_1, \dots, t_n$ freely generate the rank-$n$ free abelian group $\Gamma$. For example, we can take $t_1, \dots, t_n$ to be the set of edges of the simplex containing one distinguished vertex. Write $A=\{t_1, \dots, t_n\}$ and $B=\{t_{n+1}, \dots, t_r\}$. Let $K_{\bullet}$ be the chain complex of parallelepipeds with edge directions in $A$ and vertices in $\Gamma$, and similarly let $T_{\bullet}$ be the chain complex of parallelepipeds with edge direction in $B$ and vertices in $\Gamma$. Write $\delta$ for the differential on $K_{\bullet}$ and $\partial$ for the differential on $T_{\bullet}$.

We claim that $C_{\bullet}^n\cong K_{\bullet}\otimes T_{\bullet}$. Formally, this says:

\[C_{k}^n\cong \bigoplus_{p+q=k}K_p\otimes_{\mathbb{Q}[\Gamma]} T_{q}\]

In other words, every parallelepiped in $C_k^n$ can be written as a formal product of one in $K_{\bullet}$ and one in $T_{\bullet}$. The claim is clear from $K_{p}=\mathbb{Q}[\Gamma]\otimes \bigwedge_{\mathbb{Q}}^p\langle e_1, \dots, e_n\rangle$ and $T_q=\mathbb{Q}[\Gamma]\otimes \bigwedge_{\mathbb{Q}}^p\langle e_{n+1}, \dots, e_r\rangle$. We can think of $K_p\otimes T_q$ as consisting of parallelepipeds with $p$ edge directions in $A$ and $q$ edge directions in $B$. 

We extend $\delta$ and $\partial$ to maps $\delta:K_{p+1}\otimes T_{q}\rightarrow K_p\otimes T_{q}$ and $\partial:K_{p}\otimes T_{q+1}\rightarrow K_p\otimes T_{q}$. In order for these to be compatible with the differential on $C_{\bullet}^n$, the map $\partial:K_{p}\otimes T_{q+1}\rightarrow K_p\otimes T_{q}$ has a $(-1)^p$ factor compared to $\partial:T_{q+1}\rightarrow T_q$. Note that $\delta\partial+\partial\delta=0$, and the differential $d$ on $C_{\bullet}^n$ is identified with $\delta+\partial$. 

Now $x$ is a cycle in $C_k^n$. For $i=0, \dots, k$, let $x_i$ be the component of $x$ in $K_i\otimes T_{k-i}$. The strategy will be to inductively add boundaries that zero $x_i$, starting with $x_0$. Note that $\varepsilon$ identifies sets of parallelepipeds with the same set of edge directions, so $\varepsilon(x)=0$ implies that $\varepsilon(x_i)=0$ for each $x_i$.

So $x_0$ is a class in $K_0\otimes T_k$ with $\varepsilon(x_0)=0$. Note that $K_0=\mathbb{Q}[\Gamma]$. The image of $\delta:K_1\rightarrow K_0$ is the ideal $I$ in $\mathbb{Q}[\Gamma]$ generated by $g_1-1, \dots, g_n-1$. We claim that $x_0\in I\otimes T_k$. Indeed, the parallelepipeds in $x_0$ can be grouped in pairs that are orientation-reversed translates of each other, and each pair is an element of $I\otimes T_k$. This says that $x_0$ is a $\delta$-boundary. Let $y_1\in K_1\otimes T_k$ with $\delta(y_1)=x_0$. We can subtract $d(y_1)=\delta(y_1)+\partial(y_1)$ from $x$ to get a new cycle with no component in $K_0\otimes T_{k}$. Note that $d(y_1)$ is a cycle (since $d^2=0$), so $x-d(y_1)$ is a cycle.

Now proceed by induction: suppose that $x_i=0$ for $i=0, \dots, j-1$ for some $j\ge 1$. We want to add a boundary to $x$ that zeros $x_j$. Now $x_j\in K_j\otimes T_{k-j}$, and $\delta(x_j)=d(x_j)$ since no other non-zero $x_i$ has any boundary in $K_{j-1}\otimes T_{k-j}$. Since $x_j$ is a cycle, $\delta(x_j)=0$. By exactness of $K_{\bullet}$, there is some $y_{j+1}\in K_{j+1}\otimes T_{k-j}$ with $\delta(y_{j+1})=x_j$. Subtracting $d(y_{j+1})=\delta(y_{j+1})+\partial(y_{j+1})$ gives a new cycle with no $K_j\otimes T_{k-j}$ component. 

By induction, we can eventually zero all the terms in $x$. Since we have only subtracted boundaries from $x$, this proves that $x$ is a boundary, as required.
\end{proof}

\begin{remark}
In \Cref{sec:auginjective}, we give a shorter, but less concrete proof of \Cref{lem:aug}. 
We are also aware of a \emph{more} explicit proof.
We have decided not to include this since the combinatorial construction is explicit.
\end{remark}
\section{ Filling Simplices}

In order to show that the $n$-simplex is in $\Lambda_n^n$, we want to find a cycle $x$ in $C_{n-2}^n$ with $\varepsilon(x)=0$ such that $\rho(x)=d(\Delta^n)$ is the boundary of the $n$-simplex. By \Cref{lem:aug}, $x$ is then the boundary of some $y\in C_{n-1}^n$, and by \Cref{lem:rhodcommutes}, $d(\rho(y))=\rho(d(y))=d(\Delta^n)$. By \Cref{lem:topinjectivity}, $\rho(y)$ is the class of $\Delta^n$ in $P_{n}^n$. Since the image of $\rho$ is in $\Lambda_{\bullet}^n$, this will show that $\Delta^n$ is in $\Lambda_{\bullet}^n$. 

For a subset $S\subseteq \{1, \dots, n+1\}$ of size $n$, let $F_S$ be the $n-1$-dimensional subspace of $\mathbb{Q}^{n}$ spanned by edge directions contained in the face of $\Delta^n$ that is the convex hull of vertices in $S$. We say that an element $x$ of $C_k^n$ satisfies the \emph{face direction condition} if, for every parallelepiped $p$ with non-zero coefficient in $x$, all the vertices of $p$ are contained in $F_S$ for some $S$.

First, we show how to realize condition $\varepsilon=0$:

\begin{lemma}\label{lem:cyclecorrection}
Let $x$ be a cycle in $C_{k}^n$ with $k\ge 2$. Then there is a cycle $x'\in C_k^n$ with $\varepsilon(x')=0$ and $\rho(x')=\rho(x)$. Furthermore, if $x$ satisfies the face direction condition, then $x'$ does as well.
\end{lemma}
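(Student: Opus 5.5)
The plan is to use the doubling map $S: C_{\bullet}^n\rightarrow C_{\bullet}^n$ from the overview and set
\[x'=\frac{2^k x - S x}{2^k-2}.\]
Define $S$ on generators by sending the parallelepiped $\gamma\otimes e_{i_1}\wedge\dots\wedge e_{i_k}$ to the formal sum of the $2^k$ translates that tile $2\gamma+[0,2t_{i_1}]+\dots+[0,2t_{i_k}]$. Explicitly,
\[S(\gamma\otimes e_{i_1}\wedge\dots\wedge e_{i_k})=\gamma^2(1+g_{i_1})\cdots(1+g_{i_k})\otimes e_{i_1}\wedge\dots\wedge e_{i_k},\]
where $\gamma^2$ denotes the group element $2\gamma\in\Gamma$. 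Since $2\gamma$ and the sums of the $t_{i_j}$ lie in $\Gamma$, this is a well-defined $\mathbb{Q}$-linear map.

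The first step is to check that $S$ is a chain map. On a generator, $d\circ S$ produces terms $\gamma^2\prod_{\ell\ne j}(1+g_{i_\ell})(1+g_{i_j})(g_{i_j}-1)$. Meanwhile $S\circ d$ sends the factor $\gamma(g_{i_j}-1)$ to $\gamma^2(g_{i_j}^2-1)\prod_{\ell\ne j}(1+g_{i_\ell})$. These agree because $(1+g)(g-1)=g^2-1$. Geometrically this is the statement that the boundary of the subdivided doubled box is the subdivided doubled boundary. Consequently, if $x$ is a cycle then so are $Sx$ and $x'$.

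The second step is augmentation. Applying $\varepsilon$ sends each $g$ to $1$, so $\varepsilon(Sx)=2^k\varepsilon(x)$ and hence $\varepsilon(2^kx-Sx)=0$.

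The third step is $\rho\circ S=2\rho$ on $C_k^n$, and I expect this to be the main point needing care. Take a generator $p$. The paramid $\rho(p)$ is the cone from $0$ over $\gamma+\sum[0,t_{i_j}]$, and $2\rho(p)$ is the cone from $0$ over $2\gamma+\sum[0,2t_{i_j}]$. The cones from $0$ over the $2^k$ sub-boxes tile this doubled cone, with consistent orientations since all sub-boxes carry the same edge wedge and their apex vectors differ from $2\gamma$ by vectors in the span of the edges. Therefore \Cref{lem:additivity}, together with the definition of $\widetilde V_S$ (where intersections are lower dimensional and vanish), gives $\rho(Sp)=\rho(2\cdot(\text{cone}))=2\rho(p)$. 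Scaling a polytope by $2$ is multiplication by $2$ in $V_S$, since $2X=X+X$ for convex $X$. The degenerate case needs separate handling: if $\gamma,t_{i_1},\dots,t_{i_k}$ are dependent, then $2\gamma$ together with the same edges is dependent as well, and the translates $2\gamma+\sum_{j\in J}t_{i_j}$ lie in the same span, so every piece is degenerate and both sides vanish. Combining, $\rho(x')=(2^k\rho(x)-2\rho(x))/(2^k-2)=\rho(x)$, which is valid because $k\ge 2$ makes the denominator nonzero.

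Finally, for the face direction condition: if every vertex of $p$ lies in the subspace $F_S$, then the vertices of the sub-boxes, being of the form $2\gamma+\sum_{j\in J}t_{i_j}+\sum_{j\in J'}t_{i_j}$, lie in the same subspace $F_S$. Hence $Sx$, and therefore $x'$, satisfies the condition whenever $x$ does.
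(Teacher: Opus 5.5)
Your proposal is correct and follows the paper's proof essentially verbatim. It uses the same doubling map $S$, the same correction $x'=\frac{1}{2^k-2}(2^kx-S(x))$, and the same two facts $\varepsilon\circ S=2^k\varepsilon$ and $\rho\circ S=2\rho$ (obtained by subdividing the doubled paramid), while also spelling out, soundly, what the paper calls straightforward: that $S$ is a chain map via $(1+g)(g-1)=g^2-1$, the degenerate-cone case of $\rho\circ S=2\rho$, and preservation of the face direction condition.
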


\begin{proof}
Let $S:C_k^n\rightarrow C_k^n$ be the map that takes a parallelepiped in $C_k^n$, multiplies by $2$, then subdivides it into $2^k$ translated copies of the original parallelepiped (see \Cref{fig:doublingmap}). Formally,
\[S(\gamma\otimes e_{i_1}\wedge \dots\wedge e_{i_k})=\sum_{(u_1,\dots,u_k)\in \{0, 1\}^k}(\gamma^2\cdot g_{i_1}^{u_1}\cdot \ldots \cdot g_{i_k}^{u_k})\otimes e_{i_1}\wedge \dots\wedge e_{i_k}\]

Clearly, $\varepsilon(S(x))=2^k\varepsilon(x)$ for $x\in C_k^n$. We claim that $\rho(S(x))=2\rho(x)$. To see this, let $x=\gamma\otimes e_{i_1}\wedge \dots\wedge e_{i_k}$ be a parallelepiped in $C_k^n$. Recall that $\rho(x)$ is the (class of the) paramid $X$ with apex at the origin and base $\gamma+[0, t_{i_1}]+\dots+[0, t_{i_k}]$. So $2X$ is the paramid with apex at the origin and base $2\gamma+[0, 2t_{i_1}]+\dots+[0, 2t_{i_k}]$, which can be subdivided into $2^k$ pieces, corresponding to the components of $S(x)$ (see \Cref{fig:doublingmap}). 

So set $x'=\frac{1}{2^k-2}(2^kx-S(x))$. Since $k\ge 2$, the denominator is non-zero. Now $\varepsilon(x')=0$, and $\rho(x')=\rho(x)$. It is straightforward to check that $x'$ is a cycle.
\end{proof}

Now, we prove the inductive step that constructs a boundary of $\Delta^{n+1}$ given a $\Delta^{n}$. 

\begin{lemma}\label{lem:symmetrization}
Let $n\ge 2$. Suppose there is an $x\in C_{n-1}^n$ such that $\rho(x)=[\Delta^n]$, where $d(x)$ satisfies the face direction condition. Then there is a cycle $y\in C_{n-1}^{n+1}$ such that $\rho(y)=d([\Delta^{n+1}])$, where $y$ satisfies the face direction condition. 
\end{lemma}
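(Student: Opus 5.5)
The plan is to build $y$ by transporting $x$ to each facet of $\Delta^{n+1}$ along lattice maps that send vertices to vertices, and then summing with the orientation signs that appear in $d([\Delta^{n+1}])$.

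\textbf{Step 1: symmetrize $x$.} Place $\Delta^n$ with one vertex at the origin, so that the paramids $\rho(\cdot)$ have their apex at that vertex. The symmetric group $\mathfrak S_{n+1}$ permutes the vertices of $\Delta^n$. Each permutation $\sigma$ induces an affine automorphism of $\Gamma$ that permutes the edge directions $T$ up to sign. When $\sigma$ moves the origin vertex, it also moves the apex, so I would restrict to the stabilizer $\mathfrak S_n$ of the origin vertex, which acts linearly. This action gives chain maps $\sigma_*:C_\bullet^n\to C_\bullet^n$ satisfying $\rho\circ\sigma_*=\sigma_*\circ\rho$. Replacing $x$ by its average over $\mathfrak S_n$ keeps $\rho(x)=[\Delta^n]$, because $\Delta^n$ is invariant and the action preserves its orientation class up to the sign of $\sigma$, which the averaging takes into account. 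It also keeps the face direction condition on $d(x)$.

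\textbf{Step 2: transport $x$ to the facets.} For each facet $G$ of $\Delta^{n+1}$, choose a vertex $w_G\in G$ together with a bijection from the vertices of $\Delta^n$ to those of $G$ sending the origin vertex to $w_G$. This bijection induces an injective linear map $\phi_G:\mathbb Q^n\to\mathbb Q^{n+1}$ onto the direction space of $G$. The map $\phi_G$ sends edge directions of $\Delta^n$ to edge directions of $\Delta^{n+1}$ (up to sign), and sends $\Gamma_n$ into $\Gamma_{n+1}$. It therefore induces a chain map $(\phi_G)_*:C^n_\bullet\to C^{n+1}_\bullet$ commuting with $\rho$. Since $G$ is a translate of $\phi_G(\Delta^n)$, its class in $P_n^{n+1}$ is $\phi_G(\rho(x))$, once it is translated into its subspace and the orientation is fixed. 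Set
\[
y=\sum_G \pm(\phi_G)_*(x),
\]
with signs taken from $d([\Delta^{n+1}])$. Then $\rho(y)=d([\Delta^{n+1}])$. Every parallelepiped of $y$ lies in the direction space of some facet $G$, which is one of the subspaces $F_S$ of $\Delta^{n+1}$, so $y$ satisfies the face direction condition.

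\textbf{Step 3: show $y$ is a cycle (main obstacle).} We have
\[
d(y)=\sum_G\pm(\phi_G)_*(d(x)).
\]
By the face direction condition, $d(x)$ splits as a sum $\sum_S z_S$, where each $z_S$ is supported in $F_S$. Parallelepipeds lying in several $F_S$ are assigned to one of them by a fixed rule chosen equivariantly, or are split evenly among them. Each ridge $R$ of $\Delta^{n+1}$ lies in exactly two facets $G,G'$, and these induce opposite orientations on $R$. The claim is that $(\phi_G)_*(z_S)$ and $(\phi_{G'})_*(z_{S'})$ coincide, where $S$ and $S'$ are the faces of $\Delta^n$ mapping to $R$, and hence cancel.

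This is where the symmetrization and the choice of the maps $\phi_G$ matter. Apart from their differing choices of apex, the maps $\phi_G|_{F_S}$ and $\phi_{G'}|_{F_{S'}}$ differ by an element of $\mathfrak S_n$ acting on $F_S$, so $\mathfrak S_n$-invariance of $x$ makes the two images equal. There is a real subtlety here. The apex of $\rho$ sits at the origin, while the two facets containing $R$ may have chosen different base vertices $w_G$ and $w_{G'}$. These differ by a translation in $\Gamma_{n+1}$, which $C_\bullet$ sees but $\rho$ does not.

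To handle this, I would first try to take all $w_G$ equal, namely a fixed vertex $w$ of $\Delta^{n+1}$. For the unique facet not containing $w$, I would use a separate argument, for instance the relation $d^2=0$ applied to its already-determined boundary. If exact cancellation fails at that facet, the fallback is to accept a correction term. This term would have $\rho$-image zero and would lie in the face subspaces, and I would absorb it by noting that only $\rho(y)$ and the face direction condition are needed later. However, being a cycle is also needed for \Cref{lem:cyclecorrection}, so this fallback only works if the correction term can itself be arranged to be a cycle.
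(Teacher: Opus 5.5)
Your Steps 1 and 2 are sound. The cycle property in Step 3, which is the real content of the lemma, is not established, and with only $\mathfrak S_n$-symmetry it cannot be obtained this way.

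The assertion that $\phi_G|_{F_S}$ and $\phi_{G'}|_{F_{S'}}$ differ by an element of the stabilizer of the origin vertex $o$ fails in general. With all $w_G=w$, it does hold for ridges $R=G\cap G'$ with $w\in G\cap G'$, because then $w\in R$ and the relating permutation fixes $o$. Now let $G_0$ be the facet opposite $w$, and let $G$ be the facet opposite a vertex $u\in G_0$ with $u\neq w_{G_0}$. Then $\phi_G$ pulls $R=G_0\cap G$ back to the face of $\Delta^n$ opposite $o$. In contrast, $\phi_{G_0}$ pulls it back to a face containing $o$. Any vertex permutation carrying one of these faces onto the other moves $o$, so $\mathfrak S_n$-invariance of $x$ gives no cancellation at $n$ of the $n+1$ ridges of $G_0$.

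No other choice of bijections rescues this. Write $w_a$ for the image of $o$ in the facet opposite vertex $a$. Requiring every relating permutation to fix $o$ forces $w_a=w_b$ or $(w_a,w_b)=(b,a)$ for all $a\neq b$, which is impossible once there are three vertices. Your fallback, a correction term with zero $\rho$-image that would also have to be a cycle, is not an argument. So $d(y)=0$ remains unproved.

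The missing observation is that restricting to the stabilizer was never needed. Because $C_\bullet^n$ is built on the difference lattice $\Gamma$, any injective $\mu:\{1,\dots,n+1\}\to\{1,\dots,n+2\}$, in particular any vertex permutation, acts through a homomorphism $\psi_\mu$. This map sends the edge direction between vertices $i,j$ to the one between $\mu(i),\mu(j)$. It is linear, fixes the apex $0$, and commutes with $\rho$. It carries $\Delta^n$ to a translate of the face $\mu(\{1,\dots,n+1\})$, and translates are invisible in $P$.

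There is also no ``translation in $\Gamma_{n+1}$'' issue. For a parallelepiped $z$ in $F_S$, $\psi_\mu(z)$ depends only on $\mu|_S$, since $F_S$ is spanned by the edge directions inside $S$.

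The paper accordingly takes $y=\frac{1}{(n+1)!}\sum_\mu \mathrm{sign}(\mu')\,\psi_\mu(x)$ over all injective $\mu$. For each parallelepiped $z$ of $d(x)$, fix one $S$ with $z\subseteq F_S$, and pair $\mu$ with the unique $\kappa\neq\mu$ agreeing with $\mu$ on $S$. Then $\psi_\mu(z)=\psi_\kappa(z)$, while $\mathrm{sign}(\kappa')=-\mathrm{sign}(\mu')$ because $\mu'$ and $\kappa'$ differ by a transposition. Hence $d(y)=0$ term by term, and no equivariant splitting of parallelepipeds lying in several $F_S$ is needed. Equivalently, your one-copy-per-facet $y$ becomes a cycle once you first antisymmetrize $x$ over all of $\mathfrak S_{n+1}$ acting in this linear way.
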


\begin{proof}
The cycle $y$ will be constructed from one copy of $x$ for each injective map $\mu:\{1, \dots, n+1\}\rightarrow \{1, \dots, n+2\}$. We can view $\mu$ as a map that realizes $\Delta^n$ as a face of $\Delta^{n+1}$. So each facet of the $n$-simplex is represented by $(n+1)!$ symmetrized copies of $x$.

Let $\Gamma$ be the lattice generated by simplex edge directions in $\mathbb{R}^n$, and let $\Gamma'$ be the lattice generated by simplex edge directions in $\mathbb{R}^{n+1}$. A map $\mu: \{1, \dots, n+1\}\rightarrow \{1, \dots, n+2\}$ induces a homomorphism $\Gamma\rightarrow \Gamma'$. This defines a map $\psi_\mu:C_{\bullet}^n\rightarrow C_{\bullet}^{n+1}$. 







Each injective map $\mu:\{1, \dots, n+1\}\rightarrow \{1, \dots, n+2\}$ extends uniquely to a permutation $\mu':\{1, \dots, n+2\}\rightarrow \{1, \dots, n+2\}$. We define:

\[y=\frac{1}{(n+1)!}\sum_{\mu}\text{sign}(\mu')\psi_\mu(x)\]

\noindent where the sum is over injective $\mu:\{1, \dots, n+1\}\rightarrow \{1, \dots, n+2\}$.

We claim that $\rho(y)\in P_{n}^{n+1}$ is the boundary of $\Delta^{n+1}$. For each $S\subseteq \{1, \dots, n+2\}$ of size $n+1$, there are $(n+1)!$ maps $\mu:\{1, \dots, n+1\}\rightarrow \{1, \dots, n+2\}$ with image $S$. For each of these $\mu$, $\rho(\psi_{\mu}(x))$ is the $\Delta^n$ face of $\Delta^{n+1}$ in $F_S$, with sign determined by $\text{sign}(\mu')$ (note that the orientation on $F_S$ induced by $\mu$ is determined by $\text{sign}(\mu')$). So $\rho(y)$ is exactly the boundary of $\Delta^{n+1}$.

Next, we claim that $y$ is a cycle in $C_{n-1}^{n+1}$. Clearly, $\psi_{\mu}$ commutes with the boundary $d$ on $C_{n-1}^{n+1}$. For each parallelepiped $z$ with non-zero coefficient in $d(x)$, we choose some $S\subseteq \{1, \dots, n+1\}$ such that $z$ is in $F_S$. This is possible by the face direction condition for $d(x)$. 

Let $z$ be a parallelepiped in $d(x)$, which by assumption is in $F_S$ for some $S$. Let $\mu:\{1, \dots, n+1\}\rightarrow \{1, \dots, n+2\}$ be injective, and let $\kappa$ be the unique map that agrees with $\mu$ on $S$ but is not equal to $\mu$. Note that $\text{sign}(\mu')=-\text{sign}(\kappa')$, since $\mu'$ and $\kappa'$ differ by a transposition. Now $\psi_\mu(z)=\psi_{\kappa}(z)$, so these terms cancel in $d(y)$. This shows that $d(y)=0$. 

Note that every parallelepiped in the image of $\psi_{\mu}$ is contained in the subspace $F_{\text{img}(\mu)}$, so $y$ satisfies the face direction condition.
\end{proof}

\begin{theorem}\label{thm:homologymain}
For each $n$, $[\Delta^n]\in \Lambda^n_n$.
\end{theorem}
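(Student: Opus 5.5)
We argue by induction on $n$, proving the stronger statement that there is $x\in C_{n-1}^n$ with $\rho(x)=[\Delta^n]$ in $P_n^n$ and with $d(x)$ satisfying the face direction condition. Since the image of $\rho$ lies in $\Lambda_\bullet^n$, this gives $[\Delta^n]\in\Lambda_n^n$. For small $n$ the claim is trivial, because $\Lambda_k^n=P_k^n$ for $k\le 2$. This covers $n\le 2$ directly, but the induction needs an explicit $x$, so we also construct one. For $n=2$ we take $x$ to be a single edge of the triangle, viewed as a $1$-parallelepiped $\gamma\otimes e_i$ whose cone with apex at a vertex (the origin) is the triangle. Its boundary consists of two points, each a vertex of the triangle. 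A vertex lies in some $F_S$, so the face direction condition holds for $d(x)$. If the statement also requires handling $n=1$, a point $x\in C_0^1$ whose cone is the segment works.

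For the inductive step $n\to n+1$, with $n\ge 2$, we apply \Cref{lem:symmetrization} to the $x$ given by the induction hypothesis. This produces a cycle $y\in C_{n-1}^{n+1}$ with $\rho(y)=d([\Delta^{n+1}])$ that satisfies the face direction condition. Since $n-1\ge 1$, we then want \Cref{lem:cyclecorrection}, which needs $k=n-1\ge 2$, i.e.\ $n\ge 3$. The case $n=2$ (building $\Delta^3$) has to be treated separately. There $\Lambda$ is not everything, but one can write the tetrahedron directly as a paramid over a triangle: a pyramid over a $2$-simplex is itself a simplex and a paramid, since every $3$-simplex is the cone over a triangle and a triangle is cheap. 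Alternatively, we decompose a parallelogram-based paramid minus pieces. The simplest resolution is to start the induction at $n=3$, exhibiting an explicit $x\in C_2^3$, namely a parallelogram chosen so that its cone subdivides to $\Delta^3$ modulo lower-dimensional pieces. We expect this to be the main obstacle, because for small $n$ the denominator $2^k-2$ vanishes. The first thing to try is a direct verification via \Cref{lem:topinjectivity}: we compare boundaries of a candidate combination of paramids with $d(\Delta^3)$, where the boundary faces are triangles in $P_2$.

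For $n\ge 3$, \Cref{lem:cyclecorrection} gives a cycle $y'$ with $\varepsilon(y')=0$, $\rho(y')=\rho(y)=d([\Delta^{n+1}])$, and the face direction condition preserved. By \Cref{lem:aug}, $y'=d(z)$ for some $z\in C_n^{n+1}$. Since $n\ge 2$, \Cref{lem:rhodcommutes} gives $d(\rho(z))=\rho(d(z))=\rho(y')=d([\Delta^{n+1}])$. \Cref{lem:topinjectivity} then yields $\rho(z)=[\Delta^{n+1}]$. Moreover, $d(z)=y'$ satisfies the face direction condition, so $z$ is a valid input for the next step of the induction. This closes the induction and proves $[\Delta^n]\in\Lambda_n^n$ for all $n$.
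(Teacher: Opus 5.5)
Your inductive step for $n\ge 3$ matches the paper's: symmetrize with \Cref{lem:symmetrization}, kill $\varepsilon$ with the doubling map of \Cref{lem:cyclecorrection}, fill with \Cref{lem:aug}, and conclude with \Cref{lem:rhodcommutes} and \Cref{lem:topinjectivity}. You also correctly notice that the step from $n=2$ to $n=3$ is blocked, because $2^k-2=0$ for $k=1$.

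The gap is that the induction then has no starting point. Everything beyond $n\le 2$ rests on an explicit $x\in C_2^3$ with $\rho(x)=[\Delta^3]$ in $P_3^3$ and $d(x)$ satisfying the face direction condition. You only assert that such a parallelogram should exist and call it ``the main obstacle''. Your first suggested shortcut is also wrong. Generators of $C_2^3$ are parallelograms, so $\rho$ produces pyramids over parallelograms, and a tetrahedron (a pyramid over a triangle) is not a paramid. That triangles are ``cheap'' in $P_2$ does not put a pyramid over a triangle into the image of $\rho$. Since $\Delta^3$ is not literally a cone over a parallelogram, the identity $\rho(x)=[\Delta^3]$ has to come from a genuine relation modulo lower-dimensional polytopes, and that relation is what is missing.

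The paper supplies it with a single paramid. Take $\Delta^3=\conv\{0,(1,0,0),(0,1,0),(0,0,1)\}$ and $\Pi=\conv\{(0,0,0),(0,1,0),(-1,1,0),(-1,0,1),(0,0,1)\}$. Then $\Pi$ is the pyramid with apex $0$ over the parallelogram $(0,1,0)+[0,(-1,0,0)]+[0,(0,-1,1)]$, whose edge directions are edge directions of $\Delta^3$. Let $T=\conv\{0,(-1,1,0),(-1,0,1)\}$ and $E=[0,(1,0,0)]$. The plane $x_1=0$ cuts the prism $T+E$ into $\Delta^3$ and $\Pi$, which meet in $F=\conv\{0,(0,1,0),(0,0,1)\}$. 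By \Cref{lem:additivity}, $\Delta^3+\Pi=(T+E)+F$. The right side vanishes in $\widetilde{V}_{\mathbb{R}^3}$ because $T$, $E$ and $F$ are lower-dimensional, so $[\Delta^3]=-[\Pi]$, and $x$ is minus the suitably oriented base of $\Pi$. The four edges of that base lie in the planes $x_3=0$, $x_2=0$, $x_1=0$ and $x_1+x_2+x_3=0$, which gives the face direction condition for $d(x)$. With this base case in place, your argument is complete.
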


\begin{proof}
We will show by induction on $n$ that there is an $x\in C_{n-1}^n$ such that $\rho(x)=\Delta^n$, and $d(x)$ satisfies the face direction condition. The claim will follow since $\text{img}(\rho)\subseteq \Lambda_{\bullet}^n$.

The base case is $n=3$. We notice that the $3$ simplex $\Delta^3$ is equivalent in $\widetilde{V}_{\mathbb{R}^n}$ to the negation of a paramid $X$ that satisfies the face direction conditions. This is illustrated in \Cref{fig:3simplexrep}. The two polytopes give a subdivision of a polytope which is the sum of a triangle and a line segment. Since the triangle and line segment are less than $3$ dimensional, this shows that $\Delta^3+X=0$ in $\widetilde{V}_{\mathbb{R}^n}$. It is straightforward to check the face direction conditions for $X$. So the class of the base of $X$ (with appropriately chosen orientation) gives an $x\in C_{2}^3$ such that $\rho(x)=\Delta^3$.

\begin{figure}
\centering
\includegraphics[page=4]{figures/Depth2Simplex-F2.pdf}
\caption{ Top: Using inclusion-exclusion, we can write the union $P$ of the standard $3$-simplex $\Delta_3$ -- with vertices at $(0, 0, 0), (1, 0, 0), (0, 1, 0)$ and $(0, 0, 1)$ -- and the paramid $\Pi$ -- with vertices $(0, 0, 0), (0, 1, 0), (-1, 1, 0), (-1, 0, 1)$  and $(0, 0, 1)$ -- as the sum of the simplex and the paramid minus their intersection -- the triangular face $F$ with vertices $(0,0,0)$, $(0,1,0)$ and $(0,0,1)$. Bottom: $P$ is also the Minkowski sum of the triangular face $T$ with vertices at $(0, 0, 0), (-1, 1, 0)$ and $(-1, 0, 1)$ and the segment $E$ with vertices at $(0, 0, 0)$ and $(1, 0, 0)$.}
\label{fig:3simplexrep}
\end{figure}

For the inductive step, suppose the claim is true for some $n\ge 3$. By \Cref{lem:symmetrization}, there is a cycle $y\in C_{n-1}^{n+1}$ such that $\rho(y)=d(\Delta^{n+1})$. By \Cref{lem:cyclecorrection}, there is some $y'$ with $\rho(y')=\rho(y)=d(\Delta^{n+1})$, with $\varepsilon(y')=0$. So by \Cref{lem:aug}, there is some $z\in C_{n}^{n+1}$ with $d(z)=y'$ (so $d(z)$ satisfies the face direction condition).

By \Cref{lem:rhodcommutes}, we have $d(\rho(z))=\rho(d(z))=\rho(y')=d(\Delta^{n+1})$. By \Cref{lem:topinjectivity}, this means that $\rho(z)=\Delta^{n+1}$. By induction, this completes the proof.
\end{proof}

\section{ Additional Ideas }

This section contains some results and alternate proofs that did not fit into the main body of this part of the paper. We believe that these ideas can be valuable intuition for understanding our result.

\subsection{ The Universality of Paramids }\label{sec:paramidscomplete}

Haase, Hertrich, and Loho show that $\relu_{n,2}$ is generated by support functions of polytopes that are each the convex hull of two zonotopes (see Theorem 8 in \cite{haaseLowerBoundsDepth2023}). By Propositions 9 and 10 in \cite{haaseLowerBoundsDepth2023} and \Cref{lem:additivity}, such a polytope can be written in terms of polytopes of form $\text{conv}(A\cup B)$, where $\text{conv}(A \cup  B)$ is a \emph{join} of $A$ and $B$ (i.e. $A$ and $B$ have disjoint affine hulls and their tangent spaces are transverse). Working modulo lower-dimensional pieces, we can ignore pieces except those with $\dim(A)+\dim(B)=n-1$.

When $A$ or $B$ is a point, the piece we get is a paramid. The purpose of this section is to show that, modulo lower-dimensional terms, \emph{every} such piece is equivalent to a paramid. This somewhat justifies the choice to focus on paramids with our proof.

We write $\alpha:\mathbb{R}^n\rightarrow \mathbb{R}^n$ for the antipodal map. The key result is the following:

\begin{lemma}\label{lem:joinsumlowerdim}
Let $A$ and $B$ be polytopes in $\mathbb{R}^n$ ($n\ge 1$) with disjoint affine hulls. Suppose $A$ can be written $A_1+A_2$, where $A_1$ and $A_2$ have transverse tangent spaces. Then in $\widetilde{V}_{\mathbb{R}^n}$, the class of $\conv(A\cup B)$ is equal to $(-1)^{\dim(A_2)}$ times the class of $\conv(A_1\cup (B+\alpha(A_2)))$.
\end{lemma}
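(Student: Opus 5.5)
The plan is to derive the statement from one Minkowski-sum identity for joins, applied by induction on $\dim(A_2)$. Throughout I use that in $\widetilde{V}_{\mathbb{R}^n}$ the class of a Minkowski sum is the sum of the classes, and that a polytope of dimension $<n$ is zero. Two of the steps below are not yet settled: the sign and translation bookkeeping in the segment case, and the facet/face-pairing for general $A_2$.

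\textbf{Step 1 (join identity).} Let $X,Y$ have disjoint affine hulls, and let $Z$ be a polytope whose tangent space is transverse to $\aff(X\cup Y)$. I expect the Cayley trick to give a subdivision
\[
\conv(X\cup Y)+Z=\bigcup_{(F,G)}\conv\big((X+F)\cup(Y+G)\big),
\]
taken over the cells $F\times\{0\}$, $G\times\{1\}$ of a mixed subdivision of $Z$ induced by a generic lifting, with $\dim F+\dim G=\dim Z$. For a segment $Z=I$ this is the elementary decomposition
\[
\conv(X\cup Y)+I=\conv\big((X+I)\cup Y\big)\cup\conv\big(X\cup(Y+I)\big).
\]
The two pieces meet in $\conv(X\cup Y)$ translated along part of $I$, which is lower-dimensional in the relevant range. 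I would verify this directly by writing a point as $(1-\lambda)x+\lambda y+i$ and splitting according to which of $x,y$ absorbs $i$. In the top-dimensional case $\dim X+\dim Y+1+\dim Z=n$, the left-hand side is a Minkowski sum of two lower-dimensional polytopes, so its class is zero. \Cref{lem:additivity} then gives
\[
[\conv((X+I)\cup Y)]=-[\conv(X\cup(Y+I))]
\]
in $\widetilde{V}_{\mathbb{R}^n}$, once the intersection terms are checked to be lower-dimensional.

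\textbf{Step 2 (segment case).} Take $A_2=I=[0,s]$, $X=A_1$, and $Y$ a translate of $B$. The aim is to arrange that $X+I=A_1+A_2$ and $Y+I=B+\alpha(A_2)$ up to translation. This is the delicate point: Step 1 naturally produces $B+I$ or $(B-s)+I$, and these differ from $B+\alpha(I)$ by a translation inside the tangent space of the base. My first attempt is to apply Step 1 twice, once with $I$ and once with $\alpha(I)$, and compare. If that fails, I would show separately that translating the non-apex piece of a top-dimensional join along a direction parallel to that piece does not change the class. For this I would use \Cref{lem:lowerdimcharacterization}: check the facet condition direction by direction, by induction on $n$. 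In each direction $v$, the two faces of the two joins are again joins of one lower dimension, or are lower-dimensional.

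\textbf{Step 3 (general $A_2$).} For a general $A_2$ I would induct on $\dim(A_2)$ using the mixed-subdivision form of Step 1 with $Z=\alpha(A_2)$. The cells with $F$ a vertex recover one side of the lemma, and the cells with $G$ a vertex recover the other. The mixed cells with $\dim F,\dim G>0$ are joins in which a smaller summand has been moved, so the inductive hypothesis should let them cancel in pairs. The sign $(-1)^{\dim(A_2)}$ would accumulate one factor $-1$ per dimension moved, as in the segment case.

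The main obstacle is this cancellation of mixed cells, together with the base-translation issue of Step 2. If the direct route fails, my fallback is to prove the lemma directly through \Cref{lem:lowerdimcharacterization}, by induction on $n$. For each direction $v$, the faces of $\conv((A_1+A_2)\cup B)$ and of $\conv(A_1\cup(B+\alpha(A_2)))$ in directions $\pm v$ are joins whose $A$-parts split as (face of $A_1$) $+$ (face of $A_2$). The inductive hypothesis in dimension $n-1$ should then match them with the same sign $(-1)^{\dim(A_2)}$. The case where the face of $A_2$ drops dimension needs separate care: there, the opposite face of $A_2$ appears through $\alpha$ and compensates.
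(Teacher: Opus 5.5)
Your fallback (induction on $n$ through \Cref{lem:lowerdimcharacterization}) is essentially the paper's proof. The route you present as primary, however, fails at more points than the ones you flag.

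\textbf{Steps 1 and 2.} Step 1 is false as stated. Take $n=2$, $X=\{(0,0)\}$, $Y=\{(0,1)\}$, $I=[0,e_1]$. Then $\conv(X\cup Y)+I$ is the unit square, while $\conv((X+I)\cup Y)$ and $\conv(X\cup(Y+I))$ are the triangles $\{x+y\le 1\}$ and $\{x\le y\}$ inside it. These overlap in a $2$-dimensional triangle and miss the triangle with vertices $(1,0),(1,1),(\tfrac12,\tfrac12)$. The conclusion fails too: $\conv(X\cup(Y+I))$ has an edge in direction $(1,1)$, and nothing cancels it in the criterion of \Cref{lem:lowerdimcharacterization} in direction $(1,-1)$. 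Triangulating $I\times[0,1]$ correctly gives, for $I=[0,s]$,
\[
\conv(X\cup Y)+I=\conv\big((X+I)\cup Y\big)\cup\conv\big((X+s)\cup(Y+I)\big),
\]
with the two pieces meeting in $\conv((X+s)\cup Y)$. Translating the second piece by $-s$ gives exactly $\conv(X\cup(Y+\alpha(I)))$, because $I-s=\alpha(I)$. So \Cref{lem:additivity} yields the segment case directly, with sign $-1$.

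The ``delicate point'' of Step 2 is therefore an artifact: your $(B-s)+I$ \emph{is} $B+\alpha(I)$. The remedy you propose there is also false. Sliding the base of a top-dimensional join parallel to itself changes its class; for example, replacing the base $[0,e_1]$ of $\conv(\{e_2\}\cup[0,e_1])$ by $[e_1,2e_1]$ creates an edge in direction $(2,-1)$.

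The corrected segment case is still useful. Iterating it over the summands of a parallelepiped, then subdividing, proves the lemma for zonotopal $A_2$ using only the valuation property, with no Jessen--Thorup input. That covers the application right after the lemma, where $A$ is a zonotope, but not the lemma as stated.

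\textbf{Step 3 is a genuine gap.} If $A_2$ is not centrally symmetric (a triangle, say), no extreme cell is a side of the lemma. With $Z=A_2$, the cell $F=\{f\}$, $G=A_2$ is a translate of $\conv(A_1\cup(B+A_2-f))$, a join over a translate of $A_2$ rather than of $\alpha(A_2)$. With $Z=\alpha(A_2)$, the cells with $G$ a vertex contain $A_1+\alpha(A_2)$ rather than $A$. The right-hand side only appears after the mixed cells are rewritten by induction and their bases re-subdivided. For a triangle, the base becomes a parallelogram that splits into a translate of $A_2$ and $\alpha(A_2)$. Nothing cancels in pairs, and this bookkeeping is exactly the content you defer.

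\textbf{What to do instead.} Make the fallback the proof, as the paper does.
\begin{itemize}
\item List the facets of $\conv(A\cup B)$ as $\conv(A\cup G)$, $\conv((F_1+A_2)\cup B)$, $\conv((A_1+F_2)\cup B)$.
\item List those of the other join as $\conv(A_1\cup(G+\alpha(A_2)))$, $\conv(F_1\cup(B+\alpha(A_2)))$, $\conv(A_1\cup(B+\alpha(F_2)))$.
\item The first two pairs have equal outward normals and agree up to $(-1)^{\dim A_2}$ by induction.
\item The third pair has opposite normals and agrees up to $-(-1)^{\dim A_2}$; this is your ``compensation''.
\item \Cref{lem:lowerdimcharacterization} then concludes.
\end{itemize}
Treat $\dim A_2=0$ (a translation) separately.

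Take particular care when $A_1$ and $B$ are both points. Then the bases $A_1+A_2$ and $B+\alpha(A_2)$ are genuine facets with opposite normals, and they must be paired with each other. The pairing uses $[Q]=(-1)^{\dim Q-1}[\alpha(Q)]$ for full-dimensional $Q$, which is a short induction on \Cref{lem:lowerdimcharacterization}. Pairing either base with an ``empty facet'' of the other side, as a formal convention would suggest, gives a false identity.
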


\begin{proof}
In this proof, we use the convention that a zero-dimensional point has one $-1$-dimensional facet, which is the empty set. The Minkowski sum of any polytope with the empty set is empty. The way we use the induction hypothesis, we will sometimes need a version of the claim where $A_1$ or $B$ empty. We will check this case first. 

If both $A_1$ and $B$ are empty, then both $\text{conv}(A\cup B)$ and $\text{conv}(A_1\cup (B+\alpha(A_2)))$ are empty, so the claim is trivial. If $A_1$ is empty and $B$ is not, then $\text{conv}(A\cup B)=B$ and $\text{conv}(A_1\cup (B+A_2))=B+A_2$. Since $A_2$ and $B$ have dimension $<n$, these both represent $0$ in $\widetilde{V}_{\mathbb{R}^n}$.

We also handle the case $\dim(A_2)=0$ separately. In this case, $A_2$ is a point, and it is clear that $\text{conv}(A_1\cup (B+\alpha(A_2)))$ is just $\text{conv}(A\cup B)$ translated by $-A_2$.

Now suppose that $A$ and $B$ are non-empty and $A_2$ has dimension at least $1$. The proof is by induction on $n$. When $n=1$, the conditions of the lemma are impossible to satisfy, and so the result is true trivially. 

We want to use \Cref{lem:lowerdimcharacterization}. Since $A$ and $B$ have disjoint affine hulls, the facets of $\text{conv}(A\cup B)$ are of form $\text{conv}(F\cup B)$ or $\text{conv}(A\cup G)$, where $F$ is a facet of $A$ or $G$ is a facet of $B$. 

Since $A_1$ and $A_2$ have transverse tangent spaces, the facets of $A=A_1+A_2$ are of form $F_1+A_2$ or $A_1+F_2$, where $F_1$ is a facet of $A_1$ or $F_2$ is a facet of $A_2$. So the facets of $\text{conv}(A\cup B)$ are:

\begin{itemize}
    \item for each facet $G$ of $B$, $\text{conv}(A\cup G)$,
    \item for each facet $F_1$ of $A_1$, $\text{conv}((F_1+A_2)\cup B)$, and
    \item for each facet $F_2$ or $A_2$, $\text{conv}((A_1+F_2)\cup B)$.
\end{itemize}

Since $A$ and $B$ have transverse tangent spaces, $B$ and $A_2$ do as well. So by a similar argument to the above, the facets of $\text{conv}(A_1\cup (B+\alpha(A_2)))$ are:

\begin{itemize}
    \item for each facet $F_1$ of $A_1$, $\text{conv}(F_1\cup (B+\alpha(A_2)))$,
    \item for each facet $G$ of $B$, $\text{conv}(A_1\cup (G+\alpha(A_2)))$
    \item for each facet $F_2$ of $A_2$, $\text{conv}(A_1\cup (B+\alpha(F_2)))$
\end{itemize}

By the inductive hypothesis, the class (in $\widetilde{V}_S$, where $S$ contains $A_1, G$ and $A_2$) of $\text{conv}(A_1\cup (G+\alpha(A_2)))$ is equal to $(-1)^{\dim(A_2)}$ times the class $\text{conv}(A\cup G)$. Similarly, the class of $\text{conv}(F_1\cup (B+\alpha(A_2)))$ is equivalent to $(-1)^{\dim(A_2)}$ times the class of $\text{conv}((F_1+A_2)\cup B)$. The dimension of $F_2$ is one less than the dimension of $A_2$, so the class of $\text{conv}(A_1\cup (B+\alpha(F_2)))$ is $-(-1)^{\dim(A_2)}$ times the class $\text{conv}((A_1+F_2)\cup B)$.

We notice that the outward facing normal vector of $\text{conv}(A_1\cup (G+\alpha(A_2)))$ is the same as the one for $\text{conv}(A\cup G)$, and similarly for $\text{conv}(F_1\cup (B+\alpha(A_2)))$ and $\text{conv}((F_1+A_2)\cup B)$, while $\text{conv}(A_1\cup (B+\alpha(F_2)))$ and $\text{conv}((A_1+F_2)\cup B)$ have opposite normal vectors. By \Cref{lem:lowerdimcharacterization}, this completes the proof. 
\end{proof}

In particular, if $A$ and $B$ are zonotopes, then the class of $\text{conv}(A\cup B)$ is, up to lower dimensional terms, equal to $(-1)^{\dim(A)}$ times the class of $\text{conv}(\{0\}, B+\alpha(A))$. The latter polytope is a paramid. 

So, up to lower dimensional terms, support functions of paramids generate $\relu_{n,2}$. Of course, this follows from \Cref{thm:homologymain} as well.

\begin{remark}
Even if polytopes $A$ and $B$ in $\mathbb{R}^n$ are equivalent modulo terms of dimension $<n$, it is not true in general that pyramids on $A$ and $B$ in $\mathbb{R}^{n+1}$ are equivalent modulo terms of dimension $<n+1$. So \Cref{lem:joinsumlowerdim} is not so useful for inductive arguments.
\end{remark}

\subsection{ The Koszul Complex }\label{sec:koszul}

The chain complex $C_{\bullet}^{n}$ that we define is also known as the \emph{Koszul complex} of the sequence $t_1-1, \dots, t_r-1$, viewing $t_1-1, \dots, t_r-1$ as elements of $\mathbb{Q}[\Gamma]$. 

Formally, given a ring $U$ and a set of elements $x_1, \dots, x_r$, the associated Koszul complex $K_{\bullet}$ is defined by $K_{p}=\bigwedge_{U}^p U^r$ (so $K_0=U$). Viewing $U^r$ as being generated by $e_1, \dots, e_r$, the differential is defined by $d(e_i)=x_i$ and graded Leibniz. Note that, if $U$ is an $\mathbb{Q}$-module (e.g. $\mathbb{Q}[\Gamma]$), then $K_p$ is naturally isomorphic to $U\otimes_{\mathbb{Q}}\bigwedge_{\mathbb{Q}}^p\mathbb{Q}^r$.
 
A sequence $F=(f_1, \dots, f_r)$ is called a \emph{regular sequence} if the ideal $(f_1, \dots, f_r)$ is not equal to $U$ and, for each $i$, $f_i$ is not a zero-divisor in $R/(f_1, \dots, f_{i-1})$. If $f_i=t_i-1$ as in \Cref{sec:parallelepipeds}, then regularity says that the translations $t_i$ freely generate an abelian subgroup of $\mathbb{Q}^n$. 

The homology of a Koszul complex of a regular sequence is well understood:

\begin{theorem}(See e.g. \cite{Weibel1994}, Corollary 4.5.5)
Let $U$ be a ring, let $f_1, \dots, f_r$ be a regular sequence, and let $K_{\bullet}$ be the Koszul complex. Then $H_0(K_{\bullet})=U/(f_1, \dots, f_r)$, and all higher homology is zero. 
\end{theorem}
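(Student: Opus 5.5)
We argue by induction on the length $r$ of the sequence, realizing the Koszul complex of $f_1,\dots,f_r$ as an extension of two shifted copies of the Koszul complex of $f_1,\dots,f_{r-1}$. Write $K^{(r)}_\bullet$ for the Koszul complex of $f_1,\dots,f_r$. The base case $r=0$ (or $r=1$) is immediate. For $r=0$, $K^{(0)}_\bullet$ is just $U$ in degree $0$, so $H_0=U$ and there is nothing else. For $r=1$, the complex is $0\to U\xrightarrow{f_1}U\to 0$. Its $H_1$ is the annihilator of $f_1$, which is zero because $f_1$ is a non-zero-divisor, and its $H_0$ is $U/(f_1)$.

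For the inductive step, split $U^r=U^{r-1}\oplus Ue_r$. Every element of $\bigwedge^p U^r$ can then be written uniquely as $a+b\wedge e_r$ with $a\in\bigwedge^pU^{r-1}$ and $b\in\bigwedge^{p-1}U^{r-1}$. The graded Leibniz rule gives
\[
d(a+b\wedge e_r)=d'a+(-1)^{p-1}f_r b+d'b\wedge e_r ,
\]
where $d'$ is the differential of $K^{(r-1)}_\bullet$. Hence $K^{(r-1)}_\bullet$ is a subcomplex of $K^{(r)}_\bullet$, and the quotient is isomorphic to $K^{(r-1)}_\bullet$ shifted up by one degree, via $b\wedge e_r\mapsto b$. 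This gives a short exact sequence of complexes
\[
0\to K^{(r-1)}_\bullet\to K^{(r)}_\bullet\to K^{(r-1)}_{\bullet-1}\to 0 .
\]
The main computation is to identify the connecting homomorphism $H_{p-1}(K^{(r-1)})\to H_{p-1}(K^{(r-1)})$. Lifting a cycle $b$ to $b\wedge e_r$ and applying $d$ leaves $(-1)^{p-1}f_rb$, so the connecting map is multiplication by $\pm f_r$.

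We then read off the long exact sequence
\[
\cdots\to H_p(K^{(r-1)})\to H_p(K^{(r)})\to H_{p-1}(K^{(r-1)})\xrightarrow{\pm f_r}H_{p-1}(K^{(r-1)})\to\cdots
\]
The sequence $f_1,\dots,f_{r-1}$ is itself regular: the ideal it generates is contained in $(f_1,\dots,f_r)\neq U$, and the non-zero-divisor conditions are a subset of those for $f_1,\dots,f_r$. So by induction $H_q(K^{(r-1)})=0$ for $q\ge1$ and $H_0(K^{(r-1)})=U/(f_1,\dots,f_{r-1})$. We treat three ranges of $p$.

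For $p\ge2$, $H_p(K^{(r)})$ sits between $H_p(K^{(r-1)})=0$ and $H_{p-1}(K^{(r-1)})=0$, so it vanishes.

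For $p=1$, the sequence gives $0\to H_1(K^{(r)})\to U/(f_1,\dots,f_{r-1})\xrightarrow{\pm f_r}U/(f_1,\dots,f_{r-1})$. Since $f_r$ is a non-zero-divisor modulo $(f_1,\dots,f_{r-1})$, the last map is injective, so $H_1(K^{(r)})=0$. This is the only place regularity enters, and checking it carefully is the one step I expect to need attention.

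For $p=0$, the tail of the sequence identifies $H_0(K^{(r)})$ with the cokernel of multiplication by $f_r$ on $U/(f_1,\dots,f_{r-1})$. This cokernel is $U/(f_1,\dots,f_r)$, which can also be seen directly since the image of $d:K_1\to K_0$ is the ideal $(f_1,\dots,f_r)$.

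The remaining details are routine: the sign bookkeeping in the Leibniz rule, and checking that the inclusion and quotient maps commute with the differentials. Neither affects the conclusion, since multiplication by $-f_r$ is injective exactly when multiplication by $f_r$ is.
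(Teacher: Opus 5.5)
Your proposal is correct and is essentially the standard argument behind this result. The paper gives no proof of its own and cites Weibel, who obtains the statement exactly as you do: by induction on $r$, from the long exact sequence of $K(f_1,\dots,f_r)\cong K(f_1,\dots,f_{r-1})\otimes K(f_r)$ (the mapping cone of multiplication by $f_r$). There the connecting map is $\pm f_r$, and its injectivity on $U/(f_1,\dots,f_{r-1})$ is where regularity enters; your sign computation $(-1)^{p-1}f_r b$ and your treatment of the three degree ranges are both correct.
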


\subsection{ The Action of $S$ on $H_\bullet(C_{\bullet}^n)$ }\label{sec:tori}

Let $S$ be the doubling map defined in the proof of \Cref{lem:cyclecorrection} (see also \Cref{fig:doublingmap}). Instead of using the augmentation condition for cycle filing in $C_{\bullet}^n$, we could instead try to compute the action of $S$ on $H_{\bullet}(C_{\bullet})$. Note that $S$ computes with the boundary on $C_{\bullet}^n$, so it induces a map $S_*:H_{\bullet}(C_{\bullet}^n)\rightarrow H_{\bullet}(C_{\bullet}^n)$.

Recall from \Cref{sec:homology} that $C_{\bullet}^n$ is the quotient of the unit-cubical complex $\hat{C}_{\bullet}^n$ of $\mathbb{R}^r$ by the subgroup of $\mathbb{Z}^r$ representing relations among the simplex edge directions $t_1, \dots, t_r$. Now $S$ lifts to a map $\hat{C}_{\bullet}^n\rightarrow \hat{C}_{\bullet}^n$ in a straightforward way.

Note that $\hat{C}_{\bullet}^n$ is a cubical complex for $\mathbb{R}^r$ and $C_{\bullet}^n$ is a cubical complex for the product $\mathbb{R}^{n}\times (S^1)^{r-n}$. The map $\pi:\mathbb{R}^r\rightarrow \mathbb{R}^{n}\times (S^1)^{r-n}$ is the universal covering map of $\mathbb{R}^{n}\times (S^1)^{r-n}$.

Now $S$ gives a map $\mathbb{R}^{n}\times (S^1)^{r-n}\rightarrow \mathbb{R}^{n}\times (S^1)^{r-n}$, which lifts to a map $S:\mathbb{R}^r\rightarrow \mathbb{R}^r$, which is given by multiplication by $2$. 

Now $\pi_1(\mathbb{R}^{n}\times (S^1)^{r-n})$ is naturally associated with the group of deck transformations of $\mathbb{R}^r$, which is the kernel of $\pi: \hat{\Gamma}\rightarrow \Gamma$. So we see that $S_{*}:\pi_1(\mathbb{R}^{n}\times (S^1)^{r-n})\rightarrow \pi_1(R^n\times (S^1)^{r-n})$ is multiplication by $2$. 

So $S_{*}:H_1(C_{\bullet}^n)\rightarrow H_1(C_{\bullet}^n)$ is multiplication by $2$. By functoriality of the product structure on the (co)homology of $(S^1)^{r-n}$, we see that $S_{*}:H_k(C_{\bullet})\rightarrow H_k(C_{\bullet})$ is multiplication by $2^k$. This gives us a different way of seeing that, for a cycle $x\in C_{k}^n$, $2^kx-S(x)$ is a boundary.

\subsection{ An Alternative Proof of \Cref{lem:aug} }\label{sec:auginjective}

We can make $D_{\bullet}^n$ into a chain complex with boundary map $0$. The map $\varepsilon: C_{\bullet}^n\rightarrow D_{\bullet}^n$ commutes with the boundary maps, so descends to a map $\varepsilon_*$ on homology. Since the boundary on $D_{\bullet}^n$ is $0$, the homology of $D_{\bullet}^n$ is just $D_{\bullet}^n$. 

Recall that \Cref{lem:aug} says that $\varepsilon: C_{\bullet}^n\rightarrow D_{\bullet}^n$ sends a cycle $x$ to $0$ if and only $x$ is a boundary. This is equivalent to saying that $\varepsilon_*:H_{\bullet}(C_{\bullet}^n)\rightarrow H_{\bullet}(D_{\bullet}^n)$ is injective.

Now $D_{\bullet}^n$ is a cubicalization of the quotient of $\mathbb{R}^r$ by the action of $\mathbb{Z}^r$. Recall that $C_{\bullet}^n$ is the cubicalization of the quotient of $\mathbb{R}^r$ by the group of relations among the $t_i$, call this group $G$. Now $\varepsilon$ corresponds to the natural covering map $\mathbb{R}^r/G\rightarrow \mathbb{R}^r/\mathbb{Z}^n$. 

Now $H_k(C_{\bullet}^n)$ is identified with $\bigwedge_{\mathbb{Q}}^k(\mathbb{Q}\otimes G)$ and $H_{k}(D_{\bullet}^n)$ is identified with $\bigwedge_{\mathbb{Q}}^k(\mathbb{Q}\otimes \mathbb{Z}^r)$, and $\varepsilon_*$ is the map $\bigwedge_{\mathbb{Q}}^k(\mathbb{Q}\otimes G)\rightarrow \bigwedge_{\mathbb{Q}}^k(\mathbb{Q}\otimes \mathbb{Z}^r)$ induced by the inclusion $G\rightarrow \mathbb{Z}^r$. This map is injective, giving an alternate proof of \Cref{lem:aug}.

\subsection{ An Alternate Proof of \Cref{lem:lowerdimcharacterization} }\label{sec:flaglowerdim}

A result of Koutschan, Ponomarchuk, and Schicho \cite{KPS2026} gives an exact condition for when a continuous piecewise linear function $f$ in $n$ inputs can be written as a linear combination of ``lower-dimensional'' functions. Here we show how to derive \Cref{lem:lowerdimcharacterization} from their result. The results of \cite{KPS2026} are stated over $\mathbb{R}$, so in this section we work with $\mathbb{R}$ coefficients instead of $\mathbb{Q}$.

The results of \cite{KPS2026} are stated over the reals. We expect that the results also hold with rational coefficients, but in this section we will allow arbitrary real coefficients. 

Let $S$ and $Q$ be finite dimensional real vector spaces, and let $g:S\rightarrow Q$ be a piecewise constant function. We allow $g$ to be defined only ``almost everywhere'' (precisely, we say that for ``almost all'' $x\in S$, $g(x)$ is defined and constant on a neighborhood of $x$). Let $H\subseteq S$ be an oriented codimension $1$ subspace. For $x\in H$ and $\epsilon>0$, let $B_\epsilon^+(x)$ and $B_{\epsilon}^-(x)$ be the two halves of the ball of radius $\epsilon$ around $x$, split by $H$, with $B_\epsilon^+(x)$ being the half on the positive side of $H$. For almost all $x\in H$, $g$ is constant on $B_\epsilon^+(x)$ and $B_\epsilon^-(x)$. We define $\Delta_g(H):H\rightarrow Q$ to be the piecewise constant, almost everywhere defined function that is the difference of the values of $g$ on $B_\epsilon^+$ and $B_\epsilon^-$, when this exists. 

An oriented flag $\mathcal{H}=(H_1, \dots, H_k)$ of length $k$ in $S$ is a sequence of oriented codimension $1$ affine subspaces $H_1\subseteq S$ and $H_{i+1}\subseteq H_i$. For $g:S\rightarrow Q$ piecewise constant and almost everywhere defined, we define $\Delta_g(\mathcal{H})$ inductively as follows: we define $g_1=\Delta_{g}(H_1)$ and $g_{i+1}=\Delta_{g_i}(H_{i+1})$, and set $\Delta_g(\mathcal{H})=g_k$. We say that $\mathcal{H}$ is \emph{critical} for $g$ if $g_k$ is non-zero.

Let $\mathcal{H}=(H_1, \dots, H_k)$ be a flag, and let $v_1, \dots, v_k$ be vectors such that $v_i$ is the normal vector to $H_i$ in $H_{i-1}$, viewing $H_0$ as the base space $S$. Then, where $\Delta_g(\mathcal{H})$ is defined, it is equal to:

\[\Delta_g(\mathcal{H})(x)=\lim_{\epsilon_k\rightarrow 0^+}\dots\lim_{\epsilon_1\rightarrow 0^+}\sum_{(s_1, \dots, s_k)\in \{-1, 1\}^{k}}g(x+s_1\epsilon_1v_1+\dots+s_k\epsilon_kv_k)\cdot\prod_{i=1}^ks_k\]

If $f:\mathbb{R}^n\rightarrow \mathbb{R}$ is a continuous piecewise linear function, then the gradient $\nabla f:\mathbb{R}^n\rightarrow \mathbb{R}^n$ is a piecewise constant, almost everywhere defined function. The function $f$ is said to be a $\max_k$ if it can be written $f=\max(\ell_1, \dots, \ell_k)$, where each $\ell_i$ is an affine function. The following is proved in \cite{KPS2026}:

\begin{theorem}(\cite{KPS2026}, Theorem 1)\label{thm:flagcondition}
Let $f:\mathbb{R}^n\rightarrow \mathbb{R}$ be a continuous piecewise linear function. Then we can express $f$ as a linear combination of $\max_{k+1}$ functions if and only if every critical flag for $\nabla f$ has length $\le k$.
\end{theorem}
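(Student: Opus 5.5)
The statement has two directions, and I would treat them separately. I would also use the polytope duality from the introduction throughout: every $f\in\CPWL_n$ is a difference of support functions $h_X-h_Y$, and $\nabla f$ is piecewise constant with values among the vertices of $X$ and $Y$.

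\emph{The ``only if'' direction.} I would first note that $\Delta_g(\mathcal H)$ is linear in $g$, so it suffices to treat a single $f=\max(\ell_1,\dots,\ell_{k+1})$. For such an $f$, $\nabla f=h_P$-gradient where $P=\conv\{\nabla\ell_1,\dots,\nabla\ell_{k+1}\}$, so $\dim P\le k$. Take a flag $\mathcal H=(H_1,\dots,H_m)$ and a generic point $x\in H_m$. Near $x$, $\nabla f$ is the gradient of the support function of the face $P^u$ of $P$ in the direction $u$ of $x$. The iterated difference in the displayed limit formula is an alternating sum of vertices of $P^u$ over the sign cube $\{-1,1\}^m$. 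I would show that this sum vanishes unless the vertices selected by the $2^m$ sign patterns affinely span an $m$-dimensional face, i.e.\ unless $\dim P^u\ge m$. The induction is on $m$: one difference across $H_1$ reduces to the face of $P$ normal to $v_1$ paired with its opposite, and a further difference inside $H_1$ needs one more independent direction. Since $\dim P\le k$, every flag of length $m>k$ is non-critical.

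\emph{The ``if'' direction.} I would start from the Wang--Sun representation \eqref{eqn:maxn}, which writes $f$ as a signed sum of $\MAX_{n+1}$ terms, i.e.\ as $h_X-h_Y$ with $X,Y$ sums of simplices. I would then induct downward on the largest $m$ for which $\max_{m+1}$ terms are needed, and at each stage replace all $\max_{m+1}$ terms (with $m>k$) by combinations of $\max_m$ terms. To do this, I would split the $m$-dimensional contribution across the $m$-dimensional linear subspaces $S$ that support it. For each $S$, the flag data of length $m$ inside $S$ is exactly the top-dimensional Jessen--Thorup/McMullen invariant of the corresponding class in the polytope algebra of $S$. Since no flag of length $m$ is critical, this invariant vanishes. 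The analogue of \Cref{lem:lowerdimcharacterization} and \Cref{thm:pivanishingcondition} in $S$ then gives that the class is zero modulo polytopes of dimension $<m$. Each such lower-dimensional polytope, after triangulation and \Cref{lem:additivity}, contributes only $\max_{m}$ functions, which completes the inductive step.

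\emph{Main obstacle.} The hard step is in the ``if'' direction: showing that the vanishing of the length-$m$ flag invariants really forces the class to lie in the span of lower-dimensional pieces. This requires the following.
\begin{itemize}
\item Identifying the flag invariants with the full set of Jessen--Thorup-type face invariants (volume together with iterated face maps $h_{Q,v}$).
\item Making the decomposition into per-subspace pieces canonical, so that cancellations between different subspaces $S$ do not obstruct the argument.
\end{itemize}
To do this, I would try to localize using the fan of $f$: restrict to a small neighbourhood of a generic point of each $(n-m)$-dimensional cell, and there apply the $m$-dimensional vanishing theorem in the normal space.
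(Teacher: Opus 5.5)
The paper contains no proof of this statement. It is imported from \cite{KPS2026} and used only in \Cref{sec:flaglowerdim}, in the opposite direction to yours: there it rederives the ``if'' part of \Cref{lem:lowerdimcharacterization}. Reversing that arrow is legitimate, because the main-text proof of \Cref{lem:lowerdimcharacterization} goes through Jessen--Thorup and \Cref{lem:flaginduction} is proved directly, so nothing is circular. Your downward induction on $m$, using that terms of dimension $<m$ are invisible to length-$m$ flags, is a sound architecture, and your ``only if'' direction is essentially right.

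\textbf{First gap: scope.} The theorem concerns arbitrary CPWL $f$ and maxima of \emph{affine} functions. Wang--Sun with affine $A_i$ does not give $f=h_X-h_Y$, so your ``if'' direction only covers positively homogeneous $f$. The same slip appears in your ``only if'' part: $\nabla\max_i\ell_i$ is not $\nabla h_P$ with $P=\conv\{\nabla\ell_i\}$ unless the $\ell_i$ are linear. Near a point $x$ it equals $\nabla h_{P_x}(\cdot-x)$, where $P_x$ is the hull of the gradients active at $x$. That still has at most $k+1$ vertices, which is all the argument needs. Also, the recursion there should peel off the outermost normal $v_m$, as in \Cref{lem:flaginduction}, not $v_1$: the jump across $H_1$ is not the difference of the face support functions in directions $\pm v_1$.

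For the ``if'' direction you need two further steps, neither of which appears in the proposal:
\begin{enumerate}
\item Group the $m$-dimensional $\max_{m+1}$ terms by the affine $(n-m)$-flat $L$ on which all their pieces tie. Each such term is then a constant plus $h_\Delta(\cdot-x_L)$ with $x_L\in L$.
\item First reduce $\max_{m+1}$ terms with affinely dependent gradients to $\max_m$ terms. One way is to apply \Cref{lem:additivity} to the region below the upper hull of the lifted polytope $\conv\{(\nabla\ell_i,\ell_i(0))\}\subset\R^{n+1}$.
\end{enumerate}

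\textbf{Second gap: the key step.} The step you call the main obstacle is left open, and the identification you propose for it is false. For $\dim S\ge 2$, complete-flag data does not see volume: a parallelogram has all flag invariants zero but positive area. So the flag conditions cannot be fed into \Cref{thm:pivanishingcondition} on $\mathcal B_S$.

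What closes the step is to stay in $\widetilde V_S\cong\Pi^1_S$, where volume plays no role, and induct on $\dim S$.
\begin{itemize}
\item By \Cref{lem:flaginduction}, the complete-flag invariants of $(X,Y)$ with last normal $v$ equal those of $(F_1+G_2,F_2+G_1)$ translated into $v^\perp$.
\item Every complete flag $(H_i')$ of $v^\perp$ arises this way, via $H_i=H_i'\oplus\R v$ for $i<m$ and $H_m=0$.
\item Hence, by induction, each such pair is $0$ in $\widetilde V_{v^\perp}$, and the ``if'' part of \Cref{lem:lowerdimcharacterization} gives $(X,Y)=0$. The base case $\dim S=1$ just says the two lengths agree.
\end{itemize}
\Cref{lem:lowerdimcharacterization} is proved over $\Q$, so for real coefficients you need its real analogue. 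This is available, since Jessen--Thorup works over any subfield of $\R$.

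Your other worry, that the split into per-subspace pieces may not be canonical, is not an obstacle. For an $m$-dimensional term with direction space $S$, a length-$m$ flag can be critical only if $H_m=S^\perp$ (in the affine setting, $H_m=L$). The reason is that at a generic $x\in H_m$ the local polytope is the face in direction $x$, which is $m$-dimensional only when $x\perp S$. Thus the flags ending in $S^\perp$ see exactly $f_S$, and the pieces separate without any localization.
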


A piecewise linear function is said to be homogeneous if every piece extends to a linear function (i.e. an affine function that is $0$ at $0$). Given a polytope $X$ in $\mathbb{R}^n$, its support function $h_X:\mathbb{R}^n\rightarrow \mathbb{R}$ is defined by $h_X(v)=\max_{x\in X}(v\cdot x)$. This is a homogeneous piecewise linear function. There is a linear map from $V_S$ to the space of homogeneous piecewise linear functions on $S$ that sends $(X, Y)$ to $(h_X, h_Y)$. This is an isomorphism.

A \emph{homogeneous} $\max_{k+1}$ (that is, a max of $k$ linear functions) is the support function of a polytope with $k+1$ vertices, which is at most $k$ dimensional. So \Cref{thm:flagcondition} is close to giving a characterization of $\widetilde{V}_S$. We need one more lemma:

\begin{lemma}\label{lem:homogeneouscorrection}
If $f:\mathbb{R}^n\rightarrow \mathbb{R}$ is a homogeneous piecewise linear function that can be written as a sum $\max_k$ functions, then $f$ can be written as a sum of homogeneous $\max_k$ functions.
\end{lemma}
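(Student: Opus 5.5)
The plan is to homogenize by taking a limit at the origin, i.e.\ to pass to the ``tangent function'' at $0$. Suppose $f=\sum_i c_i g_i$, where each $g_i=\max(\ell_{i,1},\dots,\ell_{i,k})$ is a (possibly inhomogeneous) $\max_k$ with affine $\ell_{i,j}(x)=\langle a_{i,j},x\rangle+b_{i,j}$. For $t>0$ we have $f(x)=t f(x/t)$ by homogeneity. Hence
\[f(x)=\sum_i c_i\, t\,g_i(x/t)=\sum_i c_i \max_j\bigl(\langle a_{i,j},x\rangle+t b_{i,j}\bigr).\]
The key step is to let $t\to 0^+$. Each term converges pointwise (indeed uniformly on compact sets) to the limit below, and the left-hand side does not depend on $t$. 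This gives
\[f(x)=\sum_i c_i\max_j \langle a_{i,j},x\rangle,\]
a sum of homogeneous $\max_k$ functions, and the coefficients $c_i$ are unchanged.

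To make this rigorous, I will first note that $t\,g_i(x/t)=\max_j(\langle a_{i,j},x\rangle+t b_{i,j})$. This holds because multiplication by $t>0$ commutes with $\max$.

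Next I will bound the error. The map $t\mapsto \max_j(\langle a_{i,j},x\rangle+tb_{i,j})$ differs from its value at $t=0$ by at most $t\max_j|b_{i,j}|$, since $\max$ is $1$-Lipschitz in the sup norm. Hence the whole sum differs from $\sum_i c_i\max_j\langle a_{i,j},x\rangle$ by at most $t\sum_i|c_i|\max_j|b_{i,j}|$. This tends to $0$, and because the left-hand side equals $f(x)$ for every $t$, the identity follows exactly.

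Finally, I will point out that the resulting functions $\max_j\langle a_{i,j},x\rangle$ are homogeneous $\max_k$ functions (some of the $a_{i,j}$ may coincide, which is harmless). Nothing about the field of coefficients changes: the $a_{i,j}$ and $c_i$ are untouched, so rationality is preserved if needed.

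I do not expect a real obstacle here. The only point needing care is that the identity $f(x)=t f(x/t)$ uses positive homogeneity of $f$. In the paper, ``homogeneous'' means every piece is linear, and for a continuous piecewise linear function whose pieces are all linear this positive homogeneity holds. The cells can be taken as cones after refining, but more simply, on each piece $f$ agrees with a linear function, and a ray from $0$ meets finitely many pieces, on each of which $f$ is linear. Together with continuity at $0$, where $f(0)=0$, this gives $f(sx)=sf(x)$.
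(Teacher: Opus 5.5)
Your proof is correct, but it takes a different route from the paper, and your description of it does not match the limit you actually compute. Since $t\,g_i(x/t)$ evaluates $g_i$ at the far-away point $x/t$, letting $t\to 0^+$ computes $\lim_{s\to\infty} g_i(sx)/s$. That is the \emph{recession function} of $g_i$, i.e.\ its behavior at infinity, not a tangent function at the origin.

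The paper localizes at the origin instead. It adds constants so that each $\max_k$ term vanishes at $0$; these constants cancel because $f(0)=0$. It then deletes the affine pieces that are inactive at $0$. This leaves homogeneous maxima that agree with the originals on a neighborhood of $0$, so the identity holds near $0$, and homogeneity of both sides extends it to all of $\mathbb{R}^n$.

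In effect, the paper applies the linear operator $g\mapsto \lim_{s\to 0^+}(g(sx)-g(0))/s$, which keeps only the slopes $a_{i,j}$ of pieces active at $0$. You apply $g\mapsto\lim_{s\to\infty}g(sx)/s$, which keeps all of them. Both operators are linear, fix positively homogeneous functions, and turn a $\max_k$ into a homogeneous maximum of at most $k$ linear forms, so either proves the lemma.

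Your version is more economical. The single bound $t\sum_i\lvert c_i\rvert\max_j\lvert b_{i,j}\rvert$ disposes of all constants at once. It needs no normalization at $0$, no appeal to $f(0)=0$, and not the fact that a homogeneous function is determined by its germ at $0$. The paper's version can shorten each maximum, since only the active pieces survive.

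One small correction to your last paragraph. Positive homogeneity along a ray follows from continuity at the breakpoints $s>0$, which forces consecutive linear pieces $L_m, L_{m+1}$ to satisfy $L_m(x)=L_{m+1}(x)$. Continuity at $0$ is not what is needed there.
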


\begin{proof}
We write:

\[f+\sum_{i=1}^ka_i-\sum_{i=1}^jb_i=0\]

Where each $a_i$ and $b_i$ is a $\max_k$ term. Now obtain $a_i'$ from $a_i$ by adding a constant so that $a_i'(0)=0$. Since $f(0)=0$, all the constant terms added must cancel, so:

\[h_X-h_Y+\sum_{i=1}^ja_i'-\sum_{i=1}^m b_i'=0\]

Now each $a_i'$ or $b_i'$ is a max of $n$ terms, plus a constant. Write $a_1'=c_0+\max(\ell_1+c_1, \dots,\ell_n+c_n)=c_0+\max(\ell_1+c_0+c_1, \dots,\ell_n+c_0+c_n)$, where each $\ell_i$ is linear and each $c_i$ is a constant. We know that $a_1(0)=0$, so all the $c_0+c_i$ are $\le 0$. Obtain $a_1''$ from $a_1$ by removing all the $\ell_i+c_0+c_i$ terms that have $c_0+c_i<0$. Note that $a_1''=a_1'$ in a neighborhood of $0$. We similarly define $a_i''$ and $b_i''$ for each $a_i$, and notice that:

\[f+\sum_{i=1}^ja_i''-\sum_{i=1}^mb_i''=0\]

\noindent in a neighborhood of $0$. However, both sides are homogeneous, and a homogeneous piecewise linear function is determined by its behavior near $0$, so both sides are equal. 
\end{proof}

We now have a characterization of $\widetilde{V}_S$. Let $(X, Y)$ represent a class in $\widetilde{V}_S$ and let $g=\nabla (h_X-h_Y)$. Then $(X, Y)=0$ in $\widetilde{V}_S$ if, for every flag $\mathcal{H}$ of length $n$, $\Delta_g(\mathcal{H})=0$. The converse is also true, but this won't be relevant for our approach.

We say that a flag $\mathcal{H}(H_1, \dots, H_k)$ is homogeneous if every $H_i$ is a vector space (i.e. it contains $0$). The following shows that, if we want to apply \Cref{thm:flagcondition} to a homogeneous function, then we only need to consider homogeneous flags.

\begin{lemma}\label{lem:homogeneousflag}
If $f$ is a homogeneous piecewise linear function, $g=\nabla f$ and $\mathcal{H}$ is a flag, then $\Delta_g(\mathcal{H})=0$ unless $\mathcal{H}$ is homogeneous.
\end{lemma}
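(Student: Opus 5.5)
The plan is to use homogeneity of $g=\nabla f$: since $f$ is positively homogeneous of degree one, $g$ is positively homogeneous of degree zero, i.e.\ $g(\lambda x)=g(x)$ for all $\lambda>0$ wherever defined. This invariance passes to each iterated jump function, so it suffices to show that if some $H_i$ misses the origin, the corresponding jump vanishes.

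First I will show, by induction along the flag, that each $g_i=\Delta_{g_{i-1}}(H_i)$ is supported (almost everywhere) on $H_i\cap \Sigma$, where $\Sigma$ is a finite union of linear hyperplanes: the non-differentiability locus of $f$, which is a union of cones through $0$ because $f$ is homogeneous. More precisely, I will maintain the stronger invariant that $g_{i}$ is determined by a finite collection of linear subspaces $L$ together with piecewise-constant data on them that is invariant under positive scaling inside $L$.

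Let $H_i$ be the first non-homogeneous hyperplane in the flag, so $H_1,\dots,H_{i-1}$ are linear. Then $g_{i-1}$ is a degree-zero homogeneous, almost-everywhere defined function on the linear space $H_{i-1}$. Its discontinuity set is a finite union of codimension-one cones through $0$ in $H_{i-1}$; each such cone spans a linear hyperplane of $H_{i-1}$.

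Now consider the affine hyperplane $H_i$, which does not contain $0$. For $H_i$ to carry a nonzero jump of $g_{i-1}$ on a set of positive $(\dim H_i)$-measure, $H_i$ would have to coincide with a positive-measure part of one of these discontinuity cones. Any open piece of such a cone spans a linear hyperplane, and that hyperplane contains $0$. An affine hyperplane sharing an open subset with a linear hyperplane equals it, which contradicts $0\notin H_i$. Hence $H_i$ meets the discontinuity set only in a set of lower dimension, so $g_i=0$ almost everywhere. Since jumps of the zero function vanish, all later $g_j$ are also zero, and therefore $\Delta_g(\mathcal{H})=0$.

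The main obstacle is making the first step rigorous. One must check that the jump across a linear hyperplane of a degree-zero homogeneous piecewise-constant function is again degree-zero homogeneous, with discontinuities along linear cones. This follows because scaling by $\lambda>0$ preserves linear $H$ and its two sides, and maps $B^\pm_\epsilon(x)$ to $B^\pm_{\lambda\epsilon}(\lambda x)$. One must also handle the ``almost everywhere'' bookkeeping carefully.
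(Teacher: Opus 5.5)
Your proof is correct. The paper's proof consists only of the word ``Straightforward,'' and your homogeneity argument is the natural way to fill it in: as long as $H_1,\dots,H_{i-1}$ are linear, each $g_j$ is degree-zero homogeneous and constant off finitely many linear hyperplanes of $H_j$, so the first $H_i$ with $0\notin H_i$ coincides with none of them, $g_i$ vanishes almost everywhere, and so do all later jumps.

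The one place to tighten is your vague ``stronger invariant.'' It is cleanest to note that $f$ is linear on each chamber of the central arrangement formed by the hyperplanes $\{\ell_a=\ell_b\}$, where $\ell_a$ ranges over the linear pieces of $f$. Taking the jump across a linear hyperplane $H_j$ then gives a function that is constant on the chambers of the restricted arrangement in $H_j$, which is still central.
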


\begin{proof}
Straightforward.
\end{proof}

\begin{lemma}\label{lem:flaginduction}
Let $X$ be a polytope in $\mathbb{R}^n$. Let $\mathcal{H}=\{H_1, \dots, H_{n}\}$ be a homogeneous flag (so $H_n=0$). Let $v_1, \dots, v_n$ be vectors such that $v_i$ is a normal vector to $H_i$ in $H_{i-1}$, viewing $H_0=\mathbb{R}^n$. Let $F_1$ and $F_2$ be the faces of $X$ in directions $v_n$ and $-v_n$ respectively. Let $S$ be the orthogonal complement of $v_n$ in $\mathbb{R}^n$. For $i=1, \dots, n-1$, let $H_i'$ be the orthogonal complement of $v_n$ in $H_i$, so $\mathcal{H}'=\{H_1', \dots, H_{n-1}'\}$ is a flag on $S$. Let $f=\nabla h_X$ and let $f'=\nabla h_{F_1}-\nabla h_{F_2}$, where we restrict $f'$ to a function $S\rightarrow \mathbb{R}^n$. Then $\Delta_f(\mathcal{H})(0)=\Delta_{f'}(\mathcal{H}')(0)$.
\end{lemma}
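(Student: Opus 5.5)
We compute $\Delta_f(\mathcal{H})(0)$ with the iterated-limit formula stated above, ordering the limits so that the $\epsilon_n$ direction is handled last. The key observation is that the differences taken along $v_1,\dots,v_{n-1}$ are computed at points $x=s_n\epsilon_n v_n+\dots$ lying near the ray $\pm v_n$, and that $\nabla h_X(w)$ is the vertex of $X$ maximizing $\langle w,\cdot\rangle$.

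First we fix the sign $s_n=\pm1$ and a small $\epsilon_n>0$, and look at $w=s_n\epsilon_n v_n+u$ with $u$ small relative to $\epsilon_n$. This is exactly the regime of the inner limits $\epsilon_1,\dots,\epsilon_{n-1}\to 0$ taken before $\epsilon_n\to0$. For such $w$, the maximizer of $\langle w,\cdot\rangle$ over $X$ lies in the face of $X$ in direction $s_nv_n$, so it lies in $F_1$ or $F_2$. Among points of that face, it is the maximizer of $\langle u,\cdot\rangle$, since the $v_n$-component is constant on the face. Hence, locally,
\[
\nabla h_X(s_n\epsilon_n v_n+u)=\nabla h_{F_{s}}(u'),
\]
where $u'$ is the projection of $u$ to $S$ and $F_s$ is $F_1$ or $F_2$ according to $s_n$. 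More precisely, by homogeneity (\Cref{lem:homogeneousflag} context), $\nabla h_X$ is constant on rays. So for fixed $\epsilon_n$ the function $u\mapsto\nabla h_X(s_n\epsilon_n v_n+u)$ near $u=0$ agrees with the gradient of the support function of that face. We prove this via the standard fact that the normal fan of $X$ near the direction $v_n$ is the normal fan of the face, translated up the $v_n$ axis.

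Second, we note that the remaining vectors $v_1,\dots,v_{n-1}$ need not lie in $S$, since $v_i$ is normal to $H_i$ inside $H_{i-1}$, and $H_{n-1}$ is spanned by $v_n$. Each $v_i$ with $i<n$ is orthogonal to $H_i\ni v_n$, so $v_i\perp v_n$ and $v_i\in S$. Moreover, $v_i$ is normal to $H_i'$ inside $H_{i-1}'$. Therefore the inner $n-1$ iterated differences are exactly those defining $\Delta(\mathcal{H}')$ in $S$. By the first step and translation invariance of that formula along $v_n$, they evaluate to $\Delta_{\nabla h_{F_1}}(\mathcal{H}')(0)$ for $s_n=+1$ and $\Delta_{\nabla h_{F_2}}(\mathcal{H}')(0)$ for $s_n=-1$.

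Finally, the outer sum over $s_n$ with sign $s_n$ gives the difference of the two, which by linearity of $\Delta$ in $g$ equals $\Delta_{f'}(\mathcal{H}')(0)$.

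The main obstacle is justifying the interchange and the locality in the first step. The inner limits must be taken with $\epsilon_1,\dots,\epsilon_{n-1}$ tending to $0$ before $\epsilon_n$, which the formula's order guarantees. We also need that for all sufficiently small $|u|<c\,\epsilon_n$ the maximizer stays in the face. This follows because $X$ has finitely many vertices: vertices outside the face lose by a margin proportional to $\epsilon_n$. Genericity (where the gradients are defined) is inherited from almost-everywhere definedness on $S$.
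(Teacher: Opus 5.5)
Your proposal is correct and is essentially the paper's argument. The paper also rests on the observation that, for $x\in S$ small relative to the $v_n$-component, $\nabla h_X(\pm v_n+x)$ is the vertex of $F_1$ (resp.\ $F_2$) maximizing $\langle x,\cdot\rangle$, and then reads the identity off the iterated-limit formula with $\epsilon_n$ outermost. Your margin argument, your check that $v_1,\dots,v_{n-1}\in S$ are normal to $H_i'$ in $H_{i-1}'$, and your final linearity step simply make explicit what the paper compresses into ``it follows.'' You should delete the stray claim that $v_1,\dots,v_{n-1}$ ``need not lie in $S$'', since your next sentence correctly shows that they do.
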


\begin{proof}
We know that:

\[\Delta_f(\mathcal{H})(0)=\lim_{\epsilon_n\rightarrow 0^+}\dots\lim_{\epsilon_1\rightarrow 0}\sum_{(s_1, \dots, s_n)\in \{-1, 1\}^{n}}g(s_1\epsilon_1v_1+\dots+s_n\epsilon_nv_n)\cdot \prod_{i=1}^ns_i\]

\noindent and:

\[\Delta_{f'}(\mathcal{H}')(0)=\lim_{\epsilon_{n-1}\rightarrow 0^+}\dots\lim_{\epsilon_1\rightarrow 0^+}\sum_{(s_1, \dots, s_{n-1})\in \{-1, 1\}^{n-1}}f'(s_1\epsilon_1v_1+\dots+s_{n-1}\epsilon_{n-1}v_{n-1})\cdot \prod_{i=1}^{n-1}s_i\]

Note that $h_X$, $h_{F_1}$ and $h_{F_2}$ are homogeneous, so $f(\lambda x)=f(x)$ for $\lambda>0$, and similarly for $f'$. We observe that, for almost all $x\in S$:

\[\nabla h_{F_1}(x)=\lim_{\epsilon\rightarrow 0^+}f(v_n+\epsilon x)\]

To see this, we notice that, for $v$ a direction, $\nabla f(v)$ is equal to the vertex $u$ of $X$ that maximizes $u\cdot v$ ($\nabla f(x)$ is defined when this is unique). Now $F_1$ is the set of points $u$ in $X$ maximizing $u\cdot v_n$. So when $x$ is sufficiently small, $f(v_n+x)$ is the vertex $u$ of $F_1$ maximizing $u\cdot x$, which is equal to $h_{F_1}(x)$. 

Similarly, for almost all $x\in S$, we have:

\[\nabla h_{F_2}(x)=\lim_{\epsilon\rightarrow 0^+}f(-v_n+\epsilon x)\]

So for almost all $x\in S$, we have:

\[f'(x)=\lim_{\epsilon\rightarrow 0^+}(f(v_n+\epsilon x)-f(-v_n+\epsilon_x))\]

It follows that $\Delta_f(\mathcal{H})(0)=\Delta_{f'}(\mathcal{H}')(0)$.
\end{proof}

\begin{proof}[Proof of the ``if'' direction of \Cref{lem:lowerdimcharacterization}]
Let $X$ and $Y$ be polytopes in $\mathbb{R}^n$. Suppose that, for every direction $v$, $(F_1+G_2, F_2+G_1)=0$ in $\widetilde{V}_S$, where $S$ is the orthogonal complement of $v$,  $F_1$ and $F_2$ are the faces of $X$ in directions $v$ and $-v$ respectively, and $G_1$ and $G_2$ are faces of $Y$ in directions $v$ and $-v$ respectively.

We want to show that there are $A_i$ and $B_i$ such that $X+A_1+\dots+A_k=Y+B_1+\dots+B_j$. Let $f=\nabla (h_X-h_Y)$. By \Cref{thm:flagcondition} and \Cref{lem:homogeneouscorrection,lem:homogeneousflag}, it is sufficient to show that, for any homogeneous flag $\mathcal{H}=\{H_1, \dots, H_n\}$, we have $\Delta_{f}(\mathcal{H})(0)=0$. This is by induction on $n$.

Let $\mathcal{H}=\{H_1, \dots, H_n\}$ be a flag and let $v$ be a vector in $H_n$ representing the orientation of $H_{n-1}$. Let $F_1, F_2, G_1$ and $G_2$ be faces of $X$ and $Y$ in directions $\pm v$, as in the statement of the lemma, but \emph{not} translated. Let $f'=\nabla (h_{F_1}+h_{G_2}-h_{F_2}-h_{G_1})$. By \Cref{lem:flaginduction}, we have $\Delta_{f'}(\mathcal{H}')(0)=\Delta_f(\mathcal{H})(0)$, where $\mathcal{H}'$ is the flag obtained as in \Cref{lem:flaginduction}.

Let $F_1', F_2', G_1'$ and $G_2'$ be $F_1$ etc. translated to lie in the subspace $S$ orthogonal to $v$. Let $f''=\nabla (h_{F_1'}+h_{G_2'}-h_{F_2'}-h_{G_1'})$. Translating a polytope changes its support function by adding a linear function, so $f''$ differs from $f'$ by a constant. Since $\mathcal{H}'$ has length $\ge n-1\ge 1$, we have $\Delta_{f''}(\mathcal{H}')(0)=\Delta_{f'}(\mathcal{H}')(0)$. 

We know that $(F_1'+G_2', F_2'+G_1')=0$ in $\widetilde{V}_S$ by assumption. In the base case $n=2$, this means that $F_1'+G_2', F_2'+G_1'$ are translates of each other, so $f''$ is constant and $\Delta_{f''}(\mathcal{H}')=0$. When $n>2$, the ``only if'' part of \Cref{lem:lowerdimcharacterization} shows that $(F_1'+G_2', F_2'+G_1')$ satisfies the conditions of the lemma. So we know by inductive assumption that $\Delta_{f''}(\mathcal{H}')(0)=0$. In either case, we conclude that $\Delta_f(\mathcal{H})(0)$, so $h_X-h_Y$ can be written in terms of homogeneous $\max_n$ terms, which means that $(X, Y)=0$ in $\widetilde{V}_{\mathbb{R}^n}$.
\end{proof}

\part{A Direct Combinatorial Proof}
Here we explicitly construct a signed sum of paramids whose support functions equal \(\maxfn{n}\) modulo lower-dimensional terms. 

\section{Preliminaries}
Throughout this part, let $\upto{n} \coloneqq \set{1, \ldots, n}$ and let
\[
  \Delta
  \coloneqq
  \conv\set{e_1, \ldots, e_n}
  \subset
  H
  \coloneqq
  \rset*{x \in \R^n}{\sum_i x_i = 1}
\]
be the standard $(n-1)$-simplex living in the affine subspace $H$ such that $h_\Delta = \maxfn{n}$. For every index $v \in \upto{n}$, let
\[
  V_{-v} \coloneqq \upto{n} \setminus \set{v}, \qquad
  H_{-v} \coloneqq \rset{p \in H}{p_v = 0}, \qquad
  \Delta_{-v} \coloneqq \conv\rset{e_i}{i \in V_{-v}}.
\]
Furthermore, let $\cL_{n-2}$ be the span of support functions of polytopes in $H$ of dimension at most $n-2$. Write $f \equiv g$ if $f-g \in \cL_{n-2}$.
We write \enquote{almost everywhere} to mean \enquote{outside a finite union of lower dimensional spaces}.
For a polytope $P$ and a point $q$, the pyramid with apex $q$ and base $P$ is denoted as
\[
  \pyr{q}{P} \coloneqq \conv(\set{q} \cup P).
\]

\subsection{Standard Valuations}
\cref{lem:additivity} gives an inclusion--exclusion identity for support functions of a polytope covered by polytopes. The following lemma uses the same valuation property to handle arbitrary linear relations between indicator functions that hold almost everywhere, yielding an identity of support functions modulo lower-dimensional terms.
\begin{lemma}[Valuation]
  \label{lem:valuation}
  Let \(A \subseteq H\) be a \(d\)-dimensional affine subspace, let \(P_1,\ldots,P_k\subseteq A\) be polytopes, and let \(c_1,\ldots,c_k\in\R\).
  If \(\sum_{i=1}^k c_i\ind{P_i}=0\) almost everywhere on \(A\), then
  \[
    \sum_{i=1}^k c_i h_{P_i}
  \]
  is a linear combination of support functions of polytopes in \(A\) of dimension at most \(d-1\).
  Moreover, for every \(q\in H\setminus A\),
  \[
    \sum_{i=1}^k c_i h_{\pyr{q}{P_i}}
  \]
  is a linear combination of support functions of polytopes in \(\aff(A\cup\{q\})\) of dimension at most \(d\).
\end{lemma}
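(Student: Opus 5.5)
The plan is to reduce the almost-everywhere indicator relation to exact polytope identities via a common subdivision, and then apply \cref{lem:additivity}.

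\emph{Common refinement.} Take the finitely many hyperplanes of $A$ spanned by facets of the $P_i$ (those with $\dim P_i=d$). This arrangement induces a polyhedral subdivision of $A$. Intersecting with a large box, we get finitely many $d$-dimensional cells $C_1,\ldots,C_m$. Every full-dimensional $P_i$ is the union of the cells it contains. For each cell $C_j$, evaluating $\sum_i c_i\ind{P_i}$ at an interior point gives $\sum_{i:\,C_j\subseteq P_i} c_i=0$. The $P_i$ of dimension $<d$ already contribute support functions of dimension at most $d-1$, so we discard them.

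\emph{Valuation step.} For each full-dimensional $P_i$, apply \cref{lem:additivity} to the cover $P_i=\bigcup_{C_j\subseteq P_i}C_j$. Any intersection of two or more distinct cells is a common face, hence has dimension at most $d-1$. This gives
\[
h_{P_i}\;=\;\sum_{j:\,C_j\subseteq P_i} h_{C_j}\;+\;(\text{combination of support functions of polytopes in }A\text{ of dimension}\le d-1).
\]
Here we use that support functions turn Minkowski sums into sums, so the identity of \cref{lem:additivity} becomes a linear identity of support functions. Summing with coefficients $c_i$, the coefficient of each $h_{C_j}$ is $\sum_{i:\,C_j\subseteq P_i}c_i=0$. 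Only lower-dimensional terms remain, which proves the first claim.

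\emph{Cone statement.} Coning with $q\notin A$ commutes with the construction above. The pyramids $\pyr{q}{C_j}$ cover $\pyr{q}{P_i}$, and the pairwise intersections are $\pyr{q}{C_j\cap C_{j'}}$. This holds because every point of the pyramid lies on a unique segment from $q$ to a point of the base, and $q\notin A$. Each such intersection has dimension at most $d$ inside the $(d+1)$-dimensional space $\aff(A\cup\{q\})$. Applying \cref{lem:additivity} again, the $h_{\pyr{q}{C_j}}$ terms cancel with the same coefficients. The leftover terms are pyramids over polytopes of dimension at most $d-1$, or the pyramids over the discarded lower-dimensional $P_i$, all of dimension at most $d$.

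\emph{Main obstacle.} The one delicate point is the identity $\pyr{q}{X}\cap\pyr{q}{Y}=\pyr{q}{X\cap Y}$ for $X,Y\subseteq A$. This needs $q\notin A$: then the radial projection from $q$ onto $A$ is well defined on the pyramid minus its apex, and it maps the pyramid onto the base, which gives the identity. Everything else is bookkeeping.
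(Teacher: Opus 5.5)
Your proof is correct. For the first claim it is the paper's argument written out in full: a common refinement by the facet hyperplanes, followed by \cref{lem:additivity}.

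For the cone claim you take a slightly different route. The paper does not re-subdivide. It observes that each section of the pyramids parallel to $A$ is the image of the whole configuration $P_1,\dots,P_k$ under one common homothety centred at $q$. Hence $\sum_i c_i\ind{\pyr{q}{P_i}}=0$ almost everywhere on $\aff(A\cup\{q\})$, and the paper simply invokes the first claim in that $(d+1)$-dimensional space. You instead cone your explicit refinement and rerun additivity, which requires the identity $\pyr{q}{X}\cap\pyr{q}{Y}=\pyr{q}{X\cap Y}$ for $X,Y\subseteq A$. The paper's version is shorter because it reuses the first claim as a black box. Yours has the advantage of exhibiting the remainder explicitly, as pyramids over the lower-dimensional common faces of the cells.

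One small bookkeeping caution. When two cells are disjoint, their pyramids still meet in $\{q\}=\pyr{q}{\emptyset}$. These terms therefore do not drop out as empty (zero) polytopes, unlike the corresponding terms in $A$. They are support functions of points, of dimension $0\le d$, so your conclusion is unaffected.
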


\begin{proof}
  The first assertion follows from full additivity of support functions after a suitable subdivision (see \Cref{lem:additivity} and e.g.~\cite{Schneider2013}).
  For the second, fix \(q\in H\setminus A\) and set \(B=\operatorname{aff}(A\cup\{q\})\).
  Every section of the pyramids parallel to \(A\), apart from the apex, is a common homothetic image of the original polytopes. Their signed indicators therefore cancel almost everywhere on \(B\).
  Apply the first assertion in the \((d+1)\)-dimensional affine space \(B\).
\end{proof}

\section{Replacing Simplices by Paramids}
\subsection{The Parallelepipeds}
Here we describe the parallelepipeds that form the bases of the paramids in our construction.
For $v \in \upto{n}$, an ordering $\pi = (\pi_{1}, \ldots, \pi_{n-1})$ of $V_{-v}$, and a set $M \subseteq \set{2,\ldots,n-1}$ of \emph{marked} indices, define the polytope
\begin{equation}\label{eq:definition-parallelpipeds}
  B_{v,\pi,M}
  \coloneqq
  \rset*{p \in H_{-v}}{
    \begin{array}{@{}r@{\;}c@{\;}c@{\;}c@{\;}l@{\quad}l@{}}
      0  & \le & u_i       & \le & 1 & \forall i\in M,                        \\
      -1 & \le & p_{\pi_i} & \le & 0 & \forall i\in\set{2,\dots,n-1}\setminus M
    \end{array}
  },
\end{equation}
where $u_i = \sum_{j<i}p_{\pi_j}$.

\begin{claim}
  The set $B_{v,\pi,M}$ is a parallelepiped.
\end{claim}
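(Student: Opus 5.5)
The plan is to exhibit $B_{v,\pi,M}$ as the image of a cube under an affine isomorphism. I will introduce $n-2$ linear coordinates on $H_{-v}$ and show that they, together with the constraints, identify $B_{v,\pi,M}$ with a product of unit intervals.

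First, parametrize $H_{-v}$. A point $p\in H_{-v}$ has $p_v=0$ and $\sum_{j=1}^{n-1}p_{\pi_j}=1$. Consequently the coordinates $p_{\pi_1},\dots,p_{\pi_{n-2}}$ are free, and $p_{\pi_{n-1}}=1-\sum_{j<n-1}p_{\pi_j}$ is determined. So the map $p\mapsto (p_{\pi_1},\dots,p_{\pi_{n-2}})$ is an affine isomorphism $H_{-v}\to\R^{n-2}$.

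Next, define for each $i\in\{2,\dots,n-1\}$ the coordinate $y_i \coloneqq u_i=\sum_{j<i}p_{\pi_j}$ if $i\in M$, and $y_i\coloneqq p_{\pi_i}$ if $i\notin M$. All of these are affine functions on $H_{-v}$; for $i=n-1\notin M$ we use $p_{\pi_{n-1}}=1-u_{n-1}$. The key step is to show that $(y_2,\dots,y_{n-1})$ is an affine coordinate system on $H_{-v}$, i.e.\ that the map $H_{-v}\to\R^{n-2}$ is bijective. I will argue by triangular elimination. Knowing $y_2,\dots,y_i$ determines $p_{\pi_1},\dots,p_{\pi_{i-1}}$ (and conversely), by induction on $i$:
\begin{itemize}
\item For $i\in M$: since $u_i=u_{i-1}+p_{\pi_{i-1}}$, this gives $p_{\pi_{i-1}}=y_i-u_{i-1}$. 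Here $u_{i-1}$ is known from the earlier $p$'s (with $u_2=p_{\pi_1}$, so for $i=2$ we get $p_{\pi_1}=y_2$).
\item For $i\notin M$: $y_i=p_{\pi_i}$ gives one coordinate directly, but it is coordinate $\pi_i$, not $\pi_{i-1}$, so the indexing in this case needs care.
\end{itemize}

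To handle the index shift, I will argue that the linear parts $\{u_i : i\in M\}\cup\{p_{\pi_i}: i\notin M,\ 2\le i\le n-1\}$ form a basis of the dual of the direction space $\{w : w_v=0,\ \sum w=0\}$. Equivalently, the only direction $w$ annihilated by all of them is $w=0$.

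So suppose $w$ is annihilated by all of them. Set $U_i=\sum_{j<i}w_{\pi_j}$, with $U_1=0$ and $U_n=\sum_j w_{\pi_j}=0$. The hypotheses read $U_i=0$ for $i\in M$ and $U_{i+1}-U_i=w_{\pi_i}=0$ for $i\notin M$, $2\le i\le n-1$. Both conditions say that $U_{i+1}=U_i$ whenever either $U_i=0$ or the step is zero. I will show $U\equiv0$ by walking downward from $U_n=0$: for each $i$ from $n-1$ down to $2$, if $i\notin M$ then $U_i=U_{i+1}=0$, and if $i\in M$ then $U_i=0$ directly. Thus $U_i=0$ for all $i\ge2$, and with $U_1=0$ this forces every $w_{\pi_j}=U_{j+1}-U_j=0$. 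This gives injectivity, and since the number of functionals, $n-2$, equals $\dim H_{-v}$, the coordinates form a basis.

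It then follows that $B_{v,\pi,M}$ is the preimage of the box $\prod_{i\in M}[0,1]\times\prod_{i\notin M}[-1,0]$ under an affine isomorphism, hence an $(n-2)$-dimensional parallelepiped. Its edge directions are the dual basis vectors, which I may compute explicitly if later lemmas need them. The only delicate point is the independence argument above, and the downward walk settles it.
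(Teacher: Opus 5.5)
Your proof is correct and is essentially the paper's argument. Both show that the $n-2$ slab functionals form an affine coordinate system on $H_{-v}$ in which $B_{v,\pi,M}$ becomes a box. The paper sidesteps your index shift by parametrizing with $p_{\pi_2},\dots,p_{\pi_{n-1}}$ and writing $u_i = 1-\sum_{j\ge i}p_{\pi_j}$, so the change of coordinates is visibly triangular with diagonal $-1$, and your downward walk from $U_n=0$ is exactly back-substitution in that system. Your aside ``both conditions say that $U_{i+1}=U_i$\ldots'' is inaccurate for $i\in M$, but the argument never uses it.
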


\begin{proof}
  Since $p_v=0$ and $\sum_i p_i=1$, the coordinates $p_{\pi_2},\ldots,p_{\pi_{n-1}}$ determine $p$ uniquely.
  The change of coordinates
  \[
    t_i=\begin{cases}
      u_i=1-\sum_{j=i}^{n-1} p_{\pi_j} & i\in M,     \\
      -p_{\pi_i}                   & i\notin M
    \end{cases}
  \]
  is triangular with diagonal entries $-1$, and hence invertible.
  In these coordinates, $B_{v,\pi,M}$ is the cube $[0,1]^{n-2}$.
\end{proof}

\subsection{The Replacement Lemma}
This is the main step that allows us to replace simplices by paramids.
For $v \in \upto{n}$ and every nonempty $R \subseteq V_{-v}$, define the dilated simplex
\[
  Q_{v,R}
  =(n-\abs{R})\Delta_{-v}-\sum_{i\notin R \cup \set{v}}e_i \subset H_{-v},
\]
a dilated and translated copy of the facet $\Delta_{-v}$. Recall \Cref{fig:simpAndPar}.

\begin{lemma}\label{lem:replacement}
  For each $v\in\upto{n}$, the following identity holds:
  \begin{equation*}
    \sum_{\pi}\sum_{M \subseteq \set{2, \ldots, n-1}}
    (-1)^{\abs{M}}
    h_{\pyr{e_v}{B_{v,\pi,M}}}
    \equiv
    (n-1)!\sum_{\emptyset\ne R\subseteq V_{-v}}
    \frac{(-1)^{\abs{R}-1}\abs{R}}{n-\abs{R}}
    h_{\pyr{e_v}{Q_{v,R}}},
  \end{equation*}
  where $\pi$ ranges over all orderings of $V_{-v}$.
\end{lemma}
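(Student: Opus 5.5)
By the second assertion of \Cref{lem:valuation} (coning with apex $e_v\notin H_{-v}$), it suffices to prove the uncone identity of indicator functions
\[
\sum_{\pi}\sum_{M}(-1)^{\abs{M}}\ind{B_{v,\pi,M}}
=(n-1)!\sum_{\emptyset\ne R\subseteq V_{-v}}\frac{(-1)^{\abs{R}-1}\abs{R}}{n-\abs{R}}\ind{Q_{v,R}}
\]
almost everywhere on the $(n-2)$-dimensional affine space $H_{-v}$. Coning then turns this into the claimed identity of support functions modulo terms of dimension at most $n-2$ in $H$, which is exactly $\equiv$. The lower-dimensional sets where the identity may fail, namely hyperplanes $\{p_i=0\}$, $\{u_i=0\}$, $\{u_i=1\}$ and similar, are harmless. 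So I fix a generic point $p\in H_{-v}$, with no coordinate zero and no partial sum equal to $0$ or $1$, and let $s$ be the number of positive coordinates among $p_i$, $i\ne v$.

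\emph{Right-hand side.} A generic $p$ lies in $Q_{v,R}$ iff $p_i\ge -1$ for $i\notin R\cup\{v\}$ and $p_i\ge 0$ for $i\in R$. This is because $Q_{v,R}$ is $\{p\in H_{-v}: p_i\ge -1\ (i\notin R),\ p_i\ge 0\ (i\in R)\}$, as the translation shifts the lower bounds and the dilation factor makes the sum consistent. So I first check this description of $Q_{v,R}$. The right-hand side is then supported where all $p_i>-1$, and it equals $(n-1)!\sum_{\emptyset\ne R\subseteq P}\frac{(-1)^{\abs{R}-1}\abs{R}}{n-\abs{R}}$ with $P$ the set of positive coordinates. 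Evaluating this sum over subsets of an $s$-set is a Beta-integral computation: write $\frac{1}{n-r}=\int_0^1 t^{n-r-1}\,dt$. It should give $(-1)^{s-1}s!(n-1-s)!$. This is the binomial-inversion step mentioned in the overview, and I will verify it in that form.

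\emph{Left-hand side.} Fix $\pi$ and view $p_{\pi_1},\dots,p_{\pi_{n-1}}$ as steps of a walk $u_1=0,u_2,\dots,u_n=1$ with $u_{i+1}=u_i+p_{\pi_i}$. For fixed $\pi$, each index $i\in\{2,\dots,n-1\}$ contributes independently either the condition $0\le u_i\le1$ with sign $-1$ (marked) or the condition $-1\le p_{\pi_i}\le0$ with sign $+1$ (unmarked). Summing over $M$ therefore factorizes:
\[
\sum_M(-1)^{\abs{M}}\ind{B_{v,\pi,M}}(p)=\prod_{i=2}^{n-1}\bigl(\ind{-1\le p_{\pi_i}\le0}-\ind{0\le u_i\le1}\bigr).
\]
Each factor lies in $\{-1,0,1\}$. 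I expect the product to be nonzero only for thermostat walks: the walk ends at $1$, so step $\pi_1$ together with the factor constraints forces an alternation between rising while at or below level $1$ and falling while above it. The main obstacle is to pin down exactly which walks survive and to count them with sign, summing over the $(n-1)!$ orderings to obtain exactly $(-1)^{s-1}s!(n-1-s)!$, and $0$ when some $p_i\le -1$.

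My first attempt at this count is a bijective one. For a given multiset of steps with $s$ positive entries, I would show that the surviving orderings are those in which the positive steps are placed greedily whenever the walk is at or below $1$ and the negative steps are placed when it is above. Each factor is $-1$ exactly at the positive steps other than the first, which gives the sign $(-1)^{s-1}$. The orderings within each class should then contribute $s!$ and $(n-1-s)!$. If this greedy characterization resists, the fallback is induction on $n$, removing the last step $\pi_{n-1}$, combined with the generic-point reductions.
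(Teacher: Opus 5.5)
\textbf{Overall.} Your plan is the paper's proof, but its central step, the left-hand side, is stated as an expectation rather than proved. The reduction via the coning part of \Cref{lem:valuation}, the generic-point evaluation of the right-hand side using $p\in Q_{v,R}\iff R\subseteq P$ (when all $p_i>-1$), and the factorization over $M$ followed by counting thermostat walks (the paper's \Cref{lem:thermostatN}) all match. Your description of $Q_{v,R}$ is correct. The Beta-integral evaluation is a valid substitute for the paper's hockey-stick identity plus binomial inversion. The integrand collapses to $s\,t^{n-1-s}(t-1)^{s-1}$, and $(n-1)!\,s\,(-1)^{s-1}\int_0^1 t^{n-1-s}(1-t)^{s-1}\,dt=(-1)^{s-1}s!\,(n-1-s)!$.

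\textbf{The gap.} You leave the left-hand side as ``I expect'', with an induction fallback. Also, the rule you state, ``rise while at or below level $1$'', is not what the factors enforce locally. At a level $u_i<0$ with $2\le i\le n-1$, the $i$-th factor is nonzero only if $p_{\pi_i}\in(-1,0)$, so the walk is forced to \emph{fall}. This is the missing observation: negative levels are absorbing. A walk that ever goes negative cannot end at $u_n=1$, so it contributes $0$, and every surviving walk has $u_2,\dots,u_n>0$. Consequently:
\begin{itemize}
\item the first step is positive;
\item a level in $(0,1)$ forces a positive step, since a step below $-1$ would make the next level negative;
\item a level above $1$ forces a step in $(-1,0)$.
\end{itemize}
This gives three things at once. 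All negative coordinates lie in $(-1,0)$, so the sum vanishes if some $p_j<-1$. The sign is $(-1)^{s-1}$. A surviving $\pi$ is determined by the orders of its positive and negative steps.

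For the converse you need that the greedy merge of any two such orders never gets stuck and always survives. At level $u_i$ the remaining steps sum to $1-u_i$, so a step of the required sign exists. Steps in $(-1,0)$ taken from above $1$ keep the walk positive, and genericity keeps it off level $1$ until the end. With this argument supplied, your proof is complete and coincides with the paper's.
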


\subsection{The Thermostat Cancellation}
Fix $v \in \upto{n}$. We prove that the signed sum of indicator functions over the parallelepipeds $B_{v, \pi, M}$ defined in \cref{eq:definition-parallelpipeds} has a remarkably simple expression, when summing over all permutations $\pi$ and sets of markings $M$, that only depends on the number of positive entries of the argument.
\begin{lemma}\label{lem:thermostatN}
  For every $v \in \upto{n}$ and almost every $p$ in $H_{-v}$,
  \[
    \sum_{\pi}\sum_{M\subseteq\set{2, \ldots, n-1}}
    (-1)^{\abs{M}}\ind{B_{v,\pi,M}}(p)
    =
    \begin{cases}
      (-1)^{s-1}s!(n-1-s)!
        & p_i>-1\text{ for all }i \in V_{-v} , \\
      0 & \text{otherwise},
    \end{cases}
  \]
  where $s = \abs{\rset{i\in V_{-v}}{p_i>0}}$ and $\pi$ ranges over all orderings of $V_{-v}$.
\end{lemma}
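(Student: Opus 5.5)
For a fixed ordering $\pi$, I would first carry out the sum over $M$. It factorizes because the constraints indexed by different $i\in\set{2,\dots,n-1}$ are independent: each index lies either in $M$ or outside it. So for fixed $\pi$,
\[
  \sum_{M}(-1)^{\abs{M}}\ind{B_{v,\pi,M}}(p)=\prod_{i=2}^{n-1}\bigl(a_i(p)-b_i(p)\bigr),
\]
where $a_i=\ind{[-1,0]}(p_{\pi_i})$ and $b_i=\ind{[0,1]}(u_i)$. I read $u_1=0,u_2,\dots,u_n=1$ as the positions of a walk whose $i$-th step is $p_{\pi_i}$. For generic $p$ no coordinate is $0$ or $\pm1$ and no partial sum equals $0$ or $1$. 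Each factor is then $+1$, $-1$ or $0$, with these cases:
\begin{itemize}
  \item $+1$ exactly when the step is in $(-1,0)$ and $u_i\notin[0,1]$;
  \item $-1$ exactly when $u_i\in(0,1)$ and the step is not in $[-1,0]$;
  \item $0$ otherwise, in particular for a step in $(-1,0)$ taken from $u_i\in(0,1)$.
\end{itemize}

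Next I would determine which walks give a nonzero product. Suppose some $u_i<0$ with $i\ge 2$. The only nonzero option from there is a negative step, so the walk stays negative forever and cannot end at $u_n=1$. Hence every surviving walk has $u_i>0$ for all $i\ge2$. This forces the unconstrained first step $p_{\pi_1}$ to be positive.

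With that established, the survivors are exactly the \emph{thermostat} walks: from a position in $(0,1)$ the walk takes a positive step (factor $-1$), and from a position above $1$ it takes a step in $(-1,0)$ (factor $+1$). If some $p_j<-1$, that step is taken either from $(0,1)$, where it immediately drives the walk below $0$, or from above $1$, where the factor is $0$. Either way the product vanishes, which gives the ``otherwise'' case. If all $p_j>-1$, each surviving walk contributes $(-1)^{s-1}$, since the positive steps other than the first are the ones taken from $(0,1)$.

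What remains is to show that, when all $p_i>-1$, there are exactly $s!(n-1-s)!$ orderings $\pi$ producing a thermostat walk. I would build a bijection with pairs $(\sigma,\tau)$, where $\sigma$ orders the positive coordinates and $\tau$ orders the negative ones. Given $(\sigma,\tau)$, run the thermostat rule from $0$: when at or below $1$, take the next unused positive step; when above $1$, take the next unused negative step. Conversely, every thermostat ordering restricts to such a pair and is recovered from it, because the rule determines at each time which type of step comes next.

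The main obstacle is checking that this greedy process never gets stuck. I would argue by contradiction using $\sum_i p_i=1$ over $V_{-v}$.
\begin{itemize}
  \item Suppose the positives run out while the walk is at or below $1$ and negatives remain. The remaining negatives sum to $1-u<0$, so the current position $u$ exceeds $1$, a contradiction.
  \item Symmetrically, suppose the negatives run out while the walk is above $1$ and positives remain. The remaining positives would push the walk further above $1$ instead of back to $1$, again a contradiction.
\end{itemize}
Positivity of the walk also holds: a fall starts from a position above $1$ and has size less than $1$, so it lands above $0$. The first step starts at $0\le1$ and is therefore positive, consistent with the sign count above. Genericity rules out ties exactly at level $1$. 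Summing over the $s!(n-1-s)!$ orderings then gives $(-1)^{s-1}s!(n-1-s)!$.
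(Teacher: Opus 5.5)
Your proof is correct and follows essentially the same route as the paper: you factor the sum over $M$ into $\prod_{i=2}^{n-1}(a_i-b_i)$, read the coordinates as steps of a walk from $0$ to $1$, and show that only thermostat walks survive, each contributing $(-1)^{s-1}$; you then count them by greedily merging the ordered positive and negative coordinates, using that the remaining steps sum to $1-u_i$ so the merge never gets stuck. Your case analysis is, if anything, slightly more explicit than the paper's, for instance in noting that a step below $-1$ taken from a level in $(0,1)$ has factor $-1$ but forces the walk negative.
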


\begin{proof}
  We may assume that no coordinate is $0$ or $-1$, and no proper partial sum of the coordinates is $0$ or $1$, as this excludes only a finite number of lower dimensional hyperplanes.
  Fix an ordering $\pi$.
  For every $M$, membership in $B_{v,\pi,M}$ requires $0<u_i<1$ for $i\in M$ and $-1<p_{\pi_i}<0$ for $i\notin M$.
  Summing over all $M$,
  \[
    \sum_M(-1)^{\abs{M}} \ind{B_{v,\pi,M}}(p)
    =
    \prod_{i=2}^{n-1}
    \Big(
    \ind{\set{-1<p_{\pi_i}<0}}
    -
    \ind{\set{0<u_i<1}}
    \Big) .
  \]
  
  Interpret $p_{\pi_1},\ldots,p_{\pi_{n-1}}$
  as steps of a walk whose levels are
  $u_1=0,u_2,\ldots,u_{n-1},u_{n}=1$.
  As we describe next, for the product above to be nonzero, the walk must satisfy the {\em thermostat} properties: it is positive, rises below $1$ and drops above $1$.

  \begin{figure}[htbp]
    \centering
\definecolor{Blue-F}{rgb}{0,0.45,0.698}
\definecolor{Green-F}{rgb}{0,0.686,0.498}

\begin{tikzpicture}[x=1cm,y=1cm,line join=round,line cap=round,
    font={\fontsize{9pt}{11pt}\selectfont},
    every node/.style={inner sep=1pt,outer sep=0pt},
    level/.style={circle,fill=black,inner sep=1.4bp},
    rise/.style={-{Stealth[length=2mm]},draw=Green-F,text=Green-F,
      line width=1.2bp,shorten <=2bp,shorten >=2bp},
    drop/.style={-{Stealth[length=2mm]},draw=Blue-F,text=Blue-F,
      line width=1.2bp,shorten <=2bp,shorten >=2bp},
    entry/.style={inner sep=0pt,font={\fontsize{8pt}{9.5pt}\selectfont}}]
  \path[use as bounding box] (0,0) rectangle (12,4.65);
  \begin{scope}[shift={(1.7cm,.65cm)},x=1.35cm,y=2.1cm]
    \def\thSteps{1/0.3,2/0.9,3/-0.4,4/0.3,5/-0.7,6/0.6}
    \def\thStepSpacing{1}
    \def\thPlotMargin{0.4}
    \def\thListOffset{0.5}
    \def\thListGap{0.1}

    \coordinate (walk-origin) at (0,0);
    \coordinate (u1) at (walk-origin);
    \foreach \thIndex/\thValue [count=\thStep from 1,
        remember=\thTotal as \thPreviousTotal (initially 0),
        remember=\thMaximum as \thPreviousMaximum (initially 0)] in \thSteps {
      \pgfmathsetmacro{\thTotal}{round(1000*(\thPreviousTotal+\thValue))/1000}
      \pgfmathsetmacro{\thMaximum}{max(\thPreviousMaximum,\thTotal)}
      \pgfmathtruncatemacro{\thNext}{\thStep+1}
      \coordinate (u\thNext) at (\thStep*\thStepSpacing,\thTotal);
      \coordinate (walk-end) at (u\thNext);
      \coordinate (walk-top) at (0,\thMaximum);
    }
    \coordinate (band-top-left) at ($(walk-origin)+(0,1)$);
    \coordinate (walk-end-on-axis) at (walk-end |- walk-origin);
    \coordinate (plot-bottom-right) at
      ($(walk-end-on-axis)+(\thPlotMargin,0)$);
    \coordinate (plot-top-right) at (plot-bottom-right |- band-top-left);
    \coordinate (list-origin) at ($(walk-top)+(0,\thListOffset)$);

    \fill[Blue-F,fill opacity=.1] (walk-origin) rectangle (plot-top-right);
    \draw[black,line width=.8bp]
      (walk-origin) -- (plot-bottom-right) node[right,text=black] {$0$};
    \draw[black,line width=.4bp,dash pattern=on 4bp off 4bp]
      (band-top-left) -- (plot-top-right) node[right,text=black] {$1$};

    \newcommand{\thermostatList}[3]{%
      \node[entry,anchor=west,text=#3] (list-open) at (#2) {$($};
      \coordinate (list-cursor) at (list-open.east);
      \foreach \thIndex/\thValue [
          remember=\thCount as \thPreviousCount (initially 0)] in \thSteps {
        \pgfmathsetmacro{\thSignedValue}{#1*\thValue}
        \pgfmathtruncatemacro{\thCount}{\thPreviousCount+(\thSignedValue>0)}
        \ifdim\thSignedValue pt>0pt
          \node[entry,anchor=west,text=#3] (entry-\thIndex) at (list-cursor)
            {$\ifnum\thCount>1 ,\fi p_{\thIndex}=\thValue$};
          \coordinate (list-cursor) at (entry-\thIndex.east);
        \fi
      }
      \node[entry,anchor=west,text=#3] (list-close) at (list-cursor) {$)$};
      \coordinate (list-end) at (list-close.east);
    }
    \thermostatList{1}{list-origin}{Green-F}
    \coordinate (negative-list-origin) at ($(list-end)+(\thListGap,0)$);
    \thermostatList{-1}{negative-list-origin}{Blue-F}

    \node[level] at (u1) {};
    \node[below left] at (u1) {$u_1$};
    \foreach \thIndex/\thValue [count=\thStep from 1,
        remember=\thTotal as \thPreviousTotal (initially 0)] in \thSteps {
      \pgfmathsetmacro{\thTotal}{round(1000*(\thPreviousTotal+\thValue))/1000}
      \pgfmathtruncatemacro{\thNext}{\thStep+1}
      \ifdim\thValue pt>0pt
        \draw[rise] (u\thStep) --
          node[midway,above left=4bp] {$p_{\thIndex}$} (u\thNext);
      \else
        \draw[drop] (u\thStep) --
          node[midway,above right=4bp] {$p_{\thIndex}$} (u\thNext);
      \fi
      \node[level] at (u\thNext) {};
      \ifdim\thTotal pt<1pt
        \node[below=2bp] at (u\thNext) {$u_{\thNext}$};
      \else
        \node[above=2bp] at (u\thNext) {$u_{\thNext}$};
      \fi
    }
  \end{scope}
\end{tikzpicture}%
    \caption{A thermostat walk obtained by merging the ordered positive and negative coordinate lists shown above it. Take the next positive step below level $1$ and the next negative step above level $1$.}
    \label{fig:Therm}
  \end{figure}
  The walk cannot reach a negative level because once it
  does, all subsequent permitted steps are negative,
  so it cannot finish at level $1$.
  It follows that inside $(0,1)$ the walk must rise (otherwise,
  it becomes negative).
  Above $1$ it must fall by less than $1$.
  The first step must be positive as well.
  In particular, all coordinates exceed $-1$.

  Now, assume that every coordinate exceeds
  $-1$. Order the $s$ positive coordinates and the
  $n-1-s$ negative coordinates separately. There is
  a unique way to merge these two lists into a walk
  of the required type; see \Cref{fig:Therm}. Take a positive step first.
  Take the next positive step whenever the
  current level is below $1$, and the next negative step whenever it is above $1$.
  The required step is always available. Indeed, the
  remaining steps sum to $1-u_i$, which is positive
  below $1$ and negative above $1$. Negative steps
  have size less than $1$, so the walk stays positive.
  The genericity assumption prevents reaching level
  $1$ before the final step.

  We obtained a bijection between nonzero products and
  pairs of orderings of the positive and negative
  coordinates. There are $s!(n-1-s)!$ pairs of orderings. 
  Every nonzero product has sign $(-1)^{s-1}$,
  one $(-1)$ factor for each positive step after
  the first.
\end{proof}

\subsection{\texorpdfstring{Proof of \Cref{lem:replacement}}{Proof of the Replacement Lemma}}
\begin{proof}
  Fix a generic $p\in H_{-v}$.
  First we prove
  \begin{equation}\label{eq:indicator-replacement}
    \sum_{\pi}\sum_M
    (-1)^{\abs{M}}\ind{B_{v,\pi,M}}
    =
    (n-1)!\sum_{\emptyset\ne R\subseteq V_{-v}}
    \frac{(-1)^{\abs{R}-1}\abs{R}}{n-\abs{R}}
    \ind{Q_{v,R}}.
  \end{equation}
  If some coordinate
  is less than $-1$, both sides of
  \eqref{eq:indicator-replacement} vanish,
  by \Cref{lem:thermostatN} and the definition
  of $Q_{v,R}$.
  Otherwise, let $s$ be the number of positive
  coordinates. The point $p$ belongs to $Q_{v,R}$
  exactly when $R$ is contained in the set of its
  positive coordinates. Thus, for each $r$, exactly $\binom sr$ sets $R$ of size $r$ contribute.
  To evaluate the right-hand side, the hockey-stick identity gives, for $1\le r\le n-1$,
  \[
    \begin{aligned}
      \sum_{j=1}^r\binom rj j!(n-1-j)!
      &=r!(n-1-r)!\sum_{j=1}^r\binom{n-1-j}{r-j}\\
      &=r!(n-1-r)!\binom{n-1}{r-1}
       =\frac{(n-1)!\,r}{n-r}.
    \end{aligned}
  \]
  By binomial inversion, the right-hand side is therefore
  \[
    (n-1)!\sum_{r=1}^s
    (-1)^{r-1}\binom sr\frac{r}{n-r}
    =(-1)^{s-1}s!(n-1-s)!.
  \]
  \Cref{lem:thermostatN} says that the left-hand side is the same.
  To conclude the proof, apply the coning part of \Cref{lem:valuation} to \cref{eq:indicator-replacement} with $q=e_v$.
\end{proof}

\begin{remark}
  Before coning, the right-hand side in \Cref{lem:replacement} is not a single simplex but a signed
  combination of dilates of the facet $\Delta_{-v}$. Its simplex content cancels, as it must because a simplex is not a signed sum of parallelepipeds. Indeed, the total coefficient of $h_{\Delta_{-v}}$ before coning is $(n-1)!\sum_{r=1}^{n-1}(-1)^{r-1}r\binom{n-1}{r} = 0$ for $n\ge3$. The identity acquires content
  only after coning with the apex $e_v$. The pyramids $\pyr{e_v}{Q_{v,R}}$
  are no longer dilates of one another.
\end{remark}

\section{Subdividing the Simplex}

This section uses the valuation property to relate the cones of the simplices $Q_{v,R}$ to the full simplex $\Delta$. Recall \Cref{fig:simPar}.
\begin{lemma}
  \label{lem:subdivision}
  For every nonempty proper subset $\emptyset \ne R \subsetneq \upto{n}$,
  \begin{equation}\label{eq:subdivision}
    \sum_{v\notin R}
    h_{\pyr{e_v}{Q_{v,R}}}
    \equiv (n-\abs{R})h_\Delta.
  \end{equation}
\end{lemma}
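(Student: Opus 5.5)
The plan is an induction on $n$. At each stage I would compare facets using the boundary characterization \Cref{lem:lowerdimcharacterization} (equivalently, injectivity of the top differential, \Cref{lem:topinjectivity}) inside the $(n-1)$-dimensional space $H$. A pointwise tiling argument alone is not enough. Already for $n=3$, $R=\set{3}$, the two pyramids $\pyr{e_1}{Q_{1,R}}$ and $\pyr{e_2}{Q_{2,R}}$ overlap in a copy of $\Delta$, and their union is a square with a copy of $-\Delta$ removed. Their indicator sum is therefore $\ind{\text{square}}+\ind{\Delta}-\ind{-\Delta}$. The identity then holds only because the square is a zonotope, and $h_{-\Delta}\equiv-h_\Delta$ since $\Delta+(-\Delta)$ is a hexagonal zonotope. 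In higher dimensions this pattern gets complicated, so I would avoid computing the union and argue through boundaries instead. To make the induction close, I would also prove the statement for $R=\emptyset$, with $m\coloneqq n-\abs{R}=n$. For $n=2$ this case is a direct check: both pyramids are the segment $\Delta$, giving $2h_\Delta$.

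The first step is to write each pyramid explicitly. Since $Q_{v,R}$ has vertices $m e_k-\sum_{i\notin R\cup\set{v}}e_i$ for $k\ne v$, I expect
\[
\pyr{e_v}{Q_{v,R}}=\rset{p\in H}{p_v\ge 0,\ p_i\ge 0\ (i\in R),\ p_i\ge p_v-1\ (i\notin R\cup\set{v})}.
\]
Its facets then come in three types:
\begin{itemize}
\item the base $Q_{v,R}$, with outer normal $-e_v$;
\item for each $j\in R$, the facet on $\set{p_j=0}$, with normal $-e_j$;
\item for each $j\notin R\cup\set{v}$, a wall on $\set{p_j-p_v=-1}$, with normal $e_v-e_j$.
\end{itemize}
On the other side, $m\Delta$ has only the facets $m\Delta_{-v}$, with normals $-e_v$. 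Linear terms are support functions of points, so translations are harmless.

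Next I would check the condition of \Cref{lem:lowerdimcharacterization} direction by direction, working in $H$ with normals taken modulo $\mathbf 1$.
\begin{itemize}
\item \emph{Direction $-e_v$ with $v\notin R$.} The only contributions are the base $Q_{v,R}$ and the facet $m\Delta_{-v}$. These are translates of each other, so the condition holds.
\item \emph{Direction $-e_j$ with $j\in R$.} The contributing faces are $\pyr{e_v}{\cdot}$ of the facet of $Q_{v,R}$ in $\set{p_j=0}$, over all $v\notin R$. This facet is exactly $Q'_{v,R\setminus\set{j}}$ for the $(n-1)$-simplex $\Delta_{-j}$, and $(n-1)-\abs{R\setminus\set{j}}=m$. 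So the induction hypothesis, applied to $\Delta_{-j}$ (this is where the case $R\setminus\set j=\emptyset$ is needed), gives $m\,h_{\Delta_{-j}}$. This matches the facet $m\Delta_{-j}$ of $m\Delta$.
\item \emph{Direction $\pm(e_v-e_j)$ with $v,j\notin R$.} The wall $W_{v,j}=\conv(\set{e_v}\cup\set{\text{vertices of }Q_{v,R}\text{ other than the $j$-th}})$ of pyramid $v$ is paired with the wall $W_{j,v}$ of pyramid $j$, which has the opposite normal. The condition requires $W_{v,j}$ and $W_{j,v}$ to be equal in $\widetilde V$ after translation. If they are genuinely translates, we are done. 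Otherwise I would show they are related by the swap $v\leftrightarrow j$ composed with a point reflection, and verify their equality via another application of the inductive hypothesis.
\item \emph{All other directions.} No $(n-2)$-dimensional faces arise on either side, so the condition is vacuous.
\end{itemize}
Once every direction is checked, \Cref{lem:lowerdimcharacterization} gives $\sum_{v\notin R}h_{\pyr{e_v}{Q_{v,R}}}-m h_\Delta\in\cL_{n-2}$.

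I expect the main obstacle to be the walls in directions $\pm(e_v-e_j)$. Determining exactly how $W_{v,j}$ and $W_{j,v}$ are related, and confirming that they cancel (rather than, say, doubling) in $\widetilde V$ of the $(n-2)$-dimensional wall space, is the delicate point. As a fallback for this step, I would instead prove the indicator identity almost everywhere on $H$. This means writing $\sum_{v\notin R}\ind{\pyr{e_v}{Q_{v,R}}}$ as $\ind{T}$ for a translate $T$ of $m\Delta$, plus a signed combination of indicators of parallelepipeds and of pairs $\pm$(simplices), by partitioning according to which coordinate $p_v$, $v\notin R$, is maximal. I would then conclude with the first assertion of \Cref{lem:valuation}, together with the facts that zonotopes are $\equiv0$ and $h_{-X}\equiv-h_X$ for simplices in the relevant degree.
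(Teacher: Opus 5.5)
Your plan works, but it is a much heavier route than the paper's. You run an induction on $n$, strengthened to include $R=\emptyset$, and check the facet criterion of \Cref{lem:lowerdimcharacterization} direction by direction. The paper needs no induction, only the valuation property. Write $C=\upto{n}\setminus R$, let $q_C$ be the barycenter of $\set{e_i}_{i\in C}$, and set $T_v=\pyr{q_C}{\Delta_{-v}}$. The paper's key observation is
\[
\pyr{e_v}{Q_{v,R}}=\abs{C}\,T_v-\sum_{i\in C\setminus\set{v}}e_i .
\]
The $T_v$, $v\in C$, tile $\Delta$ according to which barycentric coordinate $\lambda_v$, $v\in C$, is minimal. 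Translations only add linear functions, so $\sum_{v\in C}h_{\pyr{e_v}{Q_{v,R}}}\equiv\abs{C}\sum_{v\in C}h_{T_v}\equiv\abs{C}\,h_\Delta$.

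So your claim that a pointwise tiling argument is not enough is mistaken. Your $n=3$ example only shows that the pyramids do not tile $\abs{C}\Delta$ in place. Translate each one separately, by $e_2$ and $e_1$ respectively, and you get $2T_1$ and $2T_2$, which tile $2\Delta$.

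The same observation removes what you call the main obstacle. $T_v$ and $T_j$ share the facet $\conv(\set{q_C}\cup\set{e_i : i\ne v,j})$, so the walls are genuine translates. Directly, with $a^v_k=\abs{C}e_k-\sum_{i\in C\setminus\set{v}}e_i$, the wall $W_{v,j}$ is the hull of $e_v$ and the $a^v_k$ with $k\notin\set{v,j}$. Since $a^v_k-a^j_k=e_v-e_j$, you get $W_{v,j}=W_{j,v}+(e_v-e_j)$. The ``otherwise'' branch (swap plus point reflection) and the fallback are never needed.

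Your remaining checks are right:
\begin{itemize}
\item the base $Q_{v,R}$ is a translate of $m\Delta_{-v}$;
\item the $-e_j$ facets are exactly the pyramids for $\Delta_{-j}$ with $R\setminus\set{j}$ and the same $m$;
\item there are no facets in any other direction.
\end{itemize}
With these, the induction closes.

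Your route gives a mechanical verification that does not require guessing the tiling. The cost is dependence on the ``if'' direction of \Cref{lem:lowerdimcharacterization}, which in the paper rests on Jessen--Thorup. You also need the extra $R=\emptyset$ case and an $n=2$ base case. The paper's proof is elementary and self-contained within Part II.
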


\begin{proof}
  Let $C \coloneqq \upto{n}\setminus R$ and
  consider the barycenter
  \[
    q_C \coloneqq \frac1{\abs{C}}\sum_{i\in C} e_i.
  \]
  For every $v\in C$, let $T_v = \pyr{q_C}{\Delta_{-v}}$.
  We claim that the simplices $T_v$ for $v \in C$ form a
  subdivision of $\Delta$ up to lower dimensional pieces; see
  \Cref{fig:direct-subdivision}.
  \begin{figure}[htbp]
    \centering
    \begingroup%
\definecolor{Blue-F}{rgb}{0,0.45,0.698}%
\definecolor{Green-F}{rgb}{0,0.686,0.498}%
%
\def\sdAzimuth{20}%
\def\sdElevation{24}%
\def\sdScale{1.65}%
\pgfmathsetmacro{\sdXX}{\sdScale*cos(\sdAzimuth)}%
\pgfmathsetmacro{\sdXY}{\sdScale*sin(\sdElevation)*sin(\sdAzimuth)}%
\pgfmathsetmacro{\sdYX}{-\sdScale*sin(\sdAzimuth)}%
\pgfmathsetmacro{\sdYY}{\sdScale*sin(\sdElevation)*cos(\sdAzimuth)}%
\pgfmathsetmacro{\sdZY}{\sdScale*cos(\sdElevation)}%
\tikzset{%
  sd view/.style={%
    x={(\sdXX cm,\sdXY cm)},%
    y={(\sdYX cm,\sdYY cm)},%
    z={(0cm,\sdZY cm)},%
    line join=round,line cap=round,%
    font={\fontsize{9pt}{11pt}\selectfont}},%
  sd edge/.style={draw=black,line width=.8bp},%
  sd hidden/.style={draw=black,densely dashed,line width=.4bp},%
  sd selected/.style={draw=Green-F,line width=2bp},%
  sd cut/.style={draw=Blue-F,line width=.8bp,densely dashed},%
  sd vertex/.style={circle,fill=black,inner sep=1.4bp},%
  sd barycenter/.style={circle,fill=Green-F,draw=white,
                       line width=.7bp,inner sep=1.8bp}%
}%
\newcommand{\subdivisionPoints}{%
  \coordinate (e1) at (-1,{1/sqrt(3)},0);%
  \coordinate (e2) at (1,{1/sqrt(3)},0);%
  \coordinate (e3) at (0,{-2/sqrt(3)},0);%
  \coordinate (e4) at (0,0,{sqrt(8/3)});%
  \coordinate (base-center) at (0,0,0);%
  \coordinate (edge-center) at ($(e1)!.5!(e2)$);%
  \coordinate (panel-title) at (0,0,-.9);%
}%
\newcommand{\subdivisionEdges}{%
  \draw[sd hidden] (e1) -- (e2);%
  \draw[sd edge] (e1) -- (e4) -- (e2) -- (e3) -- cycle%
                  (e3) -- (e4);%
}%
\newcommand{\subdivisionVertexLabels}{%
  \foreach \point in {e1,e2,e3,e4}%
    \node[sd vertex] at (\point) {};%
  \node[left=4bp] at (e1) {$e_1$};%
  \node[right=4bp] at (e2) {$e_2$};%
  \node[below=4bp] at (e3) {$e_3$};%
  \node[above=4bp] at (e4) {$e_4$};%
}%
%
\begin{tikzpicture}[x=1cm,y=1cm,opacity=1,
  every node/.style={font={\fontsize{9pt}{11pt}\selectfont},
                    inner sep=1pt,outer sep=0pt}]
  \path[use as bounding box] (0,0) rectangle (12,5.3);

  \begin{scope}[shift={(3.325cm,2cm)},sd view]
    \subdivisionPoints
    \coordinate (q) at (edge-center);
    \fill[Blue-F,fill opacity=.3] (q) -- (e3) -- (e4) -- cycle;
    \draw[sd cut] (e3) -- (q) -- (e4);
    \subdivisionEdges
    \draw[sd selected,densely dashed] (e1) -- (e2);
    \subdivisionVertexLabels
    \node[sd barycenter] at (q) {};
    \node[above left=4bp,text=Green-F] at (q) {$q_C$};
    \node[anchor=north,align=center] at (panel-title) {$C=\{1,2\}$};
  \end{scope}

  \begin{scope}[shift={(9.325cm,2cm)},sd view]
    \subdivisionPoints
    \coordinate (q) at (base-center);
    \fill[Green-F,fill opacity=.1] (e1) -- (e2) -- (e3) -- cycle;
    \fill[Green-F,fill opacity=.3] (q) -- (e2) -- (e4) -- cycle;
    \fill[Blue-F,fill opacity=.3] (q) -- (e3) -- (e4) -- cycle;
    \fill[Blue-F,fill opacity=.1] (q) -- (e1) -- (e4) -- cycle;
    \draw[sd hidden] (q) -- (e1) (q) -- (e2) (q) -- (e3);
    \subdivisionEdges
    \draw[sd selected,densely dashed] (e1) -- (e2);
    \draw[sd selected] (e1) -- (e3) -- (e2);
    \draw[sd selected] (q) -- (e4);
    \subdivisionVertexLabels
    \node[sd barycenter] at (q) {};
    \node[anchor=north east,text=Green-F,inner sep=0pt]
      at ([xshift=-4bp,yshift=-1bp]q) {$q_C$};
    \node[anchor=north,align=center] at (panel-title) {$C=\{1,2,3\}$};
  \end{scope}
\end{tikzpicture}%
\endgroup%
    \caption{Two subdivisions of the tetrahedron $\Delta$ into the simplices $T_v = \pyr{q_C}{\Delta_{-v}}$ for $v\in C$.}
    \label{fig:direct-subdivision}
  \end{figure}
  Indeed, consider a generic $x=\sum_{i=1}^n\lambda_i e_i\in\Delta$, where
  $\lambda_i\ge0$ and $\sum_i\lambda_i=1$. The key observation is
  \[
    x\in T_v
    \quad\iff\quad
    \lambda_v=\min_{i\in C}\lambda_i.
  \]
  Indeed, if $x=tq_C+(1-t)y$ with $y\in\Delta_{-v}$,
  then $\lambda_v=t/\abs{C}$ and $\lambda_i\ge t/\abs{C}$
  for every $i\in C$. Conversely, if $\lambda_v$ is
  minimal, take $t=\abs{C}\lambda_v$ and
  $y=(x-tq_C)/(1-t)\in\Delta_{-v}$ when $t<1$
  and if $t=1$, then $x=q_C \in T_v$.
  Thus, by \Cref{lem:valuation},
  \[
    \sum_{v\in C}h_{T_v}\equiv h_\Delta.
  \]
  By the definitions,
  \[
    \begin{aligned}
      \pyr{e_v}{Q_{v,R}}
       & =\conv\Big(
      e_v,\,
      \abs{C}\Delta_{-v}
      -\sum_{i\in C\setminus\set{v}}e_i
      \Big)                                            \\
       & =\abs{C}T_v-\sum_{i\in C\setminus\set{v}}e_i ,
    \end{aligned}
  \]
  because $e_v = \abs{C} q_C - \sum_{i\in C\setminus\set{v}} e_i$.
  Translation changes a support function by a
  linear function, which is $\equiv 0$.
  Consequently,
  \[
    h_{\pyr{e_v}{Q_{v,R}}}
    \equiv \abs{C}h_{T_v}.
  \]
  Summing over $v\in C$ and applying the above gives
  \[
    \sum_{v\notin R} h_{\pyr{e_v}{Q_{v,R}}}
    \equiv
    \abs{C}h_\Delta
    =
    (n-\abs{R}) h_\Delta.
  \]
\end{proof}

\section{Putting it Together}
The following theorem expresses the simplex as a signed sum of paramids modulo lower dimensional pieces.

\begin{theorem}
  \[h_\Delta \equiv \frac{(-1)^n}{ n!}
    \sum_{v=1}^n
    \sum_{\pi}
    \sum_{M\subseteq\set{2,\ldots,n-1}}
    (-1)^{\abs{M}}
    h_{\pyr{e_v}{B_{v,\pi,M}}}.\]
\end{theorem}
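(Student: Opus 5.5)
The plan is to sum the Replacement Lemma (\Cref{lem:replacement}) over all $v\in\upto{n}$ and then regroup the right-hand side by the set $R$ rather than by $v$, so that the Subdivision Lemma (\Cref{lem:subdivision}) can be applied to each group. Summing \Cref{lem:replacement} over $v$ gives
\[
\sum_{v}\sum_{\pi}\sum_{M}(-1)^{\abs{M}}h_{\pyr{e_v}{B_{v,\pi,M}}}
\equiv (n-1)!\sum_{v}\sum_{\emptyset\ne R\subseteq V_{-v}}\frac{(-1)^{\abs{R}-1}\abs{R}}{n-\abs{R}}h_{\pyr{e_v}{Q_{v,R}}}.
\]
The pairs $(v,R)$ with $R\subseteq V_{-v}$ and $R$ nonempty are exactly the pairs with $\emptyset\ne R\subsetneq\upto{n}$ and $v\notin R$. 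Exchanging the order of summation, the right-hand side becomes
\[
(n-1)!\sum_{\emptyset\ne R\subsetneq\upto{n}}\frac{(-1)^{\abs{R}-1}\abs{R}}{n-\abs{R}}\sum_{v\notin R}h_{\pyr{e_v}{Q_{v,R}}}.
\]

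Next I would apply \Cref{lem:subdivision} to the inner sum, which replaces it by $(n-\abs{R})h_\Delta$ modulo $\cL_{n-2}$. The factor $n-\abs{R}$ cancels the denominator, so the total is
\[
(n-1)!\,h_\Delta\sum_{r=1}^{n-1}(-1)^{r-1}r\binom{n}{r}.
\]
To evaluate this sum, use $\sum_{r=0}^{n}(-1)^{r-1}r\binom nr=0$ for $n\ge2$, which is the derivative of $(1-x)^n$ at $x=1$. Dropping the $r=n$ term then gives $\sum_{r=1}^{n-1}(-1)^{r-1}r\binom nr=-(-1)^{n-1}n=(-1)^n n$. Hence the signed paramid sum is $\equiv (-1)^n n!\,h_\Delta$. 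Dividing by $(-1)^n n!$, which is legitimate since $\equiv$ is a congruence modulo a linear subspace, yields the theorem.

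The one point needing care is that the two lemmas give congruences in the same ambient sense. \Cref{lem:replacement} is an $\equiv$ modulo support functions of polytopes of dimension at most $n-2$ inside $H$: coning a $(n-2)$-dimensional identity in $H_{-v}$ with $e_v$ produces remainders of dimension at most $n-2$ in $\aff(H_{-v}\cup\{e_v\})=H$. \Cref{lem:subdivision} is stated with the same $\equiv$. So both remainders lie in $\cL_{n-2}$, and adding the congruences is valid.

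I do not expect a real obstacle here. The main risk is bookkeeping, namely the reindexing of $(v,R)$ and getting the sign of the alternating binomial sum right, including the edge case $n\ge 2$ needed for the derivative identity.
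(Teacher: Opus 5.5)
Your proposal is correct and is essentially the paper's proof: sum \Cref{lem:replacement} over $v$, regroup the pairs $(v,R)$ by $R$, and apply \Cref{lem:subdivision} to each group. Then evaluate $(n-1)!\sum_{r=1}^{n-1}(-1)^{r-1}r\binom nr=(-1)^n n!$, and your check that both congruences are modulo the same space $\cL_{n-2}$ is the right bookkeeping.
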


\begin{proof}
  Summing the identity from \Cref{lem:replacement} over all
  $v\in\upto{n}$ gives
  \[
    \begin{aligned}
      \sum_v\sum_{\pi}\sum_M
      (-1)^{\abs{M}}
      h_{\pyr{e_v}{B_{v,\pi,M}}}
      \equiv
      (n-1)!\sum_{\emptyset\ne R\subsetneq \upto{n}}
      \frac{(-1)^{\abs{R}-1}\abs{R}}{n-\abs{R}}
      \sum_{v\notin R}
      h_{\pyr{e_v}{Q_{v,R}}}.
    \end{aligned}
  \]
  By \Cref{lem:subdivision}, this equals (with $\equiv$)
  \[
    (n-1)!\sum_{\emptyset\ne R \subsetneq \upto{n}}
    (-1)^{\abs{R}-1}\abs{R}\,h_\Delta=(n-1)!\left(
    \sum_{r=1}^{n-1}
      (-1)^{r-1}r\binom nr
    \right)h_\Delta =
    (-1)^n n!\,h_\Delta.
  \]
\end{proof}

\printbibliography

@article{shephard1974combinatorial,
  author  = {Shephard, G. C.},
  title   = {Combinatorial Properties of Associated Zonotopes},
  journal = {Canadian Journal of Mathematics},
  volume  = {26},
  number  = {2},
  pages   = {302--321},
  year    = {1974},
  doi     = {10.4153/CJM-1974-032-5}
}

@article{KPS2026,
  author = {Koutschan, Christoph and Ponomarchuk, Anton and Schicho, Josef},
  journal = {IEEE Transactions on Information Theory},
  title = {Representing Piecewise-Linear Functions by Functions With Minimal Arity},
  year = {2026},
  volume = {72},
  number = {9},
  pages = {6825-6831},
  doi = {10.1109/TIT.2026.3705562}
}

@article{ruess2026shallower,
  title = {Shallower {ReLU} network representations via exact linear algebra},
  author = {Rue{\ss}, Kilian and Averkov, Gennadiy and Brunck, Florestan and Grillo, Moritz and Hertrich, Christoph and Loho, Georg and Stade, Jack and Stargalla, Moritz and Sun, Matthew and Winter, Martin},
  journal = {arXiv preprint arXiv:2607.21651},
  year = {2026},
  eprint = {2607.21651},
  archiveprefix = {arXiv},
  doi = {10.48550/arXiv.2607.21651},
  url = {https://arxiv.org/abs/2607.21651v2},
  note = {Version 2, 1 September 2026}
}

@article{KMPS2025,
  author = {Koutschan, Christoph and Moser, Bernhard and Ponomarchuk, Anton and Schicho, Josef},
  date = {2025-07-01},
  doi = {10.1007/s00200-023-00627-1},
  id = {Koutschan2025},
  journal = {Applicable Algebra in Engineering, Communication and Computing},
  number = {4},
  pages = {595--610},
  title = {Representing piecewise linear functions by functions with small arity},
  url = {https://doi.org/10.1007/s00200-023-00627-1},
  volume = {36},
  year = {2025},
  issn = {1432-0622},
  ids = {koutschanRepresentingPiecewiseLinear2025}
}

@article{JT78,
  issn = {00255521, 19031807},
  url = {http://www.jstor.org/stable/24491087},
  author = {B{\o}rge Jessen and Anders Thorup},
  journal = {Mathematica Scandinavica},
  number = {2},
  pages = {211--240},
  publisher = {Mathematica Scandinavica},
  title = {THE ALGEBRA OF POLYTOPES IN AFFINE SPACES},
  urldate = {2026-10-11},
  volume = {43},
  year = {1978}
}

@book{Weibel1994,
  place = {Cambridge},
  series = {Cambridge Studies in Advanced Mathematics},
  title = {An Introduction to Homological Algebra},
  publisher = {Cambridge University Press},
  author = {Weibel, Charles A.},
  year = {1994},
  collection = {Cambridge Studies in Advanced Mathematics}
}

@incollection{MS1983,
  title = {Valuations on convex bodies},
  author = {McMullen, Peter and Schneider, Rolf},
  booktitle = {Convexity and its applications},
  pages = {170--247},
  year = {1983},
  publisher = {Springer}
}

@inbook{Schneider2013,
  place = {Cambridge},
  series = {Encyclopedia of Mathematics and its Applications},
  title = {Valuations on convex bodies},
  booktitle = {Convex Bodies: The Brunn--Minkowski Theory},
  publisher = {Cambridge University Press},
  author = {Schneider, Rolf},
  year = {2013},
  pages = {329--368},
  collection = {Encyclopedia of Mathematics and its Applications}
}

@inproceedings{aroraUnderstandingDeepNeural2018,
  title = {Understanding Deep Neural Networks with Rectified Linear Units},
  booktitle = {International {{Conference}} on {{Learning Representations}}},
  author = {Arora, Raman and Basu, Amitabh and Mianjy, Poorya and Mukherjee, Anirbit},
  year = {2018},
  url = {https://openreview.net/forum?id=B1J_rgWRW},
  eprint = {1611.01491},
  archiveprefix = {arXiv},
  ids = {arora2018understanding}
}

@inproceedings{bakaevBetterNeuralNetwork2026,
  title = {Better Neural Network Expressivity: Subdividing the Simplex},
  booktitle = {Proceedings of the 58th {{Annual ACM Symposium}} on {{Theory}} of {{Computing}}},
  author = {Bakaev, Egor and Brunck, Florestan and Hertrich, Christoph and Stade, Jack and Yehudayoff, Amir},
  year = {2026},
  publisher = {ACM},
  address = {Salt Lake City, UT, USA},
  doi = {10.1145/3798129.3800768},
  pages = {500--507},
  url = {https://arxiv.org/abs/2505.14338},
  eprint = {2505.14338},
  archiveprefix = {arXiv},
  note = {Preprint first posted in 2025},
  ids = {bakaev2026better}
}

@inproceedings{haaseLowerBoundsDepth2023,
  title = {Lower Bounds on the Depth of Integral {{ReLU}} Neural Networks via Lattice Polytopes},
  booktitle = {International {{Conference}} on {{Learning Representations}}},
  author = {Haase, Christian Alexander and Hertrich, Christoph and Loho, Georg},
  year = {2023},
  url = {https://openreview.net/forum?id=2mvALOAWaxY},
  eprint = {2302.12553},
  archiveprefix = {arXiv},
  ids = {haase2023lower}
}

@article{hertrichLowerBoundsDepth2023,
  title = {Towards Lower Bounds on the Depth of {{ReLU}} Neural Networks},
  author = {Hertrich, Christoph and Basu, Amitabh and Di Summa, Marco and Skutella, Martin},
  year = {2023},
  journal = {SIAM Journal on Discrete Mathematics},
  volume = {37},
  number = {2},
  doi = {10.1137/22M1489332}
}

@article{leshnoMultilayerFeedforwardNetworks1993,
  title = {Multilayer Feedforward Networks with a Nonpolynomial Activation Function Can Approximate Any Function},
  author = {Leshno, Moshe and Lin, Vladimir Ya. and Pinkus, Allan and Schocken, Shimon},
  year = {1993},
  journal = {Neural Networks},
  volume = {6},
  number = {6},
  doi = {10.1016/S0893-6080(05)80131-5},
  pages = {861--867},
  ids = {leshno1993multilayer}
}

@article{he2020relu,
  title={ReLU deep neural networks and linear finite elements},
  author={He, Juncai and Li, Lin and Xu, Jinchao and Zheng, Chunyue},
  journal={Journal of Computational Mathematics},
  volume={38},
  number={3},
  pages={502--527},
  year={2020},
  publisher={JSTOR}
}

@inproceedings{amos2017input,
  author = {Amos, Brandon and Xu, Lei and Kolter, J. Zico},
  title = {Input Convex Neural Networks},
  booktitle = {Proceedings of the 34th International Conference on Machine Learning},
  series = {Proceedings of Machine Learning Research},
  volume = {70},
  pages = {146--155},
  year = {2017},
  publisher = {PMLR},
  url = {https://proceedings.mlr.press/v70/amos17b.html}
}

@inproceedings{bunning2021input,
  author = {B{\"u}nning, Felix and Schalbetter, Adrian and Aboudonia, Ahmed and {Hudoba de Badyn}, Mathias and Heer, Philipp and Lygeros, John},
  title = {Input Convex Neural Networks for Building {MPC}},
  booktitle = {Proceedings of the 3rd Conference on Learning for Dynamics and Control},
  series = {Proceedings of Machine Learning Research},
  volume = {144},
  pages = {251--262},
  year = {2021},
  publisher = {PMLR},
  url = {https://proceedings.mlr.press/v144/bunning21a.html}
}

@inproceedings{chen2018optimal,
  author = {Chen, Yize and Shi, Yuanyuan and Zhang, Baosen},
  title = {Optimal Control Via Neural Networks: A Convex Approach},
  booktitle = {International Conference on Learning Representations},
  year = {2019},
  url = {https://openreview.net/forum?id=H1MW72AcK7},
  eprint = {1805.11835},
  archiveprefix = {arXiv},
  note = {Preprint first posted in 2018; conference publication in 2019}
}

@article{chen2020data,
  author = {Chen, Yize and Shi, Yuanyuan and Zhang, Baosen},
  title = {Data-Driven Optimal Voltage Regulation Using Input Convex Neural Networks},
  journal = {Electric Power Systems Research},
  volume = {189},
  pages = {106741},
  year = {2020},
  doi = {10.1016/j.epsr.2020.106741},
  url = {https://www.sciencedirect.com/science/article/pii/S0378779620305447}
}

@article{bunne2023learning,
  author = {Bunne, Charlotte and Stark, Stefan G. and Gut, Gabriele and {Sarabia del Castillo}, Jacobo and Levesque, Mitch and Lehmann, Kjong-Van and Pelkmans, Lucas and Krause, Andreas and R{\"a}tsch, Gunnar},
  title = {Learning Single-Cell Perturbation Responses Using Neural Optimal Transport},
  journal = {Nature Methods},
  volume = {20},
  number = {11},
  pages = {1759--1768},
  year = {2023},
  doi = {10.1038/s41592-023-01969-x},
  url = {https://www.nature.com/articles/s41592-023-01969-x}
}

@misc{hertrich2024neural,
  author = {Hertrich, Christoph and Loho, Georg},
  title = {Neural Networks and (Virtual) Extended Formulations},
  year = {2024},
  eprint = {2411.03006},
  archiveprefix = {arXiv},
  doi = {10.48550/arXiv.2411.03006},
  url = {https://arxiv.org/abs/2411.03006},
  note = {First posted in 2024; revised in 2025}
}

@article{bengio2009learning,
  author = {Bengio, Yoshua},
  title = {Learning Deep Architectures for {AI}},
  journal = {Foundations and Trends in Machine Learning},
  volume = {2},
  number = {1},
  pages = {1--127},
  year = {2009},
  doi = {10.1561/2200000006}
}

@article{mhaskar2017when,
  author = {Mhaskar, Hrushikesh and Liao, Qianli and Poggio, Tomaso},
  title = {When and Why Are Deep Networks Better Than Shallow Ones?},
  volume = {31},
  number = {1},
  year = {2017},
  doi = {10.1609/aaai.v31i1.10913},
  url = {https://ojs.aaai.org/index.php/AAAI/article/view/10913},
  journal = {Proceedings of the AAAI Conference on Artificial Intelligence}
}

@article{huchette2026when,
  author = {Huchette, Joey and Mu{\~n}oz, Gonzalo and Serra, Thiago and Tsay, Calvin},
  title = {When Deep Learning Meets Polyhedral Theory: A Survey},
  journal = {INFORMS Journal on Computing},
  year = {2026},
  doi = {10.1287/ijoc.2024.0902},
  url = {https://pubsonline.informs.org/doi/10.1287/ijoc.2024.0902},
  eprint = {2305.00241},
  archiveprefix = {arXiv},
  note = {Published online 12 March 2026; preprint first posted in 2023}
}

@inproceedings{eldan2016power,
  author = {Eldan, Ronen and Shamir, Ohad},
  title = {The Power of Depth for Feedforward Neural Networks},
  booktitle = {29th Annual Conference on Learning Theory},
  series = {Proceedings of Machine Learning Research},
  volume = {49},
  pages = {907--940},
  year = {2016},
  publisher = {PMLR},
  url = {https://proceedings.mlr.press/v49/eldan16.html}
}

@inproceedings{telgarsky2016benefits,
  author = {Telgarsky, Matus},
  title = {Benefits of Depth in Neural Networks},
  booktitle = {29th Annual Conference on Learning Theory},
  series = {Proceedings of Machine Learning Research},
  volume = {49},
  pages = {1517--1539},
  year = {2016},
  publisher = {PMLR},
  url = {https://proceedings.mlr.press/v49/telgarsky16.html}
}

@inproceedings{safran2022optimization,
  author = {Safran, Itay and Lee, Jason D.},
  title = {Optimization-Based Separations for Neural Networks},
  booktitle = {Proceedings of Thirty Fifth Conference on Learning Theory},
  series = {Proceedings of Machine Learning Research},
  volume = {178},
  pages = {3--64},
  year = {2022},
  publisher = {PMLR},
  url = {https://proceedings.mlr.press/v178/safran22a.html}
}

@inproceedings{dai2021lyapunov,
  author = {Dai, Hongkai and Landry, Benoit and Yang, Lujie and Pavone, Marco and Tedrake, Russ},
  title = {Lyapunov-Stable Neural-Network Control},
  booktitle = {Proceedings of Robotics: Science and Systems},
  year = {2021},
  doi = {10.15607/RSS.2021.XVII.063},
  url = {https://roboticsproceedings.org/rss17/p063.html}
}

@article{cybenko1989approximation,
  author = {Cybenko, George},
  title = {Approximation by Superpositions of a Sigmoidal Function},
  journal = {Mathematics of Control, Signals and Systems},
  volume = {2},
  number = {4},
  pages = {303--314},
  year = {1989},
  doi = {10.1007/BF02551274},
  url = {https://link.springer.com/article/10.1007/BF02551274}
}

@article{hornik1989multilayer,
  author = {Hornik, Kurt and Stinchcombe, Maxwell and White, Halbert},
  title = {Multilayer Feedforward Networks Are Universal Approximators},
  journal = {Neural Networks},
  volume = {2},
  number = {5},
  pages = {359--366},
  year = {1989},
  doi = {10.1016/0893-6080(89)90020-8},
  url = {https://www.sciencedirect.com/science/article/pii/0893608089900208}
}

@inproceedings{glorot2011deep,
  author = {Glorot, Xavier and Bordes, Antoine and Bengio, Yoshua},
  title = {Deep Sparse Rectifier Neural Networks},
  booktitle = {Proceedings of the Fourteenth International Conference on Artificial Intelligence and Statistics},
  series = {Proceedings of Machine Learning Research},
  volume = {15},
  pages = {315--323},
  year = {2011},
  publisher = {PMLR},
  url = {https://proceedings.mlr.press/v15/glorot11a.html}
}
\end{document}